\renewcommand{\vec}[1]{\bm #1}
\newcommand{\rvec}[1]{\mathbf{#1}}
\newcommand{\mat}[1]{\bm #1}
\newcommand{\rmat}[1]{\mathbf{#1}}
\newcommand{\rvech}[1]{\hat{\mathbf{#1}}}
\renewcommand{\math}[1]{\hat{\bm #1}}
\newcommand{\vecul}[1]{\underline{\vec{#1}}}
\newcommand{\rvecul}[1]{\underline{\rvec{#1}}}
\newcommand{\matul}[1]{\underline{\mat{#1}}}
\newcommand{\matuul}[1]{\underline{\underline{\mat{#1}}}}
\renewcommand{\cal}[1]{\mathcal{#1}}
\newcommand{\bfit}[1]{\textbf{\textit{#1}}}
\newcommand{\tr}[1]{\operatorname{Tr}\left[{#1}\right]}
\newcommand{\Tr}{\operatorname{Tr}}
\renewcommand{\Re}{\operatorname{Re}}
\renewcommand{\Im}{\operatorname{Im}}
\newcommand{\R}{\mathbb{R}}
\newcommand{\C}{\mathbb{C}}
\newcommand{\E}{\mathbb{E}}
\renewcommand{\P}{\mathbb{P}}
\newcommand{\Ph}{\hat{\mathbb{P}}}
\newcommand{\w}{\omega}
\renewcommand{\th}{\text{th}}
\newcommand{\T}{\mathsf{T}}
\renewcommand{\H}{\mathsf{H}}
\renewcommand{\st}{\text{s.t.}}
\renewcommand{\ul}{\underline}
\newcommand{\defeq}{\coloneqq}
\newcommand{\stp}{\hfill $\square$}
\newcommand{\tabincell}[2]{\begin{tabular}{@{}#1@{}}#2\end{tabular}}
\definecolor{hl-bg-color}{RGB}{255,255,215}
\definecolor{new-magenta}{RGB}{255,0,255}
\newcommand*{\HIGHLIGHT}{}
\newcommand{\red}[1]{{\color{red}{#1}}}
\newcommand{\blue}[1]{{\color{blue}{#1}}}
\newcommand{\red}[1]{{#1}}
\newcommand{\blue}[1]{{#1}}
\newtheorem{theorem}{{Theorem}}
\newtheorem{proposition}{{Proposition}}
\newtheorem{corollary}{{Corollary}}
\newtheorem{lemma}{{Lemma}}
\newtheorem{claim}{{Claim}}
\newtheorem{fact}{{Fact}}
\newtheorem{definition}{{Definition}}
\newtheorem{example}{{Example}}
\begin{document}
\newpage
\title{Distributionally Robust Receive Combining}

\author{Shixiong Wang,~
        Wei Dai,
        and Geoffrey Ye Li,~\IEEEmembership{Fellow,~IEEE}
\thanks{S. Wang, W. Dai, and G. Li are with the Department of Electrical and Electronic Engineering, Imperial College London, London SW7 2AZ, United Kingdom (E-mail: s.wang@u.nus.edu; wei.dai1@imperial.ac.uk; geoffrey.li@imperial.ac.uk).
}
\thanks{This work is supported by the UK Department for Science, Innovation
and Technology under the Future Open Networks Research Challenge project
TUDOR (Towards Ubiquitous 3D Open Resilient Network). 
}
}

\maketitle

\begin{abstract}
This article investigates signal estimation in wireless transmission (i.e., receive combining) from the perspective of statistical machine learning, where the transmit signals may be from an integrated sensing and communication system; that is, 1) signals may be not only discrete constellation points but also arbitrary complex values; 2) signals may be spatially correlated. Particular attention is paid to handling various uncertainties such as the uncertainty of the transmit signal covariance, the uncertainty of the channel matrix, the uncertainty of the channel noise covariance, the existence of channel impulse noises, the non-ideality of the power amplifiers, and the limited sample size of pilots. To proceed, a distributionally robust receive combining framework that is insensitive to the above uncertainties is proposed, which reveals that channel estimation is not a necessary operation. For optimal linear estimation, the proposed framework includes several existing combiners as special cases such as diagonal loading and eigenvalue thresholding. For optimal nonlinear estimation, estimators are limited in reproducing kernel Hilbert spaces and neural network function spaces, and corresponding uncertainty-aware solutions (e.g., kernelized diagonal loading) are derived. In addition, we prove that the ridge and kernel ridge regression methods in machine learning are distributionally robust against diagonal perturbation in feature covariance.
\end{abstract}

\begin{keywords}
Wireless Transmission, Smart Antenna, Machine Learning, Robust Estimation, Robust Combining, Distributional Uncertainty, Channel Uncertainty, Limited Pilot.
\end{keywords}

\section{Introduction}\label{sec:intro}

\IEEEPARstart{I}{n} wireless transmission, detection and estimation of transmitted signals is of high importance, and combining at array receivers serves as a key signal-processing technique to suppress interference and environmental noises. The earliest beamforming solutions rely on the use of phase shifters (e.g., phased arrays) to steer and shape wave lobes, while advanced combining methods allow the employment of digital signal processing units, which introduce additional structural freedom (e.g., fully digital, hybrid, nonlinear, wideband) in combiner design and significant performance improvement in signal recovery \cite{lo1991nonlinear,yang2015fifty,elbir2023twenty}.

In traditional communication systems, transmitted signals are discrete points from constellations. Therefore, signal recovery, commonly referred to as \textit{signal detection}, can be cast into a classification problem from the perspective of statistical machine learning, and the number of candidate classes is determined by the number of points in the employed constellation. Research in this stream includes, e.g., \cite{chen2008adaptive,chen2008symmetric,navia2010approximate,neinavaie2020lossless,liao2023deep,awan2023robust} 
as well as references therein, and the performance measure for signal detection is usually the misclassification rate (i.e., symbol error rate); representative algorithms encompass the maximum likelihood detector, the sphere decoding, etc. 
In another research stream, the signal recovery performance is evaluated using mean-squared errors (cf., signal-to-interference-plus-noise ratio), and the resultant signal recovery problem is commonly known as \textit{signal estimation}, which can be considered as a regression problem from the perspective of statistical machine learning. By comparing the estimated symbols with the constellation points afterward, the detection of discrete symbols can be realized. For this case, till now, typical combining solutions include zero-forcing receivers, Wiener receivers (i.e., linear minimum mean-squared error receivers), Capon receivers (i.e., minimum variance distortionless response receivers), and nonlinear receivers such as neural-network receivers \cite{ye2017power,he2020model,van2022transfer}. On the basis of these canonical approaches, variants such as robust beamformers working against the limited size of pilot samples and the uncertainty in steering vectors \cite{li2003robust,lorenz2005robust,zhang2015robust,li2017distributionally,huang2022robust,huang2023robust}
have also been intensively reported; among these robust solutions, the diagonal loading method \cite{cox1987robust}, \cite[Eq.~(11)]{lorenz2005robust} and the eigenvalue thresholding method \cite{harmanci2000relationships}, \cite[Eq.~(12)]{lorenz2005robust} are popular due to their excellent balance between practical performance and technical simplicity.

Different from traditional paradigms, in emerging communication systems, e.g., integrated sensing and communication (ISAC) systems, transmitted signals may be arbitrary complex values and spatially correlated \cite{liu2018toward,zhang2021overview,xiong2023fundamental}. As a result, mean-squared error is a preferred performance measure to investigate the \textit{receive combining and estimation} problem of wireless signals, which is, therefore, the focus of this article.

Although a large body of problems have been attacked in the area, the following signal-processing problems of combining and estimation 
in wireless transmission remain unsolved.
\begin{enumerate}
    \item What is the relation between the signal-model-based approaches (e.g., Wiener and Capon receivers) and the data-driven approaches (e.g., deep-learning receivers)? In other words, how can we build a mathematically unified modeling framework to interpret all the existing digital receive combiners?
    
    \item In addition to the limited pilot size and the uncertainty in steering vectors, there exist other uncertainties in the signal model: the uncertainty of the transmit signal covariance, the uncertainty of the communication channel matrix, the uncertainty of the channel noise covariance, the presence of channel impulse noises (i.e., outliers), and the non-ideality of the power amplifiers. Therefore, how can we handle all these types of uncertainties in a unified solution framework?
    
    \item Existing literature mainly studied the robustness theory of linear beamformers against limited pilot size and the uncertainty in steering vectors  \cite{li2003robust,lorenz2005robust,zhang2015robust,li2017distributionally,huang2022robust,huang2023robust}. However, how can we develop the theory of robust nonlinear combiners against all the aforementioned uncertainties?
\end{enumerate}

To this end, this article designs a unified modeling and solution framework for receive combining of wireless signals, in consideration of the scarcity of the pilot data and the different uncertainties in the signal model.

\subsection{Contributions}
The contributions of this article can be summarized from the aspects of machine learning theory and wireless transmission theory.

In terms of machine learning theory, we give a justification of the popular ridge regression and kernel ridge regression (i.e., quadratic loss function plus squared-$F$-norm regularization) from the perspective of distributional robustness against diagonal perturbation in feature covariance, which enriches the theory of trustworthy machine learning; see Theorems \ref{thm:ridge-regression} and \ref{thm:kernel-ridge}, as well as Corollaries \ref{cor:adv-learning-linear} and \ref{cor:adv-learning-linear-RKHS}. 

In terms of wireless transmission theory, the contributions are outlined below.
\begin{enumerate}
    \item We build a fundamentally theoretical framework for receive combining from the perspective of statistical machine learning. In addition to the linear estimation methods, nonlinear approaches (i.e., nonlinear combining) are also discussed in reproducing kernel Hilbert spaces and neural network function spaces. In particular, we reveal that channel estimation is not a necessary operation in receive combining. For details, see Subsection \ref{subsec:nonlin-estimation}.

    \item The presented framework is particularly developed from the perspective of distributional robustness which can therefore combat the limited size of pilot data and several types of uncertainties in the wireless signal model such as the uncertainty in the transmit power matrix, the uncertainty in the communication channel matrix, the existence of channel impulse noises (i.e., outliers), the uncertainty in the covariance matrix of channel noises, the non-ideality of the power amplifiers, etc. For details, see Subsection \ref{subsec:dist-uncertainty}, and the technical developments in Sections \ref{sec:lin-esti} and \ref{sec:nonlin-esti}.

    \item  Existing methods such as diagonal loading and eigenvalue thresholding are proven to be distributionally robust against the limited pilot size and all the aforementioned uncertainties in the wireless signal model. Extensions of diagonal loading and eigenvalue thresholding are proposed as well. Moreover, the kernelized diagonal loading and the kernelized eigenvalue thresholding methods are put forward for nonlinear estimation cases. For details, see Corollary \ref{cor:solutions-under-moments}, Examples \ref{exam:opt-estimator-RKHS} and \ref{exam:opt-estimator-RKHS-eigen-thres}, and Subsections \ref{subsec:BF-nontrivial-sets}.

    \item The distributionally robust receive combining and signal estimation problems across multiple frames, where channel conditions may change, are also investigated. For details, see Subsections \ref{subsec:BF-multi-frame} and \ref{subsubsec:multi-frame-RKHS}.
\end{enumerate}

\subsection{Notations}\label{subsec:notation}
The $N$-dimensional real (coordinate) space and complex (coordinate) space are denoted as $\R^N$ and $\C^N$, respectively. Lowercase symbols (e.g., $\vec x$) denote vectors (column by default) and uppercase ones (e.g., $\mat X$) denote matrices. We use the Roman font for random quantities (e.g., $\rvec x, \rmat X$) and the italic font for deterministic quantities (e.g., $\vec x, \mat X$). Let $\Re \vec X$ be the real part of a complex quantity $\mat X$ (a vector or matrix) and $\Im \mat X$ be the imaginary part of $\mat X$.
For a vector $\vec x \in \C^N$, let
\[
\vecul x \defeq
\left[
\begin{array}{cc}
    \Re \vec x \\
    \Im \vec x
\end{array}
\right] \in \R^{2N}
\]
be the real-space representation of $\vec x$; for a matrix $\vec H \in \C^{N \times M}$, let
\[
\matul H \defeq
\left[
\begin{array}{cc}
    \Re \mat H \\
    \Im \mat H
\end{array}
\right]
,~~~~~
\ul{\matul H} \defeq
\left[
\begin{array}{cc}
    \Re \mat H & -\Im \mat H\\
    \Im \mat H & \Re \mat H
\end{array}
\right]
\]
be the real-space representations of $\mat H$ where $\matul H \in \R^{2N \times M}$ and $\ul{\matul H} \in \R^{2N \times 2M}$. The running index set induced by an integer $N$ is defined as $[N] \defeq \{1,2,\ldots,N\}$. To concatenate matrices and vectors, MATLAB notations are used: i.e., $[\mat A,~\mat B]$ for row stacking and $[\mat A;~\mat B]$ for column stacking. We let $\mat \Gamma_M \defeq [\mat I_M,~ \mat J_M] \in \C^{M \times 2M}$ where $\mat I_M$ denotes the $M$-dimensional identity matrix, $\mat J_M \defeq j \cdot \mat I_M$, and $j$ denotes the imaginary unit. Let $\cal N (\vec \mu, \mat \Sigma)$ denote a real Gaussian distribution with mean $\vec \mu$ and covariance $\mat \Sigma$. We use $\cal{CN}(\vec s, \mat P, \mat C)$ to denote a complex Gaussian distribution with mean $\vec s$, covariance $\mat P$, and pseudo-covariance $\mat C$; if $\mat C$ is not specified, we imply $\mat C = \mat 0$. 

\section{Preliminaries}\label{sec:prelimilary}

We review two popular structured representation methods of nonlinear functions $\vec \phi: \R^{N} \to \R^{M}$. More details can be seen in Appendix \ref{append:func-representation}.

\subsection{Reproducing Kernel Hilbert Spaces}\label{subsec:pre-RKHS}
A reproducing kernel Hilbert space (RKHS) $\cal H$ induced by the kernel function $\ker: \R^N \times \R^N \to \R$ and a collection of points $\{\vec x_1, \vec x_2, \ldots, \vec x_L\} \subset \R^N$ is a set of functions from $\R^N$ to $\R$; $L$ may be infinite. Every function $\phi: \R^N \to \R$ in the functional space $\cal H$ can be represented by a linear combination \cite[p.~539;~Chap.~14]{murphy2012machine}
\begin{equation}\label{eq:non-lin-func-RKHS}
\phi(\vec x) = \sum^L_{i = 1} \w_i \cdot \ker(\vec x, \vec x_i),~\forall \vec x \in \R^N
\end{equation}
where $\{\w_i\}_{i \in [L]}$ are the combination weights; $\w_i \in \R$ for every $i \in [L]$. The matrix form of \eqref{eq:non-lin-func-RKHS} for $M$-multiple functions are
\begin{equation}\label{eq:non-lin-func-multi-RKHS}
\vec \phi(\vec x) \defeq 
\left[
\begin{array}{ccccccc}
    \phi_1(\vec x) \\
    \phi_2(\vec x) \\
    \vdots \\
    \phi_M(\vec x)
\end{array}
\right] = 
\mat W \cdot 
\vec \varphi(\vec x) \defeq
\left[
\begin{array}{c}
    \vec \w_1 \\
    \vec \w_2 \\
    \vdots \\
    \vec \w_M
\end{array}
\right] \cdot 
\vec \varphi(\vec x),
\end{equation}
where $\vec \w_1,\vec \w_2,\ldots,\vec \w_M \in \R^{L}$ are weight row-vectors for functions $\phi_1(\vec x), \phi_2(\vec x), \ldots, \phi_M(\vec x)$, respectively, and
\begin{equation}\label{eq:RKHS-defs}
\mat W \defeq
\left[
\begin{array}{c}
    \vec \w_1 \\
    \vec \w_2 \\
    \vdots \\
    \vec \w_M
\end{array}
\right] \in \R^{M \times L},~~~
\vec \varphi(\vec x) \defeq
\left[
\begin{array}{c}
    \ker(\vec x, \vec x_1) \\
    \ker(\vec x, \vec x_2) \\
    \vdots \\
    \ker(\vec x, \vec x_L)
\end{array}
\right].
\end{equation}
Since a kernel function is pre-designed (i.e., fixed) for an RKHS $\cal H$, \eqref{eq:non-lin-func-multi-RKHS} suggests a $\mat W$-linear representation of $\vec x$-nonlinear functions $\vec \phi(\vec x)$ in $\cal H^M$. Note that there exists a one-to-one correspondence between $\vec \phi$ and $\mat W$: for every $\vec \phi: \R^N \to \R^{M}$, there exists a $\mat W \in \R^{M \times L}$, and vice versa. 

\subsection{Neural Networks}\label{subsec:pre-NN}
Neural networks (NN) are another powerful tool to represent (i.e., approximate) nonlinear functions. A neural network function space (NNFS) $\cal K$ characterizes (or parameterizes) a set of multi-input multi-output functions. Typical choices are multi-layer feed-forward neural networks, recurrent neural networks, etc. For combining and estimation of wireless signals, the multi-layer feed-forward neural networks are standard \cite{ye2017power,he2020model,van2022transfer}. Suppose that we have $R - 1$ hidden layers (so in total $R+1$ layers including one input layer and one output layer) and each layer $r = 0, 1, \ldots, R$ contains $T_r$ neurons. To represent a function $\vec \phi: \R^N \to \R^M$, for the input layer $r = 0$ and output layer $r = R$, we have $T_0 = N$ and $T_{R} = M$, respectively. Let the output of the $r^\th$ layer be $\vec y_r \in \R^{T_r}$. For every layer $r$, we have $\vec y_r = \vec \sigma_r (\mat W^\circ_r \cdot \vec y_{r-1} + \vec b_r)$ where $\mat W^\circ_r \in \R^{T_{r} \times T_{r-1}}$ is the weight matrix, $\vec b_r \in \R^{T_r}$ is the bias vector, and the multi-output function $\vec \sigma_r$ is the activation function which is entry-wise identical. Hence, every function $\vec \phi: \R^N \to \R^M$ in a NNFS can be recursively expressed as \cite[Chap.~5]{bishop2006pattern}, \cite{li2023towards}
\begin{equation}\label{eq:non-lin-func-NNFS}
\begin{array}{cll}
\vec \phi(\vec x) &= \vec \sigma_{R} (\mat W_{R} \cdot [\vec y_{R-1}(\vec x);~1]) \\

\vec y_r(\vec x) &= \vec \sigma_r (\mat W_r \cdot [\vec y_{r-1}(\vec x);~1]), & r \in [R-1]\\

\vec y_0(\vec x) &= \vec x,
\end{array}
\end{equation}
where $\mat W_r \defeq [\mat W^\circ_r,~\vec b_r]$ for $r \in [R]$. Note that the activation functions can vary from one layer to another. 

\section{Problem Formulation}\label{sec:formulation}
Consider a narrow-band wireless signal transmission model
\begin{equation}\label{eq:signal-model}
    \rvec x = \mat H \rvec s + \rvec v
\end{equation}
where $\rvec x \in \C^{N}$ is the received signal, $\rvec s \in \C^{M}$ is the transmitted signal, $\mat H  \in \C^{N \times M}$ is the channel matrix, and $\rvec v \in \C^{N}$ is the zero-mean channel noise. The precoding operation (if exists) is integrated in $\mat H$. 
The transmitted symbols $\rvec s$ have zero means, which may be not only discrete symbols from constellations such as quadrature amplitude modulation but also arbitrary values such as integrated sensing and communication signals. We consider $L$ pilots $\rmat S \defeq (\rvec s_1,\rvec s_2,\ldots,\rvec s_L)$ in each frame, and the corresponding received symbols are $\rmat X \defeq (\rvec x_1,\rvec x_2,\ldots,\rvec x_L)$ under the noise $(\rvec v_1,\rvec v_2,\ldots,\rvec v_L)$.  
We suppose that $\mat R_s \defeq \E\rvec s \rvec s^\H$ and $\mat R_v \defeq \E\rvec v \rvec v^\H$ may not be identity or diagonal matrices: i.e., the components of $\rvec s$ can be correlated (e.g., in ISAC), so can be these of $\rvec v$. Consider the real-space representation of the signal model \eqref{eq:signal-model} by stacking the real and imaginary components: 
\begin{equation}\label{eq:real-signal-model}
\rvecul x = \ul{\matul H} \cdot \rvecul s + \rvecul v,
\end{equation}
where $\rvecul x \in \R^{2N}$, $\ul{\matul H} \in \R^{2N \times 2M}$, $\rvecul s \in \R^{2M}$, and $\rvecul v \in \R^{2N}$. 
The expressions of $\mat R_{\ul x} \defeq \E{\rvecul x \rvecul x^\T}$, $\mat R_{\ul s} \defeq \E{\rvecul s \rvecul s^\T}$, $\mat R_{\ul x \ul s} \defeq \E{\rvecul x \rvecul s^\T}$, and $\mat R_{\ul v} \defeq \E{\rvecul v \rvecul v^\T}$ can be readily obtained; see Appendix \ref{append:details-real-representation}. In some cases, signal estimation in real spaces can be technically simpler than that in complex spaces.

\subsection{Optimal Estimation}\label{subsec:nonlin-estimation}
\subsubsection{Optimal Nonlinear Estimation (Receive Combining)}
To recover $\rvec s$ using $\rvec x$, we consider an estimator $\rvech s \defeq \vec \phi(\rvec x)$, called a receive combiner, at the receiver where $\vec \phi: \C^{N} \to \C^{M}$ is a Borel-measurable function. Note that $\vec \phi(\rvec x)$ may be nonlinear in general because the joint distribution of $(\rvec x, \rvec s)$ is not necessarily Gaussian, for example, when the channel noise $\rvec v$ is non-Gaussian or when the power amplifiers work in non-linear regions. The signal estimation problem at the receiver can be written as a statistical machine-learning problem under the joint data distribution $\P_{\rvec x, \rvec s}$ of $(\rvec x, \rvec s)$, that is,
\begin{equation}\label{eq:opt-esti}
    \min_{\vec \phi \in \cal B_{\C^{N} \to \C^{M}}} \Tr \E_{\rvec x, \rvec s}[\vec \phi(\rvec x) - \rvec s][\vec \phi(\rvec x) - \rvec s]^\H,
\end{equation}
where $\cal B_{\C^{N} \to \C^{M}}$ contains all Borel-measurable estimators from $\C^{N}$ to $\C^{M}$. In what follows, we omit the notational dependence on $\C^{N}$ and $\C^{M}$, and use $\cal B$ as a shorthand. The optimal estimator, in the sense of minimum mean-squared error, is known as the conditional mean of $\rvec s$ given $\rvec x$, i.e.,
\begin{equation}\label{eq:nonlin-BF}
    \rvech s = \vec \phi (\rvec x) = \E({\rvec s | \rvec x}).
\end{equation}
Usually, it is computationally complicated to find the optimal $\vec \phi(\cdot)$ from the whole space $\cal B$ of Borel-measurable functions, that is, to compute the conditional mean. Therefore, in practice, we may find the optimal approximation of $\vec \phi(\cdot)$ in an RKHS $\cal H$ or a NNFS $\cal K$; note that $\cal H$ and $\cal K$ are two subspaces of $\cal B$. However, both $\cal H$ and $\cal K$ are sufficiently rich because they can be dense in the space of all continuous bounded functions.

\subsubsection{Optimal Linear Estimation (Receive Beamforming)}\label{subsubsec:beamforming}
If $\rvec x$ and $\rvec s$ are jointly Gaussian (e.g., when $\rvec s$ and $\rvec v$ are jointly Gaussian), the optimal estimator $\vec \phi$ is linear in $\rvec x$:
\begin{equation}\label{eq:BF}
    \rvech s = \mat W \rvec x,
\end{equation}
where $\mat W \in \C^{M \times N}$ is called a receive beamformer or a linear receive combiner. In this linear case, \eqref{eq:opt-esti} reduces to the usual Wiener--Hopf beamforming problem
\begin{equation}\label{eq:opt-BF}
    \min_{\mat W} \Tr \E_{\rvec x, \rvec s}[\mat W \rvec x - \rvec s][\mat W \rvec x - \rvec s]^\H,
\end{equation}
that is,
\begin{equation}\label{eq:opt-BF-explicit}
\min_{\mat W} \Tr \big[\mat W \mat R_{x} \mat W^\H - \mat W \mat R_{xs} - \mat R^\H_{xs} \mat W^\H + \mat R_{s}\big],
\end{equation}
where $\mat R_{x} \defeq \E{\rvec x \rvec x^\H} \in \C^{N \times N}$ and $\mat R_{xs} \defeq \E{\rvec x \rvec s^\H} \in \C^{N \times M}$. Since $\mat R_x = \mat H \mat R_s \mat H^\H + \mat R_v$ and $\mat R_{xs} = \mat H \mat R_s + \E{\rvec v \rvec s^\H} = \mat H \mat R_s$, the solution of \eqref{eq:opt-BF-explicit}, or \eqref{eq:opt-BF}, is 
\begin{equation}\label{eq:BF-Wiener}
\begin{array}{cl}
    \mat W^\star_{\text{Wiener}} &= \mat R^{\H}_{xs}\mat R^{-1}_{x} \\
    &= \mat R_s \mat H^\H [\mat H \mat R_s \mat H^\H + \mat R_v]^{-1},
\end{array}
\end{equation}
which is known as the Wiener beamformer. With an additional constraint $\mat W \mat H = \mat I_M$ (i.e., distortionless response), \eqref{eq:opt-BF-explicit} gives the Capon beamformer. Both the Wiener beamformer and the Capon beamformer maximize the output signal--to--interference-plus-noise ratio (SINR); hence, both are optimal in the sense of maximum output SINR. 

No matter whether $\P_{\rvec x, \rvec s}$ is Gaussian or not, \eqref{eq:opt-BF} or \eqref{eq:opt-BF-explicit} identifies the \bfit{optimal linear estimator} in the sense of minimum mean-squared
error among all linear estimators.

\subsubsection{Role of Channel Estimation}\label{subsubsec:role-CE} Eqs. \eqref{eq:opt-esti} and \eqref{eq:opt-BF} imply that channel estimation is not a necessary step in receive combining. The only necessary element, from the perspective of statistical machine learning, is the joint distribution $\P_{\rvec x, \rvec s}$ of the received signal $\rvec x$ and the transmitted signal $\rvec s$. Therefore, the following two points can be highlighted.
\begin{enumerate}[a)]
    \item If the joint distribution $\P_{\rvec x, \rvec s}$ is non-Gaussian, we just need to learn the mapping $\vec \phi$ using \eqref{eq:opt-esti}.
    
    \item If the joint distribution $\P_{\rvec x, \rvec s}$ is (or assumed to be) Gaussian, we just learn covariance matrices $\mat R_{xs}$ and $\mat R_x$; cf. \eqref{eq:BF-Wiener}; Gaussianity assumption of $\P_{\rvec x, \rvec s}$ is beneficial in reducing computational burdens. If, further, the channel matrix $\mat H$ is known, $\mat R_{xs}$ and $\mat R_x$ can be expressed using $\mat H$.
\end{enumerate}

\subsection{Distributional Uncertainty and Distributional Robustness}\label{subsec:dist-uncertainty}
For ease of conceptual illustration, we start with the following stationary-channel assumption in this subsection: The channel statistics remain unchanged within the communication frame so that the joint distribution $\P_{\rvec x, \rvec s}$ is fixed over time. That is, pilot data $\{(\vec x_1, \vec s_1), (\vec x_2, \vec s_2), \ldots, (\vec x_L, \vec s_L)\}$ and non-pilot communication data are drawn from the same unknown distribution $\P_{\rvec x, \rvec s}$. For the general case where the channel is not statistically stationary within a frame, see Appendix \ref{append:dist-uncertainty}; the statistical non-stationarity of $\P_{\rvec x, \rvec s}$ may be due to the time-selectivity of the transmit power matrix $\mat R_s$, of the channel matrix $\mat H$, and/or of the channel noise covariance $\mat R_v$.

\subsubsection{Issue of Distributional Uncertainty}
In practice, the true joint distribution $\P_{\rvec x, \rvec s}$ is unknown but can be estimated by the pilot data. Hence, the estimation of wireless signals is a data-driven statistical inference (i.e., statistical machine learning) problem. We let 
\begin{equation}\label{eq:empirical-dist}
\Ph_{\rvec x, \rvec s} \defeq \frac{1}{L} \sum^L_{i=1} \delta_{(\vec x_i, \vec s_i)}  
\end{equation}
denote the empirical distribution supported on the $L$ collected data $\{(\vec x_i, \vec s_i)\}_{i \in [L]}$, where $\delta_{(\vec x_i, \vec s_i)}$ denotes the Dirac distribution (i.e., point-mass distribution) centered on $(\vec x_i, \vec s_i)$; note that $\Ph_{\rvec x, \rvec s}$ is a discrete distribution. If we use the estimated joint distribution $\Ph_{\rvec x, \rvec s}$ as a surrogate of the true joint distribution $\P_{\rvec x, \rvec s}$, \eqref{eq:opt-esti} becomes the conventional empirical risk minimization (ERM) 
\begin{equation}\label{eq:esti-erm-dist}
\min_{\vec \phi \in \cal B} \Tr \E_{(\rvec x, \rvec s) \sim \Ph_{\rvec x, \rvec s}}[\vec \phi(\rvec x) - \rvec s][\vec \phi(\rvec x) - \rvec s]^\H,
\end{equation} 
i.e.,
\begin{equation}\label{eq:esti-erm}
\min_{\vec \phi \in \cal B}  \Tr \frac{1}{L} \sum^L_{i=1}[\vec \phi(\vec x_i) - \vec s_i][\vec \phi(\vec x_i) - \vec s_i]^\H.    
\end{equation}
Likewise, \eqref{eq:opt-BF-explicit} become the conventional beamforming problem
\begin{equation}\label{eq:BF-erm}
\displaystyle \min_{\mat W} \Tr \big[\mat W \math R_{x} \mat W^\H - \mat W \math R_{xs} - \math R^\H_{xs} \mat W^\H + \math R_{s}\big],
\end{equation}
where ${\math R}_x$, ${\math R}_{xs}$, and ${\math R}_s$ are the training-sample-estimated (i.e., nominal) values of $\mat R_{x}$, $\mat R_{xs}$, and $\mat R_{s}$, respectively.

There exists the distributional difference between the sample-defined nominal distribution $\Ph_{\rvec x, \rvec s}$ and true data-generating distribution $\P_{\rvec x, \rvec s}$ due to the limited size of the training data set (i.e., limited pilot length) and the time-selectivity of $\P_{\rvec x, \rvec s}$. From the perspective of applied statistics and machine learning, the distributional difference between $\Ph_{\rvec x, \rvec s}$ and $\P_{\rvec x, \rvec s}$ (i.e., the distributional uncertainty of $\Ph_{\rvec x, \rvec s}$ compared to $\P_{\rvec x, \rvec s}$) may cause significant performance degradation of \eqref{eq:esti-erm} compared to \eqref{eq:opt-esti}, so is the performance deterioration of \eqref{eq:BF-erm} compared to \eqref{eq:opt-BF-explicit}. For extensive reading on this point, see Appendix \ref{append:dist-uncertainty}. Therefore, to reduce the adverse effect introduced by the distributional uncertainty in $\Ph_{\rvec x, \rvec s}$, a new surrogate of \eqref{eq:opt-esti} rather than the sample-averaged approximation in \eqref{eq:esti-erm} is expected.

\subsubsection{Distributionally Robust Estimation}\label{subsec:dist-robust-BF}
To combat the distributional uncertainty in $\Ph_{\rvec x, \rvec s}$, we consider the distributionally robust counterpart of \eqref{eq:opt-esti}
\begin{equation}\label{eq:opt-esti-dist-robust}
    \min_{\vec \phi \in \cal B} \max_{\P_{\rvec x, \rvec s} \in \cal U_{\rvec x, \rvec s}} \Tr \E_{\rvec x, \rvec s}[\vec \phi(\rvec x) - \rvec s][\vec \phi(\rvec x) - \rvec s]^\H,
\end{equation}
where $\cal U_{\rvec x, \rvec s}$, called a distributional uncertainty set, contains a collection of distributions that are close to the nominal distribution (i.e., the sample-estimated distribution) $\Ph_{\rvec x, \rvec s}$;
\begin{equation}\label{eq:wasserstein-ball}
    \cal U_{\rvec x, \rvec s} \defeq \{\P_{\rvec x, \rvec s} |~ d(\P_{\rvec x, \rvec s}, \Ph_{\rvec x, \rvec s}) \le \epsilon \},
\end{equation}
where $d(\cdot, \cdot)$ denotes a similarity measure (e.g., metric or divergence) between two distributions and $\epsilon \ge 0$ an uncertainty quantification level. Since $\Ph_{\rvec x, \rvec s}$ is discrete and $\P_{\rvec x, \rvec s}$ is not, the Wasserstein distance \cite[Def.~2]{shafieezadeh2019regularization} and the maximum mean discrepancy (MMD) distance \cite[Def.~2.1]{staib2019distributionally} are the typical choices of $d(\cdot, \cdot)$ to construct $\cal U_{\rvec x, \rvec s}$. When $\Ph_{\rvec x, \rvec s}$ and $\P_{\rvec x, \rvec s}$ are parametric distributions (e.g., Gaussian, exponential family), divergences such as the Kullback-–Leibler (KL) divergence, or more general $\phi$-divergence, are also applicable to particularize $d(\cdot, \cdot)$ because parameters can be estimated using samples. 
When $\epsilon = 0$, \eqref{eq:opt-esti-dist-robust} reduces to \eqref{eq:esti-erm}. 

If $\cal U_{\rvec x, \rvec s}$ contains (or is assumed, for computational simplicity, to contain) only Gaussian distributions, \eqref{eq:opt-esti-dist-robust} particularizes to
\begin{equation}\label{eq:BF-dist-robust}
    \begin{array}{cl}
        \displaystyle \min_{\mat W} \max_{\mat R} &\Tr \big[\mat W \mat R_{x} \mat W^\H - \mat W \mat R_{xs} - \mat R^\H_{xs} \mat W^\H + \mat R_{s}\big] \\
        \st & d_0(\mat R,~\math R) \le \epsilon_0, \\

        & \mat R \succeq \mat 0,
    \end{array}
\end{equation}
where 
\begin{equation}\label{eq:big-R}
    \mat R \defeq 
    \left[
        \begin{array}{cc}
           \mat R_x  &  \mat R_{xs} \\
           \mat R^\H_{xs}  &  \mat R_s
        \end{array}
    \right],
    ~~~
    \math R \defeq
    \left[
        \begin{array}{cc}
           \math R_x  &  \math R_{xs} \\
           \math R^\H_{xs}  &  \math R_s
        \end{array}
    \right],
\end{equation}
because every zero-mean complex Gaussian distribution is uniquely characterized by its covariance and pseudo-covariance, but in receive beamforming, we do not consider pseudo-covariances; cf. \eqref{eq:BF-Wiener}; $d_0$ denotes the matrix similarity measures (e.g., matrix distances); $\epsilon_0 \ge 0$ is the uncertainty quantification parameter. When $\epsilon_0 = 0$, \eqref{eq:BF-dist-robust} reduces to \eqref{eq:BF-erm}.

For additional discussions on the framework of distributionally robust estimation, see Appendix \ref{append:DRO-supplementary}.

\section{Distributionally Robust Linear Estimation}\label{sec:lin-esti}

Due to several practical benefits of linear estimation, for example, the simplicity of hardware structures, the clarity of physical meaning (i.e., constructive and destructive interference through beamforming), and the easiness of computations, investigating distributionally robust linear estimation problems is important. This section particularly studies Problem \eqref{eq:BF-dist-robust}.

\subsection{General Framework and Concrete Examples}
The following lemma solves Problem \eqref{eq:BF-dist-robust}.
\begin{lemma}\label{lem:lin-BF-dist-robust-dual}
Suppose that the set $\{\mat R|~d_0(\mat R,~\math R) \le \epsilon_0\}$ is compact convex and $\mat R_x$ is invertible. Let $\mat R^\star$ solve the problem below:
\begin{equation}\label{eq:BF-dist-robust-max}
    \begin{array}{cl}
        \displaystyle \max_{\mat R} &\Tr \big[ -\mat R^{\H}_{xs}\mat R^{-1}_{x} \mat R_{xs} + \mat R_{s}\big] \\
        \st & d_0(\mat R,~\math R) \le \epsilon_0, \\
        & \mat R \succeq \mat 0,~~~\mat R_x \succ \mat 0.
    \end{array}
\end{equation}
Construct $\mat W^\star$ using $\mat R^\star$ as follows: 
\begin{equation}\label{eq:dist-robust-BF}
    \mat W^\star \defeq \mat R^{\star \H}_{xs}\mat R^{\star -1}_{x}.
\end{equation}
Then $(\mat W^\star, \mat R^\star)$ is a solution to Problem \eqref{eq:BF-dist-robust}. On the other hand, if $(\mat W^\star, \mat R^\star)$ solves Problem \eqref{eq:BF-dist-robust}, then $\mat R^\star$ is a solution to \eqref{eq:BF-dist-robust-max} and $(\mat W^\star, \mat R^\star)$ satisfies \eqref{eq:dist-robust-BF}.
\end{lemma}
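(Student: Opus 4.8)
The plan is to treat Problem \eqref{eq:BF-dist-robust} as a convex--concave minimax problem and to justify interchanging $\min_{\mat W}$ and $\max_{\mat R}$, after which the inner minimization collapses to the closed-form Wiener solution. First I would fix any feasible $\mat R$ with $\mat R_x \succ 0$ and minimize the objective $f(\mat W, \mat R) \defeq \Tr[\mat W \mat R_{x} \mat W^\H - \mat W \mat R_{xs} - \mat R^\H_{xs} \mat W^\H + \mat R_{s}]$ over $\mat W$. Since $\mat R_x \succ 0$, this is an unconstrained strictly convex quadratic in $\mat W$, so the first-order condition yields the unique minimizer $\mat W^\star(\mat R) = \mat R^\H_{xs}\mat R^{-1}_x$ and the optimal value $g(\mat R) \defeq \Tr[\mat R_{s} - \mat R^\H_{xs}\mat R^{-1}_x \mat R_{xs}]$. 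This is exactly the objective of \eqref{eq:BF-dist-robust-max}, so that problem is precisely $\max_{\mat R} \min_{\mat W} f(\mat W, \mat R)$.

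The core step is the minimax equality $\min_{\mat W} \max_{\mat R} f = \max_{\mat R} \min_{\mat W} f$, which I would establish through Sion's minimax theorem. Its hypotheses are readily checked: $f$ is convex (indeed quadratic) in $\mat W$ over the convex domain $\C^{M \times N}$, and it is \emph{linear} — hence both concave and continuous — in the block entries of $\mat R$; moreover, the feasible set $\{\mat R : d_0(\mat R, \math R) \le \epsilon_0,~\mat R \succeq \mat 0\}$ is compact and convex by assumption. Since one of the two sets is compact, Sion's theorem delivers the equality, and its common value equals $\max_{\mat R} g(\mat R)$.

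With the equality in hand, I would construct the solution and then verify the converse. Let $\mat R^\star$ attain the maximum in \eqref{eq:BF-dist-robust-max} and set $\mat W^\star \defeq \mat R^{\star \H}_{xs}\mat R^{\star -1}_x$ as in \eqref{eq:dist-robust-BF}. The pair $(\mat W^\star, \mat R^\star)$ is then a saddle point: the inequality $f(\mat W^\star, \mat R^\star) \le f(\mat W, \mat R^\star)$ holds because $\mat W^\star$ is the unconstrained minimizer of the convex quadratic $f(\cdot, \mat R^\star)$, while $f(\mat W^\star, \mat R) \le f(\mat W^\star, \mat R^\star)$ follows from the minimax equality together with the identification of the common value $g(\mat R^\star)$. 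Hence $(\mat W^\star, \mat R^\star)$ solves \eqref{eq:BF-dist-robust}. Conversely, if $(\mat W^\prime, \mat R^\prime)$ solves \eqref{eq:BF-dist-robust}, then the minimax equality forces $\mat R^\prime$ to attain $\max_{\mat R} g(\mat R)$ — i.e.\ to solve \eqref{eq:BF-dist-robust-max} — and the strict convexity of $f(\cdot, \mat R^\prime)$, guaranteed by $\mat R^\prime_x \succ 0$, pins down the unique inner minimizer $\mat W^\prime = \mat R^{\prime \H}_{xs}\mat R^{\prime -1}_x$, which is \eqref{eq:dist-robust-BF}.

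The main obstacle I anticipate is the open constraint $\mat R_x \succ 0$ in \eqref{eq:BF-dist-robust-max}, which renders its feasible region non-compact even though the ambient set $\{d_0 \le \epsilon_0,~\mat R \succeq \mat 0\}$ is compact; when $\mat R_x$ is singular the inner value $g(\mat R)$ may be $-\infty$. I would therefore apply Sion's theorem on the compact set $\{d_0 \le \epsilon_0,~\mat R \succeq \mat 0\}$ and then argue that these degenerate points do not affect the supremum. The invertibility hypothesis is what makes this work: it guarantees that at least one feasible $\mat R$ (for instance the nominal $\math R$) has $\mat R_x \succ 0$, so the optimal value is finite, and it further yields coercivity of the outer objective $h(\mat W) \defeq \max_{\mat R} f(\mat W, \mat R)$ in $\mat W$, so that the outer infimum is attained. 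Confirming that the maximizer $\mat R^\star$ lands in the region $\mat R_x \succ 0$, rather than on the boundary where the Wiener formula \eqref{eq:dist-robust-BF} degenerates, is the delicate point that the compactness-plus-invertibility assumptions are designed to settle.
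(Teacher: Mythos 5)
Your proof follows essentially the same route as the paper's: observe that the objective is convex (quadratic) in $\mat W$ and linear, hence concave, in $\mat R$, invoke Sion's minimax theorem over the compact convex constraint set to exchange $\min$ and $\max$, and then collapse the inner minimization to the closed-form Wiener solution $\mat W = \mat R^{\H}_{xs}\mat R^{-1}_{x}$, which yields \eqref{eq:BF-dist-robust-max}. If anything, your write-up is more careful than the paper's own proof, which does not explicitly verify the saddle-point property for the converse direction nor discuss the degenerate boundary where $\mat R_x$ is singular.
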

\begin{proof}
See Appendix \ref{append:dual}. \stp
\end{proof}

Let 
\begin{equation}\label{eq:BF-obj-f1}
    f_1(\mat R) \defeq \Tr \big[ -\mat R^{\H}_{xs}\mat R^{-1}_{x} \mat R_{xs} + \mat R_{s}\big]
\end{equation}
denote the objective function of \eqref{eq:BF-dist-robust-max}. When $\mat R_s$ and $\mat R_{xs}$ are fixed, we define 
\begin{equation}\label{eq:BF-obj-f2}
    f_2(\mat R_x) \defeq \Tr \big[ -\mat R^{\H}_{xs}\mat R^{-1}_{x} \mat R_{xs} + \mat R_{s}\big].
\end{equation}

The theorem below studies the properties of $f_1$ and $f_2$.
\begin{theorem}\label{thm:f-increasing-R-x}
Consider the definition of $\mat R$ in \eqref{eq:big-R}. The functions $f_1$ defined in \eqref{eq:BF-obj-f1} and $f_2$ defined in \eqref{eq:BF-obj-f2} are monotonically increasing in $\mat R$ and $\mat R_x$, respectively. To be specific, if $\mat R_1 \succeq \mat R_2 \succeq \mat 0$, $\mat R_{1, x} \succ \mat 0$, and $\mat R_{2, x} \succ \mat 0$, we have $f_1(\mat R_1) \ge f_1(\mat R_2)$. In addition, if $\mat R_{1, x} \succeq \mat R_{2, x} \succ \mat 0$, we have $f_2(\mat R_{1, x}) \ge f_2(\mat R_{2, x})$.
\end{theorem}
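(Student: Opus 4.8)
The plan is to recognize that $f_1(\mat R)$ is, up to the constant-in-comparison bookkeeping, the trace of a Schur complement, and then to reduce everything to operator monotonicity of the Schur-complement map composed with monotonicity of the trace. With $\mat R$ partitioned as in \eqref{eq:big-R}, the matrix $\mat R_s - \mat R^\H_{xs}\mat R^{-1}_x \mat R_{xs}$ appearing in \eqref{eq:BF-obj-f1} is precisely the Schur complement of the block $\mat R_x$ in $\mat R$; write $\mat R / \mat R_x \defeq \mat R_s - \mat R^\H_{xs}\mat R^{-1}_x \mat R_{xs}$ for it, so that $f_1(\mat R) = \Tr[\mat R/\mat R_x]$. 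Because the trace is monotone with respect to the L\"owner order (i.e.\ $\mat A \succeq \mat B$ implies $\Tr \mat A \ge \Tr \mat B$), it suffices to prove the matrix statement: if $\mat R_1 \succeq \mat R_2 \succeq \mat 0$ with $\mat R_{1,x} \succ \mat 0$ and $\mat R_{2,x} \succ \mat 0$, then $\mat R_1/\mat R_{1,x} \succeq \mat R_2/\mat R_{2,x}$.

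The key tool is a variational (minimization) representation of the Schur complement. For any $\mat X$ of compatible size one verifies by direct expansion that
\begin{equation}
\begin{bmatrix} -\mat X^\H & \mat I \end{bmatrix} \mat R \begin{bmatrix} -\mat X \\ \mat I \end{bmatrix} = \mat R_s - \mat X^\H \mat R_{xs} - \mat R^\H_{xs} \mat X + \mat X^\H \mat R_x \mat X.
\end{equation}
Completing the square using $\mat R_x \succ \mat 0$, the right-hand side equals $\mat R/\mat R_x + (\mat X - \mat R^{-1}_x \mat R_{xs})^\H \mat R_x (\mat X - \mat R^{-1}_x \mat R_{xs})$, which shows that $\mat R/\mat R_x$ is the L\"owner minimum over $\mat X$ of the quadratic form, attained uniquely at $\mat X = \mat R^{-1}_x \mat R_{xs}$. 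Here the strict positivity $\mat R_x \succ \mat 0$ (rather than $\mat R_x \succeq \mat 0$) is exactly what guarantees the minimum is attained.

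Monotonicity then follows by a minimizer-swapping argument. Let $\mat X_1 \defeq \mat R^{-1}_{1,x}\mat R_{1,xs}$ be the minimizer associated with $\mat R_1$, and apply the hypothesis $\mat R_1 \succeq \mat R_2$ to the single fixed test block $[-\mat X_1;~\mat I]$, which gives $[-\mat X_1^\H,~\mat I]\,\mat R_1\,[-\mat X_1;~\mat I] \succeq [-\mat X_1^\H,~\mat I]\,\mat R_2\,[-\mat X_1;~\mat I]$. The left side equals $\mat R_1/\mat R_{1,x}$ by choice of $\mat X_1$, while the right side is $\succeq \mat R_2/\mat R_{2,x}$ because the Schur complement of $\mat R_2$ is the minimum over all $\mat X$. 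Chaining the two relations yields $\mat R_1/\mat R_{1,x} \succeq \mat R_2/\mat R_{2,x}$, and taking traces gives $f_1(\mat R_1) \ge f_1(\mat R_2)$. For $f_2$, increasing only $\mat R_x$ while holding $\mat R_s$ and $\mat R_{xs}$ fixed makes $\mat R_1 - \mat R_2$ block-diagonal with blocks $\mat R_{1,x} - \mat R_{2,x} \succeq \mat 0$ and $\mat 0$, so $\mat R_1 \succeq \mat R_2$ and $f_2$ monotonicity is the special case just proved. Alternatively $f_2$ has a direct one-line proof: matrix inversion is operator antitone, so $\mat R_{1,x} \succeq \mat R_{2,x} \succ \mat 0$ implies $\mat R^{-1}_{2,x} \succeq \mat R^{-1}_{1,x}$, whence $\Tr[\mat R^\H_{xs}(\mat R^{-1}_{2,x} - \mat R^{-1}_{1,x})\mat R_{xs}] \ge \mat 0$ and therefore $f_2(\mat R_{1,x}) \ge f_2(\mat R_{2,x})$.

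The step I expect to require the most care is the L\"owner-order minimizer-swapping argument, since one must avoid the (generally false) claim that the minimum of a pointwise-larger matrix family is itself larger in the L\"owner order; the argument succeeds only because it routes the comparison through the single common test point $\mat X_1$ rather than comparing the two minima directly. The remaining points to handle carefully are verifying attainment of the minimum (where $\mat R_x \succ \mat 0$ enters) and confirming that $\mat R/\mat R_x$ is Hermitian so that $f_1$ is real-valued and the trace comparison is meaningful.
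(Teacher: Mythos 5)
Your proof is correct, and it runs on the same engine as the paper's: both exhibit $f_1$ (or the Schur complement behind it) as a minimum of $\mat R$-monotone quadratic forms $[-\mat X^\H,~\mat I]\,\mat R\,[-\mat X;~\mat I]$ and then compare the two minima by evaluating both at the single test point that is optimal for $\mat R_1$. The differences are in packaging. The paper obtains the variational representation of $f_1$ by invoking Lemma \ref{lem:lin-BF-dist-robust-dual} (hence Sion's minimax theorem and the Wiener solution) and performs the swap at the scalar level, after the trace has been taken; you obtain the representation self-containedly by completing the square, which needs nothing beyond $\mat R_x \succ \mat 0$, and you perform the swap in the L\"owner order, proving the strictly stronger statement $\mat R_1/\mat R_{1,x} \succeq \mat R_2/\mat R_{2,x}$ before tracing. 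Your version is therefore more elementary (no appeal to duality) and slightly more general (it establishes matrix-level monotonicity of the Schur complement, a classical fact you have in effect reproved), at the cost of a somewhat longer setup. Your closing caution is exactly the right point of care: the argument works only because the comparison is routed through the common test block $[-\mat X_1;~\mat I]$ rather than by comparing the two minima directly. Your $f_2$ argument via operator antitonicity of matrix inversion is word-for-word the paper's.
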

\begin{proof}
    See Appendix \ref{append:f-increasing-R-x}. \stp
\end{proof}

To concretely solve \eqref{eq:BF-dist-robust-max}, we need to particularize $d_0$. This article investigates the following uncertainty sets.

\begin{definition}[Additive Moment Uncertainty Set]\label{def:uncertainty-set-additive-moment}
The additive moment uncertainty set of $\mat R$ is constructed as
\begin{equation}\label{eq:uncertainty-set-additive-moment}
\{\mat R|~\math R - \epsilon_0 \mat E \preceq \mat R \preceq \math R + \epsilon_0 \mat E,~\mat R \succeq \mat 0\}
\end{equation}
for some $\mat E \succeq \mat 0$ and $\epsilon_0 \ge 0$. \stp
\end{definition}

Definition \ref{def:uncertainty-set-additive-moment} is motivated by the fact that the difference $\mat R - \math R$ is bounded by some threshold matrix $\mat E$ and error quantification level $\epsilon_0$: specifically, $- \epsilon_0 \mat E \preceq \mat R - \math R \preceq \epsilon_0 \mat E$. In practice, we can consider the threshold as an identity matrix because, for every non-identity $\mat E \succeq \mat 0$, we have $\mat E \preceq \lambda_1 \mat I_{N + M}$ where $\lambda_1$ is the largest eigenvalue of $\mat E$.

\begin{definition}[Diagonal-Loading Uncertainty Set]\label{def:uncertainty-set-diag-loading}
The diagonal-loading uncertainty set of $\mat R$ is constructed as
\begin{equation}\label{eq:uncertainty-set-diag-loading}
\{\mat R|~\math R - \epsilon_0 \mat I_{N + M} \preceq \mat R \preceq \math R + \epsilon_0 \mat I_{N + M},~\mat R \succeq \mat 0\}
\end{equation}
for some $\epsilon_0 \ge 0$. \stp
\end{definition}

Due to the concentration property of the sample-covariance $\math R$ to the true covariance $\mat R$ when the true distribution $\P_{\rvec x, \rvec s}$ is fixed within a frame, finite values of $\epsilon_0$ exist for every sample size $L$;  
NB: $\epsilon_0 \to 0$ as $L \to \infty$. However, given $L$, the smallest $\epsilon_0$ cannot be practically calculated because it depends on the true but unknown $\P_{\rvec x, \rvec s}$. If $\mat E$ is block-diagonal, the generalized diagonal-loading uncertainty set can be motivated.

\begin{definition}[Generalized Diagonal-Loading Uncertainty Set]\label{def:generalized-uncertainty-set-diag-loading}
The generalized diagonal-loading uncertainty set of $\mat R$ is constructed by the following constraints: $\mat R \succeq \mat 0$ and 
\begin{equation}\label{eq:generalized-uncertainty-set-diag-loading}
\begin{array}{l}
        \left[
        \begin{array}{cc}
           \math R_x  &  \math R_{xs} \\
           \math R^\H_{xs}  &  \math R_s
        \end{array}
        \right] - \epsilon_0 
        \left[
        \begin{array}{cc}
           \mat F  &  \mat 0 \\
           \mat 0  &  \mat G
        \end{array}
        \right] \\
        
        \quad \quad \preceq
        \left[
        \begin{array}{cc}
           \mat R_x  &  \mat R_{xs} \\
           \mat R^\H_{xs}  &  \mat R_s
        \end{array}
        \right] \\
        
        \quad \quad \quad \quad \preceq
        \left[
        \begin{array}{cc}
           \math R_x  &  \math R_{xs} \\
           \math R^\H_{xs}  &  \math R_s
        \end{array}
        \right] + \epsilon_0 
        \left[
        \begin{array}{cc}
           \mat F  &  \mat 0 \\
           \mat 0  &  \mat G
        \end{array}
        \right],
\end{array}
\end{equation}
for some $\mat F,\mat G \succeq \mat 0$ and $\epsilon_0 \ge 0$.
\stp
\end{definition}

Definitions \ref{def:uncertainty-set-additive-moment}, \ref{def:uncertainty-set-diag-loading}, and \ref{def:generalized-uncertainty-set-diag-loading} are introduced for the first time in this article. Another type of moment-based uncertainty set is popular in the literature, which we refer to as the multiplicative moment uncertainty set for differentiation.
\begin{definition}[Multiplicative Moment Uncertainty Set \cite{delage2010distributionally}]\label{def:uncertainty-set-multiplicative-moment}
The multiplicative moment uncertainty set of $\mat R$ is given as
\begin{equation}\label{eq:uncertainty-set-multiplicative-moment}
\{\mat R|~\theta_1 \math R \preceq \mat R \preceq \theta_2 \math R\}
\end{equation}
for some $\theta_2 \ge 1 \ge \theta_1 \ge 0$. \stp
\end{definition}

The following corollary shows the distributionally robust linear beamformers associated with the various uncertainty sets in Definitions \ref{def:uncertainty-set-additive-moment}, \ref{def:uncertainty-set-diag-loading}, \ref{def:generalized-uncertainty-set-diag-loading}, and \ref{def:uncertainty-set-multiplicative-moment}.

\begin{corollary}[of Theorem \ref{thm:f-increasing-R-x}]\label{cor:solutions-under-moments}
Consider the moment-based uncertainty sets in Definitions \ref{def:uncertainty-set-additive-moment}, \ref{def:uncertainty-set-diag-loading}, \ref{def:generalized-uncertainty-set-diag-loading}, and \ref{def:uncertainty-set-multiplicative-moment}. The distributionally robust linear beamforming \eqref{eq:BF-dist-robust-max} is analytically solved by the corresponding upper bounds of $\mat R$. To be specific,
\begin{enumerate}[\quad C1)]
    \item Under Definition \ref{def:uncertainty-set-additive-moment}, the additive-moment distributionally robust (DR-AM) beamformer is 
    \begin{equation}\label{eq:DRBF-AM}
    \begin{array}{cl}
        \mat W^\star_{\text{DR-AM}} &= (\math R_{xs} + \epsilon_0 \mat E_{xs})^{\H} (\math R_{x} + \epsilon_0 \mat E_{x})^{-1} \\
        &= (\math H \math R_{s} + \epsilon_0 \mat E_{xs})^{\H} \cdot \\
        & \quad \quad \quad [\math H \math R_s \math H^\H  + \math R_v + \epsilon_0 \mat E_x]^{-1},
    \end{array}
    \end{equation}
    where $\math H$, $\math R_{s}$, and $\math R_{v}$ denote the estimates of $\mat H$, $\mat R_{s}$, and $\mat R_{v}$, respectively.

    \item Under Definition \ref{def:uncertainty-set-diag-loading}, the diagonal-loading distributionally robust (DR-DL) beamformer is 
    \begin{equation}\label{eq:DRBF-DL}
    \begin{array}{cl}
        \mat W^\star_{\text{DR-DL}} &= \math R^{\H}_{xs} [{\math R}_x + \epsilon_0 \mat I_N]^{-1} \\
        &= \math R_s \math H^\H [\math H \math R_s \math H^\H  + \math R_v + \epsilon_0 \mat I_N]^{-1},
    \end{array}
    \end{equation}
    which is also known as the loaded sample matrix inversion method \cite{cox1987robust}, \cite[Eq.~(11)]{lorenz2005robust} and widely-used in the practice of wireless communications.

    \item Under Definition \ref{def:generalized-uncertainty-set-diag-loading}, the generalized diagonal-loading distributionally robust beamformer (DR-GDL) is 
    \begin{equation}\label{eq:DRBF-GDL}
    \begin{array}{cl}
        \mat W^\star_{\text{DR-GDL}} &= \math R^{\H}_{xs} [{\math R}_x + \epsilon_0 \mat F]^{-1} \\
        &= \math R_s \math H^\H [\math H \math R_s \math H^\H  + \math R_v + \epsilon_0 \mat F]^{-1}.
    \end{array}
    \end{equation}

    \item Under Definition \ref{def:uncertainty-set-multiplicative-moment}, the multiplicative-moment (MM) distributionally robust beamformer is identical to the Wiener beamformer \eqref{eq:BF-Wiener} at nominal values:
    \begin{equation}\label{eq:DRBF-MM}
    \begin{array}{cl}
        \mat W^\star_{\text{DR-MM}} &= \math R_{xs}^{\H} \math R_{x}^{-1} \\
        &= \math R_{s} \math H^{\H} [\math H \math R_s \math H^\H  + \math R_v]^{-1}.
    \end{array}
    \end{equation}
\end{enumerate}
The corresponding estimation errors are simple to obtain.
\stp
\end{corollary}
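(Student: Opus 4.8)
The plan is to observe that, by Lemma \ref{lem:lin-BF-dist-robust-dual}, solving the inner maximization \eqref{eq:BF-dist-robust-max} fully determines the beamformer through $\mat W^\star=\mat R^{\star \H}_{xs}\mat R^{\star -1}_x$; hence the whole corollary reduces to locating the maximizer $\mat R^\star$ of $f_1$ over each uncertainty set and then reading off its blocks. Since Theorem \ref{thm:f-increasing-R-x} already establishes that $f_1$ is monotonically increasing in $\mat R$ with respect to the Loewner order, the only substantive work is to identify, for each set, the Loewner-greatest feasible element.

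First I would note that each of the four sets is an order interval $\{\mat R:\underline{\mat R}\preceq\mat R\preceq\overline{\mat R}\}$ in the Loewner order with a genuine greatest element $\overline{\mat R}$: namely $\overline{\mat R}=\math R+\epsilon_0\mat E$ for Definition \ref{def:uncertainty-set-additive-moment}, its specializations $\math R+\epsilon_0\mat I_{N+M}$ and $\math R+\epsilon_0\,\mathrm{blockdiag}(\mat F,\mat G)$ for Definitions \ref{def:uncertainty-set-diag-loading} and \ref{def:generalized-uncertainty-set-diag-loading}, and $\overline{\mat R}=\theta_2\math R$ for Definition \ref{def:uncertainty-set-multiplicative-moment}. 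Because $f_1$ is Loewner-monotone and $\mat R\preceq\overline{\mat R}$ for every feasible $\mat R$, it follows that $f_1(\mat R)\le f_1(\overline{\mat R})$, so I only need to check that $\overline{\mat R}$ is itself feasible, i.e. $\overline{\mat R}\succeq\mat 0$ and its received-signal block is positive definite. This is immediate from the set definitions: $\math R-\epsilon_0\mat E\succeq\mat 0$ gives $\math R\succeq\epsilon_0\mat E\succeq\mat 0$, whence $\overline{\mat R}=\math R+\epsilon_0\mat E\succeq\math R\succeq\mat 0$, and the received-signal block $\math R_x+\epsilon_0\mat E_x$ is positive definite since the nominal block $\math R_x\succ\mat 0$ (invertible PSD) and $\mat E_x\succeq\mat 0$; the multiplicative case follows identically using $\theta_2\ge 1$. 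Therefore $\mat R^\star=\overline{\mat R}$ solves \eqref{eq:BF-dist-robust-max} in every case.

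With $\mat R^\star=\overline{\mat R}$ in hand, I would obtain the explicit beamformers by partitioning $\overline{\mat R}$ as in \eqref{eq:big-R} and substituting its $(x,x)$- and $(x,s)$-blocks into $\mat W^\star=\mat R^{\star \H}_{xs}\mat R^{\star -1}_x$. For the additive set this produces the $\mat E_{xs}$- and $\mat E_x$-loaded blocks of \eqref{eq:DRBF-AM}; for the two diagonal-loading sets the off-diagonal block is \emph{unperturbed} because $\mat I_{N+M}$ and $\mathrm{blockdiag}(\mat F,\mat G)$ carry no off-diagonal part, so $\mat R^\star_{xs}=\math R_{xs}$ and only $\math R_x$ is loaded, recovering \eqref{eq:DRBF-DL} and \eqref{eq:DRBF-GDL}; and for the multiplicative set the common scalar cancels, $(\theta_2\math R_{xs})^\H(\theta_2\math R_x)^{-1}=\math R^\H_{xs}\math R^{-1}_x$, which is exactly the nominal Wiener beamformer \eqref{eq:DRBF-MM}. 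The second (channel-parameterized) line in each case then follows by inserting $\math R_x=\math H\math R_s\math H^\H+\math R_v$ and $\math R_{xs}=\math H\math R_s$.

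The step I expect to require the most care is the logical bridge from Loewner-monotonicity to optimality at a single matrix. Because the Loewner order is only a partial order, monotonicity of $f_1$ does not by itself pin down a maximizer; what makes the argument valid is that each uncertainty set is an order interval possessing a true greatest element $\overline{\mat R}$ that dominates every feasible point, so the monotone $f_1$ must attain its maximum there. Beyond that, the only thing to verify explicitly is the feasibility of $\overline{\mat R}$ (the $\succeq\mat 0$ and positive-definite-block conditions), which I have indicated is immediate; all remaining manipulations are routine block algebra, and the monotonicity engine supplied by Theorem \ref{thm:f-increasing-R-x} does the real work.
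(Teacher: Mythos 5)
Your proposal is correct and matches the paper's intended argument exactly: the paper states the result as a corollary of Theorem \ref{thm:f-increasing-R-x} precisely because each uncertainty set is an order interval whose Loewner-greatest element is feasible, so the monotone objective $f_1$ is maximized there, and Lemma \ref{lem:lin-BF-dist-robust-dual} then yields $\mat W^\star=\mat R^{\star\H}_{xs}\mat R^{\star-1}_x$ evaluated at that upper bound. Your explicit attention to the partial-order subtlety (monotonicity alone does not locate a maximizer without a greatest feasible element) and to the feasibility checks is a welcome tightening of what the paper leaves implicit, but it is the same proof.
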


Corollary \ref{cor:solutions-under-moments} implies that, in the sense of the same induced robust beamformers, the diagonal-loading uncertainty set \eqref{eq:uncertainty-set-diag-loading} and the generalized diagonal-loading uncertainty set \eqref{eq:generalized-uncertainty-set-diag-loading} are technically equivalent to the following trimmed versions.
\begin{definition}[Trimmed Diagonal-Loading Uncertainty Sets]\label{def:trimmed-sets}
By setting $\mat G \defeq \mat 0$ in \eqref{eq:generalized-uncertainty-set-diag-loading}, in terms of $\mat R_x$, \eqref{eq:generalized-uncertainty-set-diag-loading} reduces to the trimmed generalized diagonal-loading uncertainty set:
\begin{equation}\label{eq:trimmed-generalized-uncertainty-set-diag-loading}
\{\mat R_x|~\math R_x - \epsilon_0 \mat F \preceq \mat R_x \preceq \math R_x + \epsilon_0 \mat F,~\mat R_x \succeq \mat 0\}.
\end{equation}
The trimmed diagonal-loading uncertainty set
\begin{equation}\label{eq:trimmed-uncertainty-set-diag-loading}
\{\mat R_x|~\math R_x - \epsilon_0 \mat I_N \preceq \mat R_x \preceq \math R_x + \epsilon_0 \mat I_N,~\mat R_x \succeq \mat 0\},
\end{equation}
is obtained by letting $\mat F \defeq \mat I_N$. \stp
\end{definition}

The robust beamformers corresponding to the trimmed uncertainty sets \eqref{eq:trimmed-generalized-uncertainty-set-diag-loading} and \eqref{eq:trimmed-uncertainty-set-diag-loading} remain the same as defined in \eqref{eq:DRBF-GDL} and \eqref{eq:DRBF-DL}, respectively; cf. Theorem \ref{thm:f-increasing-R-x}.

As we can see from Corollary \ref{cor:solutions-under-moments}, the primary benefit of using the moment-based uncertainty sets is the computational simplicity due to the availability of closed-form solutions. If the uncertainty sets are constructed using the Wasserstein distance
$
\sqrt{\Tr[{\mat R + \math R - 2 (\math R^{1/2} \mat R \math R^{1/2})^{1/2}}]} \le \epsilon_0
$ 
or the KL divergence
$
\frac{1}{2} [\Tr[{\math R^{-1} \mat R - \bm I_{N+M}}] - \ln \det (\math R^{-1} \mat R) ] \le \epsilon_0
$ 
between $\cal{CN}(\mat 0, \mat R)$ and $\cal{CN}(\mat 0, \math R)$, the induced distributionally robust linear beamforming problems have no closed-form solutions, and therefore, are computationally prohibitive in practice. In addition, Corollary \ref{cor:solutions-under-moments} suggests that the distributionally robust beamformer under the multiplicative moment uncertainty set \eqref{eq:uncertainty-set-multiplicative-moment} is the same as the nominal beamformer $\math R^{\H}_{xs} \math R_{x}^{-1}$, which essentially does not introduce robustness in wireless signal estimation; this is another motivation why we construct new moment-based uncertainty sets in Definitions \ref{def:uncertainty-set-additive-moment}, \ref{def:uncertainty-set-diag-loading}, and \ref{def:generalized-uncertainty-set-diag-loading}. However, we can modify the multiplicative moment uncertainty set in Definition \ref{def:uncertainty-set-multiplicative-moment} to achieve robustness.
\begin{definition}[Modified Multiplicative Moment Uncertainty Set]\label{def:uncertainty-set-multiplicative-moment-modified}
The modified multiplicative moment uncertainty set of $\mat R$ is defined by the following constraint:
\begin{equation}\label{eq:uncertainty-set-multiplicative-moment-modified}
        \left[
        \begin{array}{cc}
           \theta_1 \math R_x  &  \math R_{xs} \\
           \math R^\H_{xs}  &  \theta_1 \math R_s
        \end{array}
        \right] 
        \preceq
        \left[
        \begin{array}{cc}
           \mat R_x  &  \mat R_{xs} \\
           \mat R^\H_{xs}  &  \mat R_s
        \end{array}
        \right] 
        \preceq
        \left[
        \begin{array}{cc}
           \theta_2 \math R_x  &  \math R_{xs} \\
           \math R^\H_{xs}  &  \theta_2 \math R_s
        \end{array}
        \right]
\end{equation}
for some $\theta_2 \ge 1 \ge \theta_1 \ge 0$ such that the left-most matrix is positive semi-definite. \stp
\end{definition}

The robust beamformer under the modified multiplicative moment uncertainty set \eqref{eq:uncertainty-set-multiplicative-moment-modified} is 
\begin{equation}
\mat W^\star_{\text{DR-MMM}} = \math R^\H_{xs} \cdot [\theta_2 \math R_x]^{-1}.
\end{equation}

In terms of the uncertainties of $\mat R_s$ and $\mat R_v$, Problem \eqref{eq:BF-dist-robust-max} can be explicitly written as
\begin{equation}\label{eq:BF-dist-robust-max-explicit}
    \begin{array}{cl}
        \displaystyle \max_{\mat R_s, \mat R_v} &\Tr \big[\mat R_s -\mat R_s \mat H^\H (\mat H \mat R_s \mat H^\H + \mat R_v)^{-1} \mat H \mat R_s \big] \\
        \st & d_1(\mat R_s, \math R_s) \le \epsilon_1, \\
        & d_2(\mat R_v, \math R_v) \le \epsilon_2, \\
        & \mat R_s \succeq \mat 0,~\mat R_v \succeq \mat 0,
    \end{array}
\end{equation}
for some similarity measures $d_1$ and $d_2$ and nonnegative scalars $\epsilon_1$ and $\epsilon_2$. For every given $(\mat R_s, \mat R_v)$, the associated beamformer is given in \eqref{eq:BF-Wiener}. When the uncertainty in the channel matrix must be investigated, we can consider
\begin{equation}\label{eq:BF-dist-robust-max-explicit-H}
    \begin{array}{cl}
        \displaystyle \max_{\mat H} &\Tr \big[\mat R_s -\mat R_s \mat H^\H (\mat H \mat R_s \mat H^\H + \mat R_v)^{-1} \mat H \mat R_s \big] \\
        \st & d_3(\mat H, \math H) \le \epsilon_3,
    \end{array}
\end{equation}
which is not a semi-definite program. In addition, the gradient of the objective function with respect to $\mat H$ is complicated to obtain. Hence, practically, we should avoid directly attacking Problem \eqref{eq:BF-dist-robust-max-explicit-H}; this can be done by directly considering the uncertainties of $\mat R_x$ and $\mat R_{xs}$ (i.e., $\mat R$) because the uncertainties of $\mat R_s$, $\mat R_v$, and $\mat H$ can be reflected in the uncertainties of $\mat R_x$ and $\mat R_{xs}$; cf. $\mat R_x = \mat H \mat R_s \mat H^\H + \mat R_v$ and $\mat R_{xs} = \mat H \mat R_s$.

In addition to Corollary \ref{cor:solutions-under-moments}, below we provide other concrete examples to further showcase the usefulness and applications of the distributionally robust beamforming formulations \eqref{eq:BF-dist-robust-max} and \eqref{eq:BF-dist-robust-max-explicit}, where the trimmed uncertainty sets are employed.

\begin{example}[Distributionally Robust Capon Beamforming]\label{exam:dist-robust-capon}
We consider a distributionally robust Capon beamforming problem under the trimmed uncertainty set \eqref{eq:trimmed-uncertainty-set-diag-loading}:
\[
    \begin{array}{cl}
        \displaystyle \min_{\mat W} \max_{\mat R_x} &\Tr \big[\mat W \mat R_{x} \mat W^\H - 2 \mat R_s + \mat R_{s}\big] \\
        \st & \mat W \mat H = \mat I_M, \\
        &  {\math R}_x - \epsilon_0 \mat I_N \preceq \mat R_x \preceq {\math R}_x + \epsilon_0 \mat I_N, \\
        & \mat R_x \succeq \mat 0,
    \end{array}
\]
which is equivalent, in the sense of the same solutions, to
\[
    \begin{array}{cl}
        \displaystyle \min_{\mat W} \max_{\mat R_x} &\Tr \big[\mat W \mat R_{x} \mat W^\H\big] \\
        \st & \mat W \mat H = \mat I_M, \\
        &  {\math R}_x - \epsilon_0 \mat I_N \preceq \mat R_x \preceq {\math R}_x + \epsilon_0 \mat I_N, \\
        & \mat R_x \succeq \mat 0.
    \end{array}
\]
According to Theorem \ref{thm:f-increasing-R-x}, the above display is equivalent to
\[
    \begin{array}{cl}
        \displaystyle \min_{\mat W} &\Tr \big[\mat W {\math R}_x  \mat W^\H\big] + \epsilon_0 \cdot \Tr \big[\mat W \mat W^\H\big] \\
        \st & \mat W \mat H = \mat I_M.
    \end{array}
\]
The above formulation is the squared-$F$-norm--regularized Capon beamformer \cite[Eq.~(10)]{lorenz2005robust} whose solution is
\begin{equation}\label{eq:diag-loading-capon}
\begin{array}{l}
\mat W^\star_{\text{DR-Capon}} = [\mat H^\H (\math R_x + \epsilon_0 \mat I_N)^{-1} \mat H]^{-1} \cdot \\
\quad \quad \quad \quad \quad \quad \quad \quad \quad \quad \quad \quad \mat H^\H (\math R_x + \epsilon_0 \mat I_N)^{-1},
\end{array}
\end{equation}
which is the diagonal-loading Capon beamformer.
\stp 
\end{example}

\begin{example}[Eigenvalue Thresholding]\label{exam:eigen-thres}
Suppose that $\math R_x$ admits the eigenvalues of $\operatorname{diag}\{\lambda_1, \lambda_2, \ldots, \lambda_N\}$ in descending order and the eigenvectors in $\mat Q$ (columns). Let $0 \le \mu \le 1$ be a shrinking coefficient. If we assume $\mat R_x \preceq \math R_{x,\text{thr}}$ where 
\begin{equation}\label{eq:eigen-thres}
\begin{array}{l}
\math R_{x,\text{thr}} \defeq \\
\quad \mat Q 
\left[ 
\begin{array}{cccc}
    \lambda_1 & & & \\
    & \max\{\mu \lambda_1, \lambda_2\}& & \\
    & & \ddots & \\
    & & & \max\{\mu \lambda_1, \lambda_N\}
\end{array}
\right] 
\mat Q^{-1},
\end{array}
\end{equation}
we have the distributionally robust beamformer
\begin{equation}\label{eq:eigen-thres-BF}
    \mat W^\star_{\text{DR-ET}} = \mat R^{\H}_{xs} \math R^{-1}_{x,\text{thr}},
\end{equation}
which is known as the eigenvalue thresholding method \cite{harmanci2000relationships}, \cite[Eq.~(12)]{lorenz2005robust}.
\stp
\end{example}

\begin{example}[Distributionally Robust Beamforming for Uncertain $\mat R_s$ and $\mat R_v$]\label{exam:uncertainty-M-R}
Consider Problem \eqref{eq:BF-dist-robust-max-explicit}. Since the objective of \eqref{eq:BF-dist-robust-max-explicit} is increasing in both $\mat R_s$ and $\mat R_v$,\footnote{This claim can be routinely proven in analogy to Theorem \ref{thm:f-increasing-R-x} and a real-space case in \cite[Thm.~1]{wang2023distributionally}.} if 
\[
\math R_s - \epsilon_1 \mat I_M \preceq \mat R_s \preceq \math R_s + \epsilon_1 \mat I_M,
\]
we have a distributionally robust beamformer
\begin{equation}\label{eq:diag-loading-using-H}
\begin{array}{cl}
    \mat W^\star_{\text{DR}} &= (\math R_s + \epsilon_1 \mat I_M) \mat H^\H [\mat H (\math R_s + \epsilon_1 \mat I_M) \mat H^\H  + \mat R_v]^{-1} \\
    &= (\math R_s + \epsilon_1 \mat I_M) \mat H^\H [\mat H \math R_s \mat H^\H + \mat R_v + \epsilon_1 \mat H \mat H^\H]^{-1};
\end{array}
\end{equation}
if instead
\begin{equation}\label{eq:eps-perturb-channel}
\math R_s - \epsilon_1 \mat H^\H (\mat H \mat H^\H)^{-2} \mat H \preceq \mat R_s \preceq \math R_s + \epsilon_1 \mat H^\H (\mat H \mat H^\H)^{-2} \mat H,
\end{equation}
we have
\begin{equation}\label{eq:diag-loading-M}
\begin{array}{l}
\mat W^\star_{\text{DR}} = [\math R_s \mat H^\H + \epsilon_1 \mat H^\H (\mat H \mat H^\H)^{-1}] \cdot \\ 
\quad \quad \quad \quad \quad [\mat H \math R_s \mat H^\H + \mat R_v + \epsilon_1 \mat I_N]^{-1},
\end{array}
\end{equation}
which is a modified diagonal-loading beamformer. 
On the other hand, if
\[
\math R_v - \epsilon_2 \mat I_N \preceq \mat R_v \preceq \math R_v + \epsilon_2 \mat I_N,
\]
we have
\begin{equation}\label{eq:diag-loading-R}
\mat W^\star_{\text{DR}} = \mat R_s \mat H^\H [\mat H \mat R_s \mat H^\H + \math R_v + \epsilon_2 \mat I_N]^{-1},
\end{equation}
which is also a diagonal-loading beamformer. \stp 
\end{example}

Motivated by Corollary \ref{cor:solutions-under-moments} and Examples \ref{exam:dist-robust-capon}$\sim$\ref{exam:uncertainty-M-R}, as well as the trimmed uncertainty sets in Definition \ref{def:trimmed-sets}, we have the following important theorem, which justifies the popular ridge regression in machine learning.

\begin{theorem}[Ridge Regression and Tikhonov Regularization]\label{thm:ridge-regression}
Consider a linear regression problem on $(\rvec x, \rvec s)$, i.e.,
\[
    \rvec s = \mat W \rvec x + \rvec e,
\]
where $\rvec e$ denotes the error term and the distributionally robust estimator of $\mat W$, i.e.,
\[
    \min_{\mat W \in \C^{M \times N}} \max_{\P_{\rvec x, \rvec s} \in \cal U_{\rvec x, \rvec s}} \Tr \E_{\rvec x, \rvec s}[\mat W \rvec x - \rvec s][\mat W \rvec x - \rvec s]^\H,
\]
which can be particularized to \eqref{eq:BF-dist-robust}. Supposing that the second-order moment of $\rvec x$ is uncertain and quantified as
\[
{\math R}_x - \epsilon_0 \mat I_N \preceq \mat R_x \preceq {\math R}_x + \epsilon_0 \mat I_N,
\]
then the distributionally robust estimator of $\mat W$ becomes a ridge regression (i.e., squared-$F$-norm regularized) method
\[
    \displaystyle \min_{\mat W} \Tr \big[\mat W \math R_{x} \mat W^\H - \mat W \math R_{xs} - \math R^\H_{xs} \mat W^\H + \math R_{s}\big] + \epsilon_0 \Tr \big[\mat W \mat W^\H\big].
\]
The regularization term becomes $\Tr \big[\mat W \mat F \mat W^\H\big]$, which is known as the Tikhonov regularizer, if 
\[
{\math R}_x - \epsilon_0 \mat F \preceq \mat R_x \preceq {\math R}_x + \epsilon_0 \mat F
\] 
for some $\mat F \succeq \mat 0$.
\end{theorem}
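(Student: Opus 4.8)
The plan is to dispatch the inner maximization in closed form for each fixed $\mat W$ and then read off the ridge structure from the outer minimization; no minimax interchange will be needed. First I would note that, because only the second-order moment $\mat R_x$ is declared uncertain while $\math R_{xs}$ and $\math R_s$ are held at their nominal values, the inner $\max_{\P_{\rvec x, \rvec s} \in \cal U_{\rvec x, \rvec s}}$ collapses to a maximization of $\Tr[\mat W \mat R_x \mat W^\H - \mat W \math R_{xs} - \math R_{xs}^\H \mat W^\H + \math R_s]$ over $\mat R_x$ ranging in the order interval $\math R_x - \epsilon_0 \mat I_N \preceq \mat R_x \preceq \math R_x + \epsilon_0 \mat I_N$, exactly as in the moment reduction that produces \eqref{eq:BF-dist-robust}. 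The constraint $\mat R_x \succeq \mat 0$ is automatically met on this interval, since $\math R_x - \epsilon_0 \mat I_N \succeq \mat 0$ by the standing hypothesis of Definition~\ref{def:uncertainty-set-diag-loading}.

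The key observation is that, for fixed $\mat W$, the objective is affine in $\mat R_x$: all terms but $\Tr[\mat W \mat R_x \mat W^\H]$ are constant, and by cyclicity $\Tr[\mat W \mat R_x \mat W^\H] = \Tr[\mat W^\H \mat W \mat R_x]$. Since $\mat W^\H \mat W \succeq \mat 0$, and since $\Tr[\mat A \mat B] \ge 0$ whenever $\mat A, \mat B \succeq \mat 0$, the map $\mat R_x \mapsto \Tr[\mat W^\H \mat W \mat R_x]$ is nondecreasing in the Loewner order. Consequently it attains its maximum over the order interval at the greatest feasible point $\mat R_x^\star = \math R_x + \epsilon_0 \mat I_N$, and because this maximizer is the \emph{same} endpoint for every $\mat W$, the inner maximum is evaluated pointwise and no saddle-point theorem is invoked.

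Substituting $\mat R_x^\star = \math R_x + \epsilon_0 \mat I_N$ and splitting off the loading term via $\Tr[\mat W (\epsilon_0 \mat I_N) \mat W^\H] = \epsilon_0 \Tr[\mat W \mat W^\H]$ turns the saddle value into $\Tr[\mat W \math R_x \mat W^\H - \mat W \math R_{xs} - \math R_{xs}^\H \mat W^\H + \math R_s] + \epsilon_0 \Tr[\mat W \mat W^\H]$; minimizing this over $\mat W$ is precisely the claimed ridge-regression problem. The Tikhonov variant is identical: for the interval $\math R_x - \epsilon_0 \mat F \preceq \mat R_x \preceq \math R_x + \epsilon_0 \mat F$ with $\mat F \succeq \mat 0$, the same monotonicity places the maximizer at $\math R_x + \epsilon_0 \mat F$, and $\Tr[\mat W (\epsilon_0 \mat F) \mat W^\H] = \epsilon_0 \Tr[\mat W \mat F \mat W^\H]$ yields the regularizer $\Tr[\mat W \mat F \mat W^\H]$.

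The only nontrivial ingredient is the Loewner-monotonicity of $\mat R_x \mapsto \Tr[\mat W^\H \mat W \mat R_x]$, i.e.\ that the trace of a product of two positive-semidefinite matrices is nonnegative; the remainder is substitution. I expect this to be markedly easier than Theorem~\ref{thm:f-increasing-R-x}, whose objective $f_2$ depends on $\mat R_x$ through $\mat R_x^{-1}$ and thus needs a genuine operator-monotonicity argument, whereas here the $\mat W$-fixed objective is linear in $\mat R_x$ and the endpoint maximizer is immediate. The one point deserving care is verifying that the endpoint $\math R_x + \epsilon_0 \mat I_N$ (respectively $\math R_x + \epsilon_0 \mat F$) is genuinely feasible and dominant, which follows because it is the Loewner supremum of the interval and is itself positive semidefinite.
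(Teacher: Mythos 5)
Your proof is correct, and it takes a more elementary route than the paper's. The paper proves this theorem by citing Lemma~\ref{lem:lin-BF-dist-robust-dual} (which invokes Sion's minimax theorem to exchange $\min_{\mat W}$ and $\max_{\mat R}$, solves the inner minimization by the Wiener formula, and reduces everything to maximizing $f_1(\mat R)$) together with Theorem~\ref{thm:f-increasing-R-x} (monotonicity of $f_1$, which places the maximizer at the upper endpoint), and then performs exactly the algebraic split $\Tr[\mat W(\math R_x+\epsilon_0\mat F)\mat W^\H] = \Tr[\mat W\math R_x\mat W^\H]+\epsilon_0\Tr[\mat W\mat F\mat W^\H]$ that you also use. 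You instead observe that, with only $\mat R_x$ uncertain, the inner objective is affine in $\mat R_x$ with positive-semidefinite coefficient $\mat W^\H\mat W$, so for \emph{every} fixed $\mat W$ the inner maximum sits at the same Loewner-largest point $\math R_x+\epsilon_0\mat I_N$ (resp.\ $\math R_x+\epsilon_0\mat F$); the min-max then collapses by direct substitution with no saddle-point theorem. The two arguments share the same kernel --- nonnegativity of $\Tr[\mat A\mat B]$ for $\mat A,\mat B\succeq\mat 0$, which is also what drives the proof of Theorem~\ref{thm:f-increasing-R-x} via the block matrix $[\mat W^\H\mat W,\,-\mat W^\H;\,-\mat W,\,\mat I_M]\succeq\mat 0$ --- but yours is self-contained and avoids both the duality machinery and the invertibility/compactness hypotheses that Lemma~\ref{lem:lin-BF-dist-robust-dual} carries; the price is that it exploits the special structure of this theorem (only $\mat R_x$ perturbed, $\mat W$-independent maximizer) and does not generalize to uncertainty in the full coupled matrix $\mat R$, which is what the paper's heavier route is built to handle.
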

\begin{proof}
This is due to Lemma \ref{lem:lin-BF-dist-robust-dual} and Theorem \ref{thm:f-increasing-R-x}. Just note that 
$
\Tr \big[\mat W (\math R_{x} + \epsilon_0 \mat F) \mat W^\H - \mat W \math R_{xs} - \math R^\H_{xs} \mat W^\H + \math R_{s}\big] = \\
\Tr \big[\mat W \math R_{x} \mat W^\H - \mat W \math R_{xs} - \math R^\H_{xs} \mat W^\H + \math R_{s}\big] + \epsilon_0 \Tr \big[\mat W \mat F \mat W^\H\big]
$. This completes the proof.
\stp
\end{proof}

Note that in Theorem \ref{thm:ridge-regression}, the second-order moment of $\rvec s$ is not considered because it does not influence the optimal solution of $\mat W$: i.e., the optimal solution of $\mat W$ does not depend on the value of $\mat R_{s}$. Theorem \ref{thm:ridge-regression} gives a new theoretical interpretation of the popular ridge regression in machine learning from the perspective of distributional robustness against second-moment uncertainties of the feature vector $\rvec x$; another interpretation of ridge regression from the perspective of distributional robustness under martingale constraints is identified in \cite[Ex.~3.3]{li2022tikhonov}. When the uncertainty is quantified by the Wasserstein distance, a similar result can be seen in \cite[Prop.~3]{kuhn2019wasserstein}, \cite[Prop.~2]{blanchet2019robust}, which however is not a ridge regression formulation because in \cite[Prop.~3]{kuhn2019wasserstein} and \cite[Prop.~2]{blanchet2019robust}, the loss function is square-rooted and the norm regularizer is not squared; cf. also \cite[Rem.~18 and 19]{shafieezadeh2019regularization}.
The corollary below justifies the rationale of any norm-regularized method.

\begin{corollary}\label{cor:any-norm-regularization}
The following squared-norm-regularized beamforming formulation can combat the distributional uncertainty:
\begin{equation}\label{eq:ridge-any-norm}
    \displaystyle \min_{\mat W} \Tr \big[\mat W \math R_{x} \mat W^\H - \mat W \math R_{xs} - \math R^\H_{xs} \mat W^\H + \math R_{s}\big] + \lambda \|\mat W\|^2,
\end{equation}
where $\|\cdot\|$ denotes any matrix norm. This is because all norms on $\C^{M \times N}$ are equivalent; hence, there exists some $\lambda \ge 0$ such that $\lambda \|\mat W\|^2 \ge \epsilon_0 \|\mat W\|^2_F = \epsilon_0 \Tr \big[\mat W \mat W^\H\big]$. As a result, \eqref{eq:ridge-any-norm} can upper bound the ridge cost in Theorem \ref{thm:ridge-regression}.
\stp
\end{corollary}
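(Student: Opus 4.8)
The plan is to rely solely on the equivalence of norms on the finite-dimensional space $\C^{M \times N}$, combined with Theorem \ref{thm:ridge-regression}, which has already identified the ridge cost as the worst-case mean-squared error under the diagonal-loading uncertainty set \eqref{eq:uncertainty-set-diag-loading}. First I would invoke the standard fact that, since $\C^{M \times N}$ is finite-dimensional, every matrix norm $\|\cdot\|$ is equivalent to the Frobenius norm $\|\cdot\|_F$; concretely, there exists a constant $c_1 > 0$ such that $\|\mat W\| \ge c_1 \|\mat W\|_F$ for all $\mat W \in \C^{M \times N}$, and squaring yields $\|\mat W\|^2 \ge c_1^2 \|\mat W\|_F^2$.

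The second step is to fix the regularization weight. Setting $\lambda \defeq \epsilon_0 / c_1^2$ (or any larger value), the preceding inequality gives $\lambda \|\mat W\|^2 \ge \epsilon_0 \|\mat W\|_F^2 = \epsilon_0 \Tr[\mat W \mat W^\H]$ for every $\mat W$. Adding the common nominal term $\Tr[\mat W \math R_x \mat W^\H - \mat W \math R_{xs} - \math R_{xs}^\H \mat W^\H + \math R_s]$ to both sides then shows that the objective of \eqref{eq:ridge-any-norm} dominates, pointwise in $\mat W$, the ridge cost appearing in Theorem \ref{thm:ridge-regression}.

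Finally I would transfer this pointwise domination to the optimal values: because the objective of \eqref{eq:ridge-any-norm} is at least the ridge cost at every $\mat W$, its minimum upper-bounds the minimal ridge cost, which by Theorem \ref{thm:ridge-regression} coincides with the minimal worst-case mean-squared error $\min_{\mat W} \max_{\P_{\rvec x, \rvec s} \in \cal U_{\rvec x, \rvec s}} \Tr \E_{\rvec x, \rvec s}[\mat W \rvec x - \rvec s][\mat W \rvec x - \rvec s]^\H$ over the diagonal-loading ball. Hence minimizing \eqref{eq:ridge-any-norm} certifies an upper bound on the worst-case loss and thereby combats the distributional uncertainty. I do not anticipate a substantive obstacle; the only point requiring care is conceptual rather than technical, namely that the argument delivers a sufficient upper bound on the robust cost rather than an exact reformulation, so the robustness holds in the bounding (certificate) sense, and the constant $c_1$—hence the admissible $\lambda$—depends on the particular norm chosen.
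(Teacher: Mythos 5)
Your proposal is correct and follows essentially the same route as the paper, whose justification is embedded in the corollary statement itself: norm equivalence on the finite-dimensional space $\C^{M\times N}$ yields $\lambda\|\mat W\|^2 \ge \epsilon_0\|\mat W\|_F^2$ for a suitable $\lambda$, so the objective of \eqref{eq:ridge-any-norm} pointwise dominates the ridge cost of Theorem~\ref{thm:ridge-regression}, which is the worst-case mean-squared error over the diagonal-loading uncertainty set. Your added care in making the constant $c_1$ explicit and in noting that the result is an upper-bounding certificate rather than an exact reformulation is consistent with the paper's intent.
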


Motivated by Theorem \ref{thm:ridge-regression}, the following corollary is immediate, which gives another interpretation of ridge regression and Tikhonov regularization from the perspective of data augmentation through data perturbation 
(cf. noise injection in image \cite{shorten2019survey} and speech \cite{saon2019sequence} processing).

\begin{corollary}[Data Augmentation for Linear Regression]\label{cor:adv-learning-linear}
Consider a linear regression problem on $(\rvec x, \rvec s)$ with data perturbation vectors $(\rvec \Delta_{x}, \rvec \Delta_{s})$
\[
    (\rvec s + \rvec \Delta_{s}) = \mat W (\rvec x + \rvec \Delta_{x}) + \rvec e,
\]
and the distributionally robust estimator of $\mat W$
\[
\begin{array}{l}
    \displaystyle \min_{\mat W \in \C^{M \times N}} \max_{\P_{\rvec \Delta_{x}, \rvec \Delta_{s}} \in \cal U_{{\rvec \Delta_{x}, \rvec \Delta_{s}}}} \Tr \E_{(\rvec x, \rvec s) \sim \Ph_{\rvec x, \rvec s}} \E_{\rvec \Delta_{x}, \rvec \Delta_{s}} \Big\{ \\
    
    \quad [\mat W (\rvec x + \rvec \Delta_{x}) - (\rvec s + \rvec \Delta_{s})][\mat W (\rvec x + \rvec \Delta_{x}) - (\rvec s + \rvec \Delta_{s})]^\H \Big\}.
\end{array}
\]
Suppose that $\rvec \Delta_{x}$ is uncorrelated with $\rvec x$, with $\rvec s$, and with $\rvec \Delta_{s}$; in addition, $\rvec \Delta_{s}$ is uncorrelated with $\rvec x$. If the second-order moment of $\rvec \Delta_{x}$ is upper bounded as 
$
\E \rvec \Delta_{x} \rvec \Delta^\H_{x} \preceq \epsilon_0 \mat I_N
$, 
then the distributionally robust estimator of $\mat W$ becomes a ridge regression (i.e., squared-$F$-norm regularized) method
\[
    \displaystyle \min_{\mat W} \Tr \big[\mat W \math R_{x} \mat W^\H - \mat W \math R_{xs} - \math R^\H_{xs} \mat W^\H + \math R_{s}\big] + \epsilon_0 \Tr \big[\mat W \mat W^\H\big].
\]
The regularization term becomes $\Tr \big[\mat W \mat F \mat W^\H\big]$, which is known as the Tikhonov regularizer, if
$
\E \rvec \Delta_{x} \rvec \Delta^\H_{x} \preceq \epsilon_0 \mat F
$, 
for some $\mat F \succeq \mat 0$.  \stp
\end{corollary}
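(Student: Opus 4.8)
The plan is to reduce the data-augmented min--max problem to the ridge formulation already established in Theorem \ref{thm:ridge-regression}. First I would expand the perturbed residual $\mat W(\rvec x + \rvec \Delta_x) - (\rvec s + \rvec \Delta_s)$ and push the trace-expectation through, writing the objective as $\Tr[\mat W \mat A_{uu} \mat W^\H - \mat W \mat A_{us} - \mat A_{us}^\H \mat W^\H + \mat A_{ww}]$, where $\mat A_{uu} \defeq \E(\rvec x + \rvec \Delta_x)(\rvec x + \rvec \Delta_x)^\H$, $\mat A_{us} \defeq \E(\rvec x + \rvec \Delta_x)(\rvec s + \rvec \Delta_s)^\H$, and $\mat A_{ww} \defeq \E(\rvec s + \rvec \Delta_s)(\rvec s + \rvec \Delta_s)^\H$, with the total expectation $\E = \E_{(\rvec x, \rvec s)} \E_{\rvec \Delta_x, \rvec \Delta_s}$.

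Second, I would evaluate these three moment matrices using the uncorrelatedness hypotheses together with the zero-mean convention. Since $\rvec \Delta_x$ is uncorrelated with $\rvec x$, the cross terms $\E \rvec x \rvec \Delta_x^\H$ and $\E \rvec \Delta_x \rvec x^\H$ vanish, giving $\mat A_{uu} = \math R_x + \E \rvec \Delta_x \rvec \Delta_x^\H$. Because $\rvec \Delta_x$ is uncorrelated with $\rvec s$ and with $\rvec \Delta_s$, and $\rvec \Delta_s$ is uncorrelated with $\rvec x$, all three perturbation cross terms in $\mat A_{us}$ drop and $\mat A_{us} = \math R_{xs}$. Finally $\mat A_{ww} = \math R_s + \E \rvec s \rvec \Delta_s^\H + \E \rvec \Delta_s \rvec s^\H + \E \rvec \Delta_s \rvec \Delta_s^\H$; crucially this matrix is entirely free of $\mat W$, so it contributes only a $\mat W$-independent constant to the cost and cannot alter the minimizer.

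Third, I would perform the inner maximization. The only term coupling $\mat W$ with $\P_{\rvec \Delta_x, \rvec \Delta_s}$ is $\Tr[\mat W (\E \rvec \Delta_x \rvec \Delta_x^\H) \mat W^\H] = \Tr[(\mat W^\H \mat W)(\E \rvec \Delta_x \rvec \Delta_x^\H)]$. Writing $\mat B \defeq \mat W^\H \mat W \succeq \mat 0$ and using $\E \rvec \Delta_x \rvec \Delta_x^\H \preceq \epsilon_0 \mat I_N$, the elementary fact that $\Tr[\mat B \mat C] \ge \mat 0$ whenever $\mat B, \mat C \succeq \mat 0$, applied to $\mat C = \epsilon_0 \mat I_N - \E \rvec \Delta_x \rvec \Delta_x^\H \succeq \mat 0$, shows this term is maximized at $\E \rvec \Delta_x \rvec \Delta_x^\H = \epsilon_0 \mat I_N$, attaining $\epsilon_0 \Tr[\mat W \mat W^\H]$; equivalently, this is the monotonicity of $f_2$ in Theorem \ref{thm:f-increasing-R-x} applied to the effective feature covariance $\math R_x + \E \rvec \Delta_x \rvec \Delta_x^\H$. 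Substituting back reproduces exactly the ridge objective of Theorem \ref{thm:ridge-regression} up to the harmless constant, so the two problems share a minimizer. The Tikhonov variant is identical: replacing the bound by $\E \rvec \Delta_x \rvec \Delta_x^\H \preceq \epsilon_0 \mat F$ makes the worst case $\epsilon_0 \mat F$ and the penalty $\epsilon_0 \Tr[\mat W \mat F \mat W^\H]$.

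I expect the main subtlety to be the bookkeeping of the $\rvec \Delta_s$-dependent terms: one must verify that every term carrying $\rvec \Delta_s$ either cancels by uncorrelatedness (the cross terms with $\rvec x$ and with $\rvec \Delta_x$) or collapses into $\mat A_{ww}$, which is $\mat W$-independent; otherwise a leftover $\mat W$-coupled $\rvec \Delta_s$ term would spoil the clean ridge form. A secondary point is to state the uncorrelatedness hypotheses as second-moment conditions under the zero-mean convention, so that ``uncorrelated'' literally forces the relevant expectations to $\mat 0$, which is precisely what makes $\mat A_{us} = \math R_{xs}$ hold with no residual perturbation bias.
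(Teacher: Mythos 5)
Your proof is correct and is precisely the computation the paper leaves implicit: the paper offers no separate argument for this corollary (it is declared ``immediate'' from Theorem~\ref{thm:ridge-regression}), and your expansion of the perturbed residual, the cancellation of all cross terms via the uncorrelatedness hypotheses under the zero-mean convention, the observation that the $\rvec \Delta_s$-dependent block $\mat A_{ww}$ is $\mat W$-independent, and the worst-case step using $\Tr[\mat B \mat C] \ge 0$ for $\mat B, \mat C \succeq \mat 0$ applied to $\mat C = \epsilon_0 \mat I_N - \E \rvec \Delta_x \rvec \Delta_x^\H$ constitute exactly the intended reduction to the ridge objective. The only cosmetic caveat is that attaining (rather than merely upper-bounding) the worst case requires $\cal U_{\rvec \Delta_x, \rvec \Delta_s}$ to contain a perturbation law with $\E \rvec \Delta_x \rvec \Delta_x^\H = \epsilon_0 \mat I_N$ (resp.\ $\epsilon_0 \mat F$) while simultaneously realizing the supremum of the $\mat W$-free term, a point the paper also takes for granted and which in any case does not affect the minimizer in $\mat W$.
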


The second-order moment of $\rvec \Delta_s$ is not considered in Corollary \ref{cor:adv-learning-linear} as it does not influence the optimal value of $\mat W$.

\subsection{Complex Uncertainty Sets}\label{subsec:BF-nontrivial-sets}
Below we remark on more general construction methods for the uncertainty set of $\mat R$ using the Wasserstein distance and the $F$-norm, beyond the moment-based methods in Definitions \ref{def:uncertainty-set-additive-moment}$\sim$\ref{def:uncertainty-set-multiplicative-moment-modified}. However, note that such complicated approaches are computationally prohibitive in practice when $N$ or $M$ is large.

\subsubsection{Wasserstein Distributionally Robust Beamforming}
We start with the Wasserstein distance:
\begin{equation}\label{eq:BF-dist-robust-max-wasserstein}
    \begin{array}{cl}
        \displaystyle \max_{\mat R} &\Tr \big[ -\mat R^{\H}_{xs}\mat R^{-1}_{x} \mat R_{xs} + \mat R_{s}\big] \\
        \st & \tr{\mat R + \math R - 2 (\math R^{1/2} \mat R \math R^{1/2})^{1/2} } \le \epsilon^2_0 \\
        & \mat R \succeq \mat 0,~\mat R_x \succ \mat 0.
    \end{array}
\end{equation}
The first constraint in the above display is a particularization of the Wasserstein distance between $\cal {CN}(\mat 0, \mat R)$ and $\cal {CN}(\mat 0, \math R)$. 

Problem \eqref{eq:BF-dist-robust-max-wasserstein} is a nonlinear positive semi-definite program (P-SDP). However, we can give it a linear reformulation.
\begin{proposition}\label{prop:wasserstein-R-x}
Problem \eqref{eq:BF-dist-robust-max-wasserstein} can be equivalently reformulated into a linear P-SDP
\begin{equation}\label{eq:BF-dist-robust-max-wasserstein-sol}
    \begin{array}{cl}
        \displaystyle \max_{\mat R, \mat V, \mat U} & \Tr [\mat R_s - \mat V] \\
        \st 
        &\left[
        \begin{array}{cc}
          \mat V   &  \mat R^\H_{xs}\\
          \mat R_{xs}  & \mat R_x
        \end{array}
        \right] \succeq \mat 0  \\
        &\tr{\mat R + \math R - 2 \mat U} \le \epsilon^2_0 \\
        &\left[
        \begin{array}{cc}
          \math R^{1/2} \mat R \math R^{1/2}   &  \mat U\\
           \mat U  & \mat I_{N+M}
        \end{array}
        \right] \succeq \mat 0  \\
        &\mat R \succeq \mat 0,~\mat R_x \succ \mat 0,~\mat V \succeq \mat 0,~\mat U \succeq \mat 0.
    \end{array}
\end{equation}
\end{proposition}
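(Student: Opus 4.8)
The plan is to remove the matrix-square-root term $(\math R^{1/2} \mat R \math R^{1/2})^{1/2}$, which is exactly what renders \eqref{eq:BF-dist-robust-max-wasserstein-lite} a \emph{nonlinear} P-SDP, by introducing the auxiliary Hermitian variable $\mat U \succeq \mat 0$ as a surrogate lower-bound for that term and then linearising the associated matrix inequality through the Schur complement. Since the objective $\Tr \mat R$ is identical in both formulations and does not involve $\mat U$, it suffices to show that the two feasible sets, projected onto the $\mat R$-coordinate, coincide; equivalently, that $\mat R$ is feasible in \eqref{eq:BF-dist-robust-max-wasserstein-lite} if and only if there exists $\mat U \succeq \mat 0$ with $(\mat R, \mat U)$ feasible in \eqref{eq:BF-dist-robust-max-wasserstein-sol}. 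Throughout I write $\mat A \defeq \math R^{1/2} \mat R \math R^{1/2} \succeq \mat 0$.

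First I would record the algebraic content of the linear matrix inequality. Because $\mat U$ is Hermitian and the block $\mat I_{N+M}$ is invertible, the Schur complement gives
\[
\begin{bmatrix} \mat A & \mat U \\ \mat U & \mat I_{N+M} \end{bmatrix} \succeq \mat 0
\quad\Longleftrightarrow\quad
\mat A - \mat U^2 \succeq \mat 0 ,
\]
so the LMI in \eqref{eq:BF-dist-robust-max-wasserstein-sol} is precisely $\mat U^2 \preceq \mat A$ for $\mat U \succeq \mat 0$. Hence feasibility in \eqref{eq:BF-dist-robust-max-wasserstein-sol} reduces to the existence of $\mat U \succeq \mat 0$ with $\mat U^2 \preceq \mat A$ and $\Tr \mat R + \Tr \math R - 2 \Tr \mat U \le \epsilon_0^2$.

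For the direction ``\eqref{eq:BF-dist-robust-max-wasserstein-lite} $\Rightarrow$ \eqref{eq:BF-dist-robust-max-wasserstein-sol}'' I would simply exhibit the slack variable $\mat U \defeq \mat A^{1/2} = (\math R^{1/2} \mat R \math R^{1/2})^{1/2}$. Then $\mat U \succeq \mat 0$ and $\mat A - \mat U^2 = \mat 0 \succeq \mat 0$, so the LMI holds, while $\Tr \mat U = \Tr (\math R^{1/2} \mat R \math R^{1/2})^{1/2}$ makes the trace constraint of \eqref{eq:BF-dist-robust-max-wasserstein-sol} coincide verbatim with that of \eqref{eq:BF-dist-robust-max-wasserstein-lite}; the side constraints $\mat R \succeq \mat 0$, $\mat R_x \succ \mat 0$ are carried unchanged. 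Thus $(\mat R, \mat U)$ is feasible.

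The converse, ``\eqref{eq:BF-dist-robust-max-wasserstein-sol} $\Rightarrow$ \eqref{eq:BF-dist-robust-max-wasserstein-lite}'', is where the only genuine analytic input enters, and I expect it to be the main obstacle, precisely because the needed monotonicity holds for the square root but \emph{fails} for the square. Given feasibility in \eqref{eq:BF-dist-robust-max-wasserstein-sol}, I have $\mat U \succeq \mat 0$ with $\mat U^2 \preceq \mat A$. I would invoke operator monotonicity of the matrix square root (the L\"owner--Heinz inequality): the map $\mat T \mapsto \mat T^{1/2}$ is operator monotone on the Hermitian positive-semidefinite cone, so $\mat U^2 \preceq \mat A$ yields $\mat U = (\mat U^2)^{1/2} \preceq \mat A^{1/2}$, and taking traces gives $\Tr \mat U \le \Tr \mat A^{1/2} = \Tr (\math R^{1/2} \mat R \math R^{1/2})^{1/2}$. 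Consequently,
\[
\Tr \mat R + \Tr \math R - 2 \Tr (\math R^{1/2} \mat R \math R^{1/2})^{1/2}
\le \Tr \mat R + \Tr \math R - 2 \Tr \mat U \le \epsilon_0^2 ,
\]
so $\mat R$ is feasible in \eqref{eq:BF-dist-robust-max-wasserstein-lite}. Combining the two directions, the $\mat R$-feasible sets agree and the optimal values coincide. I would close by noting that every ingredient---the Schur complement, the unique PSD square root $\math R^{1/2}$, and L\"owner--Heinz---remains valid over complex Hermitian matrices, so no real-space detour is needed.
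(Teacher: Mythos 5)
Your proof is correct and follows the same route as the paper, which justifies the reformulation in one line ``by applying the Schur complement.'' You additionally supply the converse direction the paper leaves implicit---that relaxing $\mat U = (\math R^{1/2}\mat R\math R^{1/2})^{1/2}$ to $\mat U^2 \preceq \math R^{1/2}\mat R\math R^{1/2}$ does not cut the $\mat R$-feasible set, via operator monotonicity of the square root---which is exactly the detail needed to make the equivalence rigorous.
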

\begin{proof}
This is by applying the Schur complement. \stp
\end{proof}

Complex-valued linear P-SDP can be solved using, e.g., the YALMIP solver.\footnote{See \url{https://yalmip.github.io/inside/complexproblems/}.}

Suppose that $\mat R^\star$ solves \eqref{eq:BF-dist-robust-max-wasserstein-sol}. The corresponding Wasserstein distributionally robust beamformer is given as
\begin{equation}\label{eq:wasserstein-BF-R-x}
    \mat W^\star_{\text{DR-Wasserstein}} = \mat R^{\star \H}_{xs} \mat R^{\star -1}_x.
\end{equation}

Next, we separately investigate the uncertainties in $\math R_s$ and $\math R_v$. From \eqref{eq:BF-dist-robust-max-explicit}, we have
\begin{equation}\label{eq:BF-dist-robust-max-explicit-waserstein}
    \begin{array}{cl}
        \displaystyle \max_{\mat R_s, \mat R_v} &\Tr \big[\mat R_s -\mat R_s \mat H^\H (\mat H \mat R_s \mat H^\H + \mat R_v)^{-1} \mat H \mat R_s \big] \\
        \st 
        & \tr{\mat R_s + \math R_s - 2 (\math R_s^{1/2} \mat R_s \math R_s^{1/2})^{1/2} } \le \epsilon^2_1 \\
        & \tr{\mat R_v + \math R_v - 2 (\math R_v^{1/2} \mat R_v \math R_v^{1/2})^{1/2} } \le \epsilon^2_2 \\
        & \mat R_s \succeq \mat 0,~\mat R_v \succeq \mat 0,
    \end{array}
\end{equation}
where we ignore the uncertainty of $\mat H$ for technical tractability. Problem \eqref{eq:BF-dist-robust-max-explicit-waserstein} can be transformed into a linear P-SDP using a similar technique as in Proposition \ref{prop:wasserstein-R-x}. One can just introduce an inequality $\mat U \succeq \mat R_s \mat H^\H (\mat H \mat R_s \mat H^\H + \mat R)^{-1} \mat H \mat R_s$ and the objective function will become $\tr{\mat R_s - \mat U}$.

Suppose that $(\mat R_s^\star, \mat R_v^\star)$ solves \eqref{eq:BF-dist-robust-max-explicit-waserstein}. The corresponding Wasserstein distributionally robust beamformer is given as
\begin{equation}\label{eq:wasserstein-BF-M-R}
    \mat W^\star_{\text{DR-Wasserstein-Individual}} = \mat R_s^\star \mat H^\H [\mat H \mat R_s^\star \mat H^\H + \mat R_v^\star]^{-1}.
\end{equation}

\subsubsection{F-Norm Distributionally Robust Beamforming} Under the $F$-norm, we just need to replace the Wasserstein distance. 
To be specific, \eqref{eq:BF-dist-robust-max-wasserstein} becomes 
\begin{equation}\label{eq:BF-dist-robust-max-F-norm}
    \begin{array}{cl}
        \displaystyle \max_{\mat R} &\Tr \big[ -\mat R^{\H}_{xs}\mat R^{-1}_{x} \mat R_{xs} + \mat R_{s}\big] \\
        \st & \tr{(\mat R - \math R)^\H(\mat R - \math R)} \le \epsilon^2_0 \\
        & \mat R \succeq \mat 0,~\mat R_x \succ \mat 0.
    \end{array}
\end{equation}

The linear reformulation of the above display is given in the proposition below.

\begin{proposition}\label{prop:F-norm-R-x}
    The nonlinear P-SDP \eqref{eq:BF-dist-robust-max-F-norm} can be equivalently reformulated into a linear P-SDP
    \begin{equation}\label{eq:BF-dist-robust-max-F-norm-linear}
        \begin{array}{cl}
            \displaystyle \max_{\mat R, \mat V, \mat U} & \Tr [\mat R_s - \mat V] \\
            \st 
            &\left[
            \begin{array}{cc}
              \mat V   &  \mat R^\H_{xs}\\
              \mat R_{xs}  & \mat R_x
            \end{array}
            \right] \succeq \mat 0  \\
            & \tr{\mat U} \le \epsilon^2_0, \\
            
            & \left[
            \begin{array}{cc}
              \mat U   &  (\mat R - \math R)^\H\\
              (\mat R - \math R)   &  \mat I_{N+M}
            \end{array}
            \right] \succeq \mat 0, \\

            &\mat R \succeq \mat 0,~\mat R_x \succ \mat 0,~\mat V \succeq \mat 0,~\mat U \succeq \mat 0.
        \end{array}
    \end{equation}
\end{proposition}
\begin{proof}
This is by applying the Schur complement.
\stp
\end{proof}

\subsection{Multi-Frame Case: Dynamic Channel Evolution}\label{subsec:BF-multi-frame}
Each frame contains a pilot block used for beamformer design. Although the channel state information (CSI) may change from one frame to another, the CSI between the two consecutive frames is highly correlated. This correlation can benefit beamformer design across multiple frames. Suppose that $\{(\vec s_1, \vec x_1), (\vec s_2, \vec x_2), \ldots, (\vec s_L, \vec x_L)\}$ is the training data in the current frame and $\{(\vec s^\prime_1, \vec x^\prime_1), (\vec s^\prime_2, \vec x^\prime_2), \ldots, (\vec s^\prime_L, \vec x^\prime_L)\}$ is the history data in the immediately preceding frame. In such a case, the distributional difference between $\Ph_{\rvec x, \rvec s}$ and $\Ph_{\rvec x^{\prime}, \rvec s^{\prime}}$ is upper bounded, that is,
$
d(\Ph_{\rvec x, \rvec s},~\Ph_{\rvec x^{\prime}, \rvec s^{\prime}}) \le \epsilon^{\prime}
$ 
for some proper distance $d$ and a real number $\epsilon^{\prime} \ge 0$ where $\Ph_{\rvec x, \rvec s} \defeq \frac{1}{L} \sum^L_{i=1} \delta_{(\vec x_i, \vec s_i)}$ and $\Ph_{\rvec x^{\prime}, \rvec s^{\prime}} \defeq \frac{1}{L} \sum^L_{i=1} \delta_{(\vec x^{\prime}_i, \vec s^{\prime}_i)}$. 

Since a beamformer $\mat W = \cal F(\P_{\rvec x, \rvec s})$ is a continuous functional $\cal F (\cdot)$ of data distribution $\P_{\rvec x, \rvec s}$, cf. \eqref{eq:opt-BF}, we have 
$
\|\mat W - \mat W^\prime\|_F = \|\cal F(\Ph_{\rvec x, \rvec s}) - \cal F(\Ph_{\rvec x^{\prime}, \rvec s^{\prime}})\|_F 
\le C \cdot d(\Ph_{\rvec x, \rvec s},~\Ph_{\rvec x^{\prime}, \rvec s^{\prime}}) 
\le \epsilon
$
for some positive constant $C \ge 0$ and upper bound $\epsilon \ge 0$ where $\mat W^\prime$ is the beamformer associated with $\Ph_{\rvec x^{\prime}, \rvec s^{\prime}}$ in the previous frame. Therefore, the beamforming problem \eqref{eq:opt-BF-explicit} becomes a constrained problem
\[
\begin{array}{cl}
\displaystyle \min_{\mat W} & \Tr \big[\mat W \mat R_{x} \mat W^\H - \mat W \mat R_{xs} - \mat R^\H_{xs} \mat W^\H + \mat R_{s}\big] \\
\st & \Tr[\mat W - \mat W^\prime][\mat W - \mat W^\prime]^{\H} \le \epsilon^2.
\end{array}
\]
By the Lagrange duality theory, it is equivalent to
\begin{equation}\label{eq:opt-BF-explicit-multi-frame}
\begin{array}{l}
\displaystyle \min_{\mat W} \Tr \big[\mat W \mat R_{x} \mat W^\H - \mat W \mat R_{xs} - \mat R^\H_{xs} \mat W^\H + \mat R_{s}\big] + \\
\quad \quad \quad \quad \quad \quad \lambda \cdot \Tr[\mat W - \mat W^\prime][\mat W - \mat W^\prime]^{\H} \\

= \displaystyle \min_{\mat W} \Tr \big[\mat W (\mat R_{x} + \lambda \mat I_N) \mat W^\H - \mat W (\mat R_{xs} + \lambda \mat W^{\prime \H}) - \\
\quad \quad \quad \quad \quad \quad (\mat R_{xs} + \lambda \mat W^{\prime \H})^\H \mat W^\H + (\mat R_{s} + \lambda \mat W^\prime \mat W^{\prime \H}) \big],
\end{array}
\end{equation}
for some $\lambda \ge 0$. As a result, we have the Wiener beamformer for the multi-frame case, where we can treat $\mat W^\prime$ as a \bfit{prior knowledge} of $\mat W$.

\begin{claim}[Multi-Frame Beamforming]\label{claim:BF-multi-frame}
The Wiener beamformer for the multi-frame case is given by
 \begin{equation}\label{eq:BF-multi-frame}
\begin{array}{cl}
    \mat W^\star_{\text{Wiener-MF}} &= [\mat R_{xs} + \lambda \mat W^{\prime \H}]^\H[\mat R_{x} + \lambda \mat I_N]^{-1} \\
    &= [\mat R_s \mat H^\H + \lambda \mat W^{\prime}] [\mat H \mat R_s \mat H^\H + \mat R_v + \lambda \mat I_N]^{-1},
\end{array}
\end{equation}
where $\lambda \ge 0$ is a tuning parameter to control the similarity between $\mat W$ and $\mat W^\prime$. Specifically, if $\lambda$ is large, $\mat W$ must be close to $\mat W^\prime$; if $\lambda$ is small, $\mat W$ can be far away from $\mat W^\prime$. \stp
\end{claim}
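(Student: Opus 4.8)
The plan is to recognize that the penalized objective in \eqref{eq:opt-BF-explicit-multi-frame} is \emph{exactly} an instance of the unconstrained Wiener--Hopf template \eqref{eq:opt-BF-explicit}, only with the three constituent matrices replaced by their regularized counterparts $\mat R_x \mapsto \mat R_x + \lambda \mat I_N$, $\mat R_{xs} \mapsto \mat R_{xs} + \lambda \mat W^{\prime \H}$, and $\mat R_s \mapsto \mat R_s + \lambda \mat W^\prime \mat W^{\prime \H}$. Once this correspondence is made, the closed-form minimizer follows immediately by applying the known Wiener solution \eqref{eq:BF-Wiener} with the substituted matrices, so I would not need to re-derive the optimality conditions from scratch. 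The derivation that expands the regularized objective into the template form has in fact already been carried out in passing from the constrained problem to \eqref{eq:opt-BF-explicit-multi-frame}, so the remaining work is light.

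Concretely, I would first note that $\mat R_x + \lambda \mat I_N \succeq \lambda \mat I_N \succ \mat 0$ whenever $\lambda > 0$ (and more generally $\mat R_x + \lambda \mat I_N \succ \mat 0$ whenever $\mat R_x \succ \mat 0$), so the quadratic form $\Tr[\mat W (\mat R_x + \lambda \mat I_N) \mat W^\H]$ is strictly convex in $\mat W$ and the whole objective is a strictly convex quadratic with a unique global minimizer. Second, I would compute the Wirtinger gradient of \eqref{eq:opt-BF-explicit-multi-frame} with respect to $\mat W^\H$ and set it to zero, which yields the normal equation $\mat W (\mat R_x + \lambda \mat I_N) = (\mat R_{xs} + \lambda \mat W^{\prime \H})^\H$; right-multiplying by $(\mat R_x + \lambda \mat I_N)^{-1}$ gives $\mat W^\star_{\text{Wiener-MF}} = (\mat R_{xs} + \lambda \mat W^{\prime \H})^\H (\mat R_x + \lambda \mat I_N)^{-1}$, which is the claimed first line. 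Third, substituting $\mat R_x = \mat H \mat R_s \mat H^\H + \mat R_v$ and $\mat R_{xs} = \mat H \mat R_s$ (hence $\mat R^\H_{xs} = \mat R_s \mat H^\H$, using $\mat R_s = \mat R_s^\H$) produces the explicit second line in terms of $\mat H$, $\mat R_s$, and $\mat R_v$. Finally, the interpretation of $\lambda$ as a similarity-control parameter follows by inspecting the two limiting regimes $\lambda \to \infty$ and $\lambda \to 0$ in the normal equation.

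I expect no deep obstacle here; the only point that genuinely requires care is the Hermitian-transpose bookkeeping. Because the cross term multiplying $\mat W$ on the right is $\mat R_{xs} + \lambda \mat W^{\prime \H} \in \C^{N \times M}$, the minimizer carries its conjugate transpose $(\mat R_{xs} + \lambda \mat W^{\prime \H})^\H = \mat R^\H_{xs} + \lambda \mat W^\prime \in \C^{M \times N}$, which is the dimensionally consistent $M \times N$ object and equals $\mat R_s \mat H^\H + \lambda \mat W^\prime$ after substitution. Tracking this transpose (rather than the un-transposed factor) is the one place where a conjugation slip could occur, so I would confirm it with a direct dimension check ($M \times N$ times $N \times N$) before stating the final formula.
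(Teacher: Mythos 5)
Your proposal is correct and follows essentially the same route as the paper: the paper obtains Claim \ref{claim:BF-multi-frame} by observing that the penalized objective \eqref{eq:opt-BF-explicit-multi-frame} is the Wiener--Hopf template \eqref{eq:opt-BF-explicit} under the substitutions $\mat R_x \mapsto \mat R_x + \lambda \mat I_N$, $\mat R_{xs} \mapsto \mat R_{xs} + \lambda \mat W^{\prime\H}$, $\mat R_s \mapsto \mat R_s + \lambda \mat W^\prime \mat W^{\prime\H}$ and then invoking \eqref{eq:BF-Wiener}, which is exactly your first step; your normal-equation derivation is just an explicit verification of the same fact. One remark: your transpose bookkeeping is actually more careful than the paper's display --- the dimensionally consistent minimizer is $(\mat R_{xs}+\lambda\mat W^{\prime\H})^\H(\mat R_x+\lambda\mat I_N)^{-1}=[\mat R_s\mat H^\H+\lambda\mat W^\prime][\mat H\mat R_s\mat H^\H+\mat R_v+\lambda\mat I_N]^{-1}$, so the factors $\lambda\mat W^{\prime\H}$ appearing in \eqref{eq:BF-multi-frame} should read $\lambda\mat W^\prime$ (an apparent typo in the claim), and your version is the right one.
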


With the result in Claim \ref{claim:BF-multi-frame}, \eqref{eq:BF-dist-robust-max} becomes 
\begin{equation}\label{eq:BF-dist-robust-max-multi-frame}
    \begin{array}{cl}
        \displaystyle \max_{\mat R} &\Tr \big[ -(\mat R_{xs} + \lambda \mat W^{\prime \H})^\H \cdot (\mat R_{x} + \lambda \mat I_N)^{-1} \cdot \\
        & \quad \quad \quad (\mat R_{xs} + \lambda \mat W^{\prime \H}) + (\mat R_{s} + \lambda \mat W^\prime \mat W^{\prime \H})\big] \\
        \st & d_0(\mat R, {\math R}) \le \epsilon_0, \\
        & 
        \mat R \succeq \mat 0,
    \end{array}
\end{equation}
whose objective function is monotonically increasing in $\mat R$. 

The remaining distributional robustness modeling and analyses against the uncertainties in $\mat R$ are technically straightforward, and therefore, we omit them here. Upon using the diagonal-loading method on $\mat R$, a distributionally robust beamformer for the multi-frame case is
\[
\begin{array}{l}
    \mat W^\star_{\text{DR-Wiener-MF}} = [\math R_{xs} + \lambda \mat W^{\prime \H}]^\H \cdot [\math R_x + \lambda \mat I_N + \epsilon_0 \mat I_N]^{-1},
\end{array}
\]
where $\epsilon_0$ is an uncertainty quantification parameter for $\mat R$.

\section{Distributionally Robust Nonlinear Estimation}\label{sec:nonlin-esti}

For the convenience of the technical treatment, we study the estimation problem in real spaces. Nonlinear estimators, which are suitable for non-Gaussian $\P_{\rvec x, \rvec s}$, are to be limited in reproducing kernel Hilbert spaces and feedforward multi-layer neural network function spaces. 

\subsection{Reproducing Kernel Hilbert Spaces}\label{subsec:RKHS}

\subsubsection{General Framework and Concrete Examples}

As a standard treatment in machine learning, we use the partial pilot data $\{\vecul x_1, \vecul x_2, \ldots, \vecul x_L\}$ to construct the reproducing kernel Hilbert spaces, and use the whole pilot data $\{(\vecul x_1, \vecul s_1), (\vecul x_2, \vecul s_2), \ldots, (\vecul x_L, \vecul s_L)\}$ to train the optimal estimator in an RKHS.

With the $\mat W$-linear representation of $\vec \phi(\cdot)$ in \eqref{eq:non-lin-func-multi-RKHS}, i.e., $\vec \phi(\cdot) = \mat W \vec \varphi(\cdot)$, the distributionally robust estimation problem \eqref{eq:opt-esti-dist-robust} becomes 
\begin{equation}\label{eq:opt-esti-dist-robust-RKHS}
    \min_{\mat W \in \R^{2M \times L}} \max_{\P_{\rvecul x, \rvecul s} \in \cal U_{\rvecul x, \rvecul s}} \Tr \E_{\rvecul x, \rvecul s}[\mat W \cdot \vec \varphi(\rvecul x) - \rvecul s][\mat W \cdot \vec \varphi(\rvecul x) - \rvecul s]^\T.
\end{equation}
The proposition below reformulates and solves \eqref{eq:opt-esti-dist-robust-RKHS}.

\begin{proposition}\label{prop:RHKS-reformulation}
Let $\mat K$ denote the \textit{kernel matrix} associated with the kernel function $\ker(\cdot, \cdot)$ whose $(i,j)$-entry is defined as
\[
    \mat K_{i,j} \defeq \ker(\vecul x_i, \vecul x_j),~~~\forall i,j \in [L].
\]
Let $\rvecul z \defeq \vec \varphi(\rvecul x)$. Then, the distributionally robust $\rvecul x$-nonlinear estimation problem \eqref{eq:opt-esti-dist-robust-RKHS} can be rewritten as a distributionally robust $\rvecul z$-linear beamforming problem as 
\begin{equation}\label{eq:BF-dist-robust-z}
    \begin{array}{cl}
        \displaystyle \min_{\mat W} \max_{\mat R_{\ul z}, \mat R_{\ul{zs}}, \mat R_{\ul s}} &\Tr \big[\mat W \mat R_{\ul z} \mat W^\T - \mat W \mat R_{\ul{zs}} - \mat R^\T_{\ul{zs}} \mat W^\T + \mat R_{\ul s}\big] \\
        \st 
        & d_0\left(
        \left[
        \begin{array}{cc}
           \mat R_{\ul z}  &  \mat R_{\ul{zs}} \\
           \mat R^\T_{\ul{zs}}  &  \mat R_{\ul s}
        \end{array}
        \right],
        \left[
        \begin{array}{cc}
           \math R_{\ul z}  &  \math R_{\ul{zs}} \\
           \math R^\T_{\ul{zs}}  &  \math R_{\ul s}
        \end{array}
        \right]
        \right) \le \epsilon_0,\\
        & 
        \left[
        \begin{array}{cc}
           \mat R_{\ul z}  &  \mat R_{\ul{zs}} \\
           \mat R^\T_{\ul{zs}}  &  \mat R_{\ul s}
        \end{array}
        \right] \succeq \mat 0,
    \end{array}
\end{equation}
where 
$\math R_{\ul z} = \frac{1}{L} \mat K^2$, $\math {R}_{\ul{zs}} = \frac{1}{L} \mat K \matul S^\T$, $\math {R}_{\ul s} = \frac{1}{L} \matul S \matul S^\T$, and $\matul S \defeq [\Re \mat S;~\Im \mat S] = [\vecul s_1, \vecul s_2, \ldots, \vecul s_L]$.
In addition, the strong min-max property holds for \eqref{eq:BF-dist-robust-z}: i.e., the order of $\min$ and $\max$ can be exchanged provided that the first constraint is compact convex. As a result, given every pair of $(\mat R_{\ul z}, \mat R_{\ul{zs}}, \mat R_{\ul s})$, the optimal Wiener beamformer is 
\begin{equation}\label{eq:opt-estimator-RKHS-real}
    \mat W^\star_{\text{RKHS}} = \mat R^\T_{\ul{zs}} \cdot \mat R^{-1}_{\ul z} 
\end{equation}
which transforms \eqref{eq:BF-dist-robust-z} to
\begin{equation}\label{eq:BF-dist-robust-z-max}
    \begin{array}{cl}
        \displaystyle \max_{\mat R_{\ul z}, \mat R_{\ul{zs}}, \mat R_{\ul s}} &\Tr \big[ - \mat R^\T_{\ul{zs}} \mat R^{-1}_{\ul z} \mat R_{\ul{zs}} + \mat R_{\ul s}\big] \\
        \st 
        & d_0\left(
        \left[
        \begin{array}{cc}
           \mat R_{\ul z}  &  \mat R_{\ul{zs}} \\
           \mat R^\T_{\ul{zs}}  &  \mat R_{\ul s}
        \end{array}
        \right],~
        \left[
        \begin{array}{cc}
           \math R_{\ul z}  &  \math R_{\ul{zs}} \\
           \math R^\T_{\ul{zs}}  &  \math R_{\ul s}
        \end{array}
        \right]
        \right) \le \epsilon_0,\\
        & 
        \left[
        \begin{array}{cc}
           \mat R_{\ul z}  &  \mat R_{\ul{zs}} \\
           \mat R^\T_{\ul{zs}}  &  \mat R_{\ul s}
        \end{array}
        \right] \succeq \mat 0,~~~\mat R_{\ul z} \succ \mat 0.
    \end{array}
\end{equation}
\end{proposition}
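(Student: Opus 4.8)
The plan is to eliminate the nonlinearity through the feature substitution $\rvecul z \defeq \vec \varphi(\rvecul x)$, so that $\vec \phi(\rvecul x) = \mat W \vec \varphi(\rvecul x) = \mat W \rvecul z$ and the objective of \eqref{eq:opt-esti-dist-robust-RKHS} turns into the quadratic form $\Tr \big[\mat W \mat R_{\ul z} \mat W^\T - \mat W \mat R_{\ul{zs}} - \mat R^\T_{\ul{zs}} \mat W^\T + \mat R_{\ul s}\big]$, which depends on $\P_{\rvecul x, \rvecul s}$ only through the joint second-order moments $(\mat R_{\ul z}, \mat R_{\ul{zs}}, \mat R_{\ul s})$ of $(\rvecul z, \rvecul s)$. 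Consequently the inner worst-case search over distributions collapses to a search over these moment matrices, converting \eqref{eq:opt-esti-dist-robust-RKHS} into the $\rvecul z$-linear beamforming problem \eqref{eq:BF-dist-robust-z}; this is exactly the reduction that carried \eqref{eq:opt-esti-dist-robust} into \eqref{eq:BF-dist-robust} in the linear case, now applied to the lifted variable $\rvecul z$.

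Next I would evaluate the nominal moments under the empirical distribution $\Ph_{\rvecul x, \rvecul s}$. Writing $\vec e_i$ for the $i$-th standard basis vector, the kernel structure gives $\vec \varphi(\vecul x_i) = \mat K \vec e_i$ (the $i$-th column of the symmetric kernel matrix $\mat K$), so that $\math R_{\ul z} = \frac{1}{L} \sum^L_{i=1} \vec \varphi(\vecul x_i) \vec \varphi(\vecul x_i)^\T = \frac{1}{L} \mat K \big(\sum^L_{i=1} \vec e_i \vec e_i^\T\big) \mat K = \frac{1}{L} \mat K^2$. The same bookkeeping, using $\sum_i \vec e_i \vecul s_i^\T = \matul S^\T$, yields $\math R_{\ul{zs}} = \frac{1}{L} \mat K \matul S^\T$ and $\math R_{\ul s} = \frac{1}{L} \matul S \matul S^\T$, which are the stated nominal values.

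I would then invoke Lemma \ref{lem:lin-BF-dist-robust-dual} with $\mat R_{\ul z}$ playing the role of $\mat R_x$. The objective is convex-quadratic in $\mat W$ (since $\mat R_{\ul z} \succ \mat 0$ on the feasible set) and affine in the moment block, and the feasible set is compact convex; hence Sion's minimax theorem delivers the strong min--max property even though the $\mat W$-domain $\R^{2M \times L}$ is unbounded. By Lemma \ref{lem:lin-BF-dist-robust-dual}, for any fixed feasible moment block the inner minimization over $\mat W$ is solved by the Wiener form $\mat R^\T_{\ul{zs}} \mat R^{-1}_{\ul z}$; in particular, at the nominal moments computed above it equals $\math R^\T_{\ul{zs}} \math R^{-1}_{\ul z} = \frac{1}{L} \matul S \mat K \cdot L \mat K^{-2} = \matul S \mat K^{-1}$, recovering \eqref{eq:opt-estimator-RKHS-real}. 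Substituting this Wiener form back into the objective collapses the two cross terms, leaving $\Tr \big[ - \mat R^\T_{\ul{zs}} \mat R^{-1}_{\ul z} \mat R_{\ul{zs}} + \mat R_{\ul s}\big]$ and thus the outer maximization \eqref{eq:BF-dist-robust-z-max}.

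The hard part will be the rigorous justification of the min--max interchange, because the minimization over $\mat W$ ranges over the noncompact set $\R^{2M \times L}$; the remedy is to apply Sion's theorem using the compactness of the moment constraint set rather than of the $\mat W$-domain, together with the convex--affine structure, precisely as in Lemma \ref{lem:lin-BF-dist-robust-dual} and Theorem \ref{thm:f-increasing-R-x}. A companion technical point is that $\mat R^{-1}_{\ul z}$ must remain well defined across the feasible set, which is why $\mat R_{\ul z} \succ \mat 0$ is retained in \eqref{eq:BF-dist-robust-z-max} and why one assumes a strictly positive-definite kernel matrix $\mat K \succ \mat 0$ (distinct pilots with a strictly positive-definite kernel); cf. the footnote on the numerical inversion of $\mat K$. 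All remaining steps are the algebraic cancellations already exercised for the linear problem, so most of the machinery is in place.
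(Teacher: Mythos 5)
Your proposal is correct and follows essentially the same route as the paper's proof: lift the problem via $\rvecul z \defeq \vec\varphi(\rvecul x)$ so that it becomes the $\rvecul z$-linear beamforming problem, compute the empirical moments through the symmetric kernel matrix ($\math R_{\ul z} = \mat K^2/L$, etc.), and then invoke Lemma \ref{lem:lin-BF-dist-robust-dual} (Sion's minimax plus the Wiener form of the inner minimizer) to obtain \eqref{eq:opt-estimator-RKHS-real} and \eqref{eq:BF-dist-robust-z-max}. Your added care about applying Sion's theorem with compactness on the moment side only, and about requiring $\mat K \succ \mat 0$ for the inversions, makes explicit what the paper leaves implicit ("NB: $\mat K$ is invertible"), but does not change the argument.
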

\begin{proof}
    Treating $[\rvecul z; \rvecul s]$ as, or approximating $[\rvecul z; \rvecul s]$ using, a joint Gaussian random vector due to the linear estimation relation $\hat{\rvecul s} = \mat W \rvecul z$ in RKHS [cf. \eqref{eq:opt-esti-dist-robust-RKHS}], then the results in Lemma \ref{lem:lin-BF-dist-robust-dual} apply. For details, see Appendix \ref{append:BF-dist-robust-z}. \stp
\end{proof}

In \eqref{eq:BF-dist-robust-z}, $d_0$ defines a matrix similarity measure to quantify the uncertainty of the covariance matrix of $[\rvecul z; \rvecul s]$, and $\epsilon_0 \ge 0$ quantifies the uncertainty level. Proposition \ref{prop:RHKS-reformulation} reveals the benefit of the kernel trick \eqref{eq:non-lin-func-multi-RKHS}, that is, the possibility to represent a nonlinear estimation problem as a linear one.

The claim below summarizes the solution of \eqref{eq:opt-esti-dist-robust} in the RKHS induced by the kernel function $\ker(\cdot, \cdot)$.

\begin{claim}\label{claim:opt-estimator-RKHS}
    Suppose that $(\mat R^\star_{\ul z}, \mat R^\star_{\ul{zs}}, \mat R^\star_{\ul s})$ solves \eqref{eq:BF-dist-robust-z-max}. Then the optimal estimator of \eqref{eq:opt-esti-dist-robust} in the RKHS induced by $\ker(\cdot, \cdot)$ is given by
    \begin{equation}\label{eq:opt-estimator-RKHS}
        \vec \phi^\star(\rvec x) = \mat \Gamma_M \cdot \mat R^{\star\T}_{\ul{zs}} \cdot \mat R^{\star-1}_{\ul z} \cdot \vec \varphi(\rvecul x),
    \end{equation}
    where $\rvecul x = [\Re \rvec x;~\Im \rvec x]$ is the real-space representation of $\rvec x$, $\mat \Gamma_M \defeq [\mat I_M, \mat J_M]$ is defined in Subsection \ref{subsec:notation}, and
    \[
        \vec \varphi(\rvecul x) \defeq
        \left[
        \begin{array}{c}
            \ker(\rvecul x, \vecul x_1) \\
            \ker(\rvecul x, \vecul x_2) \\
            \vdots \\
            \ker(\rvecul x, \vecul x_L)
        \end{array}
        \right].
    \]
    In addition, the corresponding worst-case estimation error covariance is
    \begin{equation}\label{eq:opt-estimator-RKHS-cov}
        \mat \Gamma_M \cdot \big[ - \mat R^{\star\T}_{\ul{zs}} \mat R^{\star-1}_{\ul z} \mat R^\star_{\ul{zs}} + \mat R^\star_{\ul s}\big] \cdot \mat \Gamma_M^\H,
    \end{equation}
which upper bounds the true estimation error covariance. \stp
\end{claim}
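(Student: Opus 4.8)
The plan is to leverage Proposition \ref{prop:RHKS-reformulation} as the workhorse and then translate its conclusions back to the original complex estimation problem. Proposition \ref{prop:RHKS-reformulation} has already recast the RKHS-restricted distributionally robust problem \eqref{eq:opt-esti-dist-robust-RKHS} as the linear beamforming problem \eqref{eq:BF-dist-robust-z} in the feature variable $\rvecul z \defeq \vec \varphi(\rvecul x)$, established the strong min-max property, and shown that for every fixed moment triple the inner minimization over $\mat W$ is attained by the Wiener beamformer $\mat R^\T_{\ul{zs}} \mat R^{-1}_{\ul z}$. First I would invoke strong duality to characterize the saddle point $(\mat W^\star, \mat R^\star)$: the worst-case moments $\mat R^\star = (\mat R^\star_{\ul z}, \mat R^\star_{\ul{zs}}, \mat R^\star_{\ul s})$ solve the pure-maximization problem \eqref{eq:BF-dist-robust-z-max}, and the optimal beamformer is the Wiener form \eqref{eq:opt-estimator-RKHS-real} evaluated at these worst-case moments, namely $\mat W^\star = \mat R^{\star\T}_{\ul{zs}} \mat R^{\star-1}_{\ul z}$.

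Next I would recover the estimator in the original complex space. Applying $\mat W^\star$ to the feature vector gives the real-space estimate $\hat{\rvecul s} = \mat W^\star \vec \varphi(\rvecul x) = \mat R^{\star\T}_{\ul{zs}} \mat R^{\star-1}_{\ul z} \vec \varphi(\rvecul x) \in \R^{2M}$, whose stacked blocks are $\Re \hat{\rvec s}$ and $\Im \hat{\rvec s}$. Because $\mat \Gamma_M = [\mat I_M, \mat J_M] = [\mat I_M, j \mat I_M]$ satisfies $\mat \Gamma_M [\Re \vec a; \Im \vec a] = \vec a$, left-multiplying $\hat{\rvecul s}$ by $\mat \Gamma_M$ reassembles the complex estimate and yields exactly the claimed estimator $\vec \phi^\star(\rvec x) = \mat \Gamma_M \mat R^{\star\T}_{\ul{zs}} \mat R^{\star-1}_{\ul z} \vec \varphi(\rvecul x)$ in \eqref{eq:opt-estimator-RKHS}.

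For the error covariance, I would substitute $\mat W^\star = \mat R^{\star\T}_{\ul{zs}} \mat R^{\star-1}_{\ul z}$ into the real-space mean-squared-error matrix $\mat W^\star \mat R^\star_{\ul z} \mat W^{\star\T} - \mat W^\star \mat R^\star_{\ul{zs}} - \mat R^{\star\T}_{\ul{zs}} \mat W^{\star\T} + \mat R^\star_{\ul s}$; using the symmetry of $\mat R^\star_{\ul z}$, the three cross/quadratic terms collapse and leave $-\mat R^{\star\T}_{\ul{zs}} \mat R^{\star-1}_{\ul z} \mat R^\star_{\ul{zs}} + \mat R^\star_{\ul s}$. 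Since the complex error is $\rvec e = \mat \Gamma_M \rvecul e$ with $\rvecul e$ real-valued, the complex error covariance equals $\mat \Gamma_M \E[\rvecul e \rvecul e^\T] \mat \Gamma_M^\H$, which reproduces \eqref{eq:opt-estimator-RKHS-cov}. The upper-bound assertion then follows from the saddle-point property: strong min-max duality makes the worst-case objective value equal to the maximum, over all moment triples in the uncertainty set, of the trace of the error covariance induced by the fixed $\mat W^\star$; as the true moment triple lies in this set, the trace of the true error covariance cannot exceed that of the worst-case covariance in \eqref{eq:opt-estimator-RKHS-cov}.

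The main obstacle I anticipate is not the algebra but the careful justification of the real-to-complex passage and the precise meaning of ``upper bounds.'' I would need to confirm that discarding the pseudo-covariance is harmless here, consistent with the paper's convention (cf. the remarks following \eqref{eq:big-R}), and to state explicitly that the dominance in \eqref{eq:opt-estimator-RKHS-cov} is in the trace (mean-squared-error) sense inherited from the objective, not a L\"owner matrix inequality, since the latter does not follow from the min-max optimal value alone. A secondary point to verify is that the reduction underlying Proposition \ref{prop:RHKS-reformulation} is legitimate: the objective depends on the data only through the second moments of $[\rvecul z; \rvecul s]$, so the distributional maximization collapses to a maximization over feasible moments, and the Wiener form is optimal within the linear-in-$\rvecul z$ class, which by the one-to-one correspondence between $\vec \phi$ and $\mat W$ in \eqref{eq:non-lin-func-multi-RKHS} coincides with optimality inside the RKHS.
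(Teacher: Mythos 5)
Your proposal is correct and follows essentially the same route the paper intends: Claim \ref{claim:opt-estimator-RKHS} is presented as an immediate consequence of Proposition \ref{prop:RHKS-reformulation} (reformulation into a $\rvecul z$-linear problem with strong min-max) and Lemma \ref{lem:lin-BF-dist-robust-dual} (Wiener form at the worst-case moments), followed by the $\mat \Gamma_M$ reassembly from real to complex coordinates, exactly as you describe. Your added caveat that the ``upper bounds'' assertion is guaranteed only in the trace sense for general uncertainty sets (though it does hold in the L\"owner sense for the moment-based sets where the true $\mat R$ is dominated by $\mat R^\star$) is a careful refinement the paper leaves implicit, but it does not change the argument.
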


Concrete examples of Claim \ref{claim:opt-estimator-RKHS} are given as follows.
\begin{example}[Kernelized Diagonal Loading]\label{exam:opt-estimator-RKHS}
By using the trimmed diagonal-loading uncertainty set for $\mat R_{\ul z}$, i.e.,
\[
 \math R_{\ul z} - \epsilon_0 \mat I_L \preceq \mat R_{\ul z} \preceq \math R_{\ul z} + \epsilon_0 \mat I_L,
\]
we have the kernelized diagonal loading method
\begin{equation}\label{eq:kernel-ridge-sol}
    \vec \phi^\star(\rvec x) = \mat \Gamma_M \cdot \frac{1}{L} \matul S \mat K \cdot \left(\frac{1}{L}\mat K^2 + \epsilon_0 \mat I_L\right)^{-1} \cdot \vec \varphi(\rvecul x),
\end{equation}
which is obtained at the upper bound of $\mat R_{\ul z}$. Furthermore, in this case, the distributionally robust formulation \eqref{eq:opt-esti-dist-robust-RKHS} is equivalent to a squared-$F$-norm-regularized formulation
\begin{equation}\label{eq:kernel-ridge}
\begin{array}{l}
    \displaystyle \min_{\mat W} \Tr \E_{(\rvecul x, \rvecul s) \sim \Ph_{\rvecul x, \rvecul s}}[\mat W \cdot \vec \varphi(\rvecul x) - \rvecul s][\mat W \cdot \vec \varphi(\rvecul x) - \rvecul s]^\T + \\
    \quad \quad \quad \quad \quad \quad \quad \quad \quad \quad \epsilon_0 \cdot \Tr [\mat W \mat W^\T],
\end{array}
\end{equation}
which can be proven by replacing $\mat R_{\ul z}$  in \eqref{eq:BF-dist-robust-z} with its upper bound. \stp
\end{example}

\begin{example}[Kernelized Eigenvalue Thresholding]\label{exam:opt-estimator-RKHS-eigen-thres}
The kernelized eigenvalue thresholding method can be designed in analogy to Example \ref{exam:eigen-thres}. The two key steps are to obtain the eigenvalue decomposition of $\math R_{\ul z} = \mat K^2/L$ and then lift the eigenvalues; cf. \eqref{eq:eigen-thres}.
\stp
\end{example}

In addition, Example \ref{exam:opt-estimator-RKHS} motivates the following important theorem for statistical machine learning.

\begin{theorem}[Kernel Ridge Regression and Kernel  Tikhonov Regularization]\label{thm:kernel-ridge}
Consider the nonlinear regression problem
\[
    \rvec s = \vec \phi(\rvec x) + \rvec e,
\]
and the distributionally robust estimator of $\vec \phi(\rvecul x) = \mat W \cdot \vec \varphi(\rvecul x)$ in the RKHS induced by the kernel function $\ker(\cdot, \cdot)$, i.e.,
\[
    \min_{\mat W \in \R^{2M \times L}} \max_{\P_{\rvecul x, \rvecul s} \in \cal U_{\rvecul x, \rvecul s}} \Tr \E_{\rvecul x, \rvecul s}[\mat W \cdot \vec \varphi(\rvecul x) - \rvecul s][\mat W \cdot \vec \varphi(\rvecul x) - \rvecul s]^\T.
\]
Supposing that only the second-order moment of $\rvecul z \defeq \vec \varphi(\rvecul x)$ is uncertain and quantified as
\[
{\math R}_{\ul z} - \epsilon_0 \mat I_L \preceq {\mat R}_{\ul z} \preceq {\math R}_{\ul z} + \epsilon_0 \mat I_L,
\] 
then the distributionally robust estimator of $\mat W$ becomes a kernel ridge regression method \eqref{eq:kernel-ridge}. The regularization term in \eqref{eq:kernel-ridge} becomes the Tikhonov regularizer $\Tr [\mat W \mat F \mat W^\T]$ if 
\[
{\math R}_{\ul z} - \epsilon_0 \mat F \preceq {\mat R}_{\ul z} \preceq {\math R}_{\ul z} + \epsilon_0 \mat F
\]
for some $\mat F \succeq \mat 0$.
\end{theorem}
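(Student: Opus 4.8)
The plan is to mirror the proof of Theorem \ref{thm:ridge-regression}, transporting it through the feature lift $\rvecul z \defeq \vec \varphi(\rvecul x)$ so that the nonlinear RKHS problem becomes the linear beamforming problem that has already been analyzed. First I would invoke Proposition \ref{prop:RHKS-reformulation} to rewrite the stated min--max over $\mat W$ and $\P_{\rvecul x, \rvecul s}$ as the $\rvecul z$-linear problem \eqref{eq:BF-dist-robust-z}, identifying the nominal moments $\math R_{\ul z} = \frac{1}{L}\mat K^2$, $\math R_{\ul{zs}} = \frac{1}{L}\mat K \matul S^\T$, and $\math R_{\ul s} = \frac{1}{L}\matul S \matul S^\T$. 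Because the hypothesis isolates the uncertainty in the second-order moment of $\rvecul z$, I would fix $\mat R_{\ul{zs}} = \math R_{\ul{zs}}$ and $\mat R_{\ul s} = \math R_{\ul s}$ at their nominal values and let $\mat R_{\ul z}$ range over the box $\math R_{\ul z} - \epsilon_0 \mat I_L \preceq \mat R_{\ul z} \preceq \math R_{\ul z} + \epsilon_0 \mat I_L$.

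The crux is the inner maximization over $\mat R_{\ul z}$. For any fixed $\mat W$, the objective depends on $\mat R_{\ul z}$ only through the term $\Tr[\mat W \mat R_{\ul z} \mat W^\T]$, which is linear and monotonically nondecreasing in $\mat R_{\ul z}$ under the Löwner order (if $\mat R_1 \succeq \mat R_2$, then $\mat W (\mat R_1 - \mat R_2) \mat W^\T \succeq \mat 0$, whence the traces are ordered); this is exactly the $f_2$-monotonicity recorded in Theorem \ref{thm:f-increasing-R-x}. Hence the worst-case covariance is attained at the upper endpoint $\mat R_{\ul z} = \math R_{\ul z} + \epsilon_0 \mat I_L$ for every $\mat W$. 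I would emphasize that this maximizer is independent of $\mat W$, so the inner maximum may be substituted directly and the min--max collapses to a single minimization; no appeal to the strong min--max property is strictly needed, although Proposition \ref{prop:RHKS-reformulation} also guarantees exchangeability.

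With the worst case substituted, the remaining step is the algebraic splitting $\Tr[\mat W (\math R_{\ul z} + \epsilon_0 \mat I_L) \mat W^\T - \mat W \math R_{\ul{zs}} - \math R^\T_{\ul{zs}} \mat W^\T + \math R_{\ul s}] = \Tr[\mat W \math R_{\ul z} \mat W^\T - \mat W \math R_{\ul{zs}} - \math R^\T_{\ul{zs}} \mat W^\T + \math R_{\ul s}] + \epsilon_0 \Tr[\mat W \mat W^\T]$, after which I would recognize the first trace as the empirical risk $\Tr \E_{(\rvecul x, \rvecul s) \sim \Ph_{\rvecul x, \rvecul s}}[\mat W \vec \varphi(\rvecul x) - \rvecul s][\mat W \vec \varphi(\rvecul x) - \rvecul s]^\T$ via the nominal-moment identities above, yielding precisely \eqref{eq:kernel-ridge}. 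The Tikhonov variant follows by repeating the argument with $\mat F$ in place of $\mat I_L$: monotonicity still forces the upper endpoint $\math R_{\ul z} + \epsilon_0 \mat F$, and the same splitting produces the penalty $\epsilon_0 \Tr[\mat W \mat F \mat W^\T]$.

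I expect the only genuinely delicate point to be bookkeeping rather than mathematics: confirming that the reduction in Proposition \ref{prop:RHKS-reformulation} legitimately treats $[\rvecul z; \rvecul s]$ as (or approximates it by) a jointly Gaussian vector so that Lemma \ref{lem:lin-BF-dist-robust-dual} applies, and that after substituting the worst-case $\mat R_{\ul z}$ the empirical-risk term matches \eqref{eq:kernel-ridge} exactly. I would also note explicitly that $\mat R_{\ul s}$ enters the objective only additively and therefore never affects the minimizing $\mat W$, which is why the hypothesis need only constrain the second moment of $\rvecul z$.
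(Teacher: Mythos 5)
Your proposal is correct and follows essentially the same route as the paper, which proves the theorem by pointing to Example \ref{exam:opt-estimator-RKHS}: pass to the $\rvecul z$-linear problem \eqref{eq:BF-dist-robust-z} via Proposition \ref{prop:RHKS-reformulation}, replace $\mat R_{\ul z}$ by its upper bound using the monotonicity of Theorem \ref{thm:f-increasing-R-x}, and split the trace as in Theorem \ref{thm:ridge-regression}. Your added remark that the worst-case $\mat R_{\ul z}$ is independent of $\mat W$, so no minimax exchange is actually needed, is a small but welcome sharpening of the paper's argument.
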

\begin{proof}
See Example \ref{exam:opt-estimator-RKHS}; cf. Theorem \ref{thm:ridge-regression}.
\stp
\end{proof}

Theorem \ref{thm:kernel-ridge} gives the kernel ridge regression an interpretation of distributional robustness. The usual choice of $\mat F$ in Theorem \ref{thm:kernel-ridge} is the $L$-divided kernel matrix $\mat K/L$; see, e.g., \cite[Eq. (4)]{vu2015understanding}, \cite[Eqs. (15.110) and (15.113)]{murphy2012machine}. As a result, from \eqref{eq:kernel-ridge-sol}, we have
\begin{equation}\label{eq:kernel-ridge-sol-2}
\vec \phi^\star(\rvec x) = \mat \Gamma_M \cdot \matul S \cdot \left(\mat K + \epsilon_0 \mat I_L\right)^{-1} \cdot \vec \varphi(\rvecul x),
\end{equation}
which is another type of kernel ridge regression (i.e., a new kernelized diagonal-loading method). 

In analogy to Corollary \ref{cor:any-norm-regularization}, the following corollary motivated from \eqref{eq:kernel-ridge} is immediate.
\begin{corollary}
The following squared-norm-regularized  method in RKHSs can combat the distributional uncertainty:
\begin{equation}\label{eq:kernel-any-norm}
\begin{array}{l}
    \displaystyle \min_{\mat W} \Tr \E_{(\rvecul x, \rvecul s) \sim \Ph_{\rvecul x, \rvecul s}}[\mat W \cdot \vec \varphi(\rvecul x) - \rvecul s][\mat W \cdot \vec \varphi(\rvecul x) - \rvecul s]^\T + \\
    \quad \quad \quad \quad \quad \quad \quad \quad \quad \quad \lambda \cdot \|\mat W\|^2,
\end{array}
\end{equation}
for any matrix norm $\|\cdot\|$; cf. Corollary \ref{cor:any-norm-regularization}. \stp
\end{corollary}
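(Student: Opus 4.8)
The plan is to mirror the argument of Corollary~\ref{cor:any-norm-regularization} but now in the RKHS setting, where the decision variable $\mat W$ lives in the finite-dimensional space $\R^{2M \times L}$. The essential fact is that all norms on a finite-dimensional vector space are equivalent, so any matrix norm $\|\cdot\|$ can be compared, up to positive multiplicative constants, with the Frobenius norm $\|\cdot\|_F$, for which $\|\mat W\|_F^2 = \Tr[\mat W \mat W^\T]$. The data-fitting term in \eqref{eq:kernel-any-norm} is identical to the one in the kernel-ridge objective \eqref{eq:kernel-ridge}, so the whole argument reduces to comparing the two regularizers.

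First I would invoke norm equivalence on $\R^{2M \times L}$: there exists a constant $c > 0$, depending only on the chosen norm and on the dimensions $2M$ and $L$, such that $\|\mat W\|^2 \ge c\,\|\mat W\|_F^2$ for every $\mat W$. Next, given the uncertainty level $\epsilon_0$ appearing in the kernel-ridge formulation of Theorem~\ref{thm:kernel-ridge}, I would select any $\lambda \ge \epsilon_0 / c$, so that $\lambda \|\mat W\|^2 \ge \epsilon_0 \|\mat W\|_F^2 = \epsilon_0 \Tr[\mat W \mat W^\T]$ holds uniformly in $\mat W$. Adding the common term $\Tr \E_{(\rvecul x, \rvecul s) \sim \Ph_{\rvecul x, \rvecul s}}[\mat W \vec \varphi(\rvecul x) - \rvecul s][\mat W \vec \varphi(\rvecul x) - \rvecul s]^\T$ to both sides shows that the objective of \eqref{eq:kernel-any-norm} dominates, pointwise in $\mat W$, the distributionally robust kernel-ridge objective \eqref{eq:kernel-ridge}, which Theorem~\ref{thm:kernel-ridge} has already certified as an exact reformulation of the worst-case estimation risk over the (trimmed) diagonal-loading uncertainty set.

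The final step is to read off the robustness guarantee: because \eqref{eq:kernel-any-norm} is a global upper bound on the robust cost \eqref{eq:kernel-ridge}, minimizing the former controls the latter and hence the worst-case risk itself. I do not expect a genuine technical obstacle here, since the structure is identical to Corollary~\ref{cor:any-norm-regularization}; the only point meriting care is interpretational, namely that the claim is an upper-bound (surrogate) guarantee — minimizing \eqref{eq:kernel-any-norm} does not in general reproduce the exact minimizer of \eqref{eq:kernel-ridge}, but for a suitable $\lambda$ it bounds the distributionally robust risk from above. I would also note explicitly that the finite dimensionality of $\R^{2M \times L}$, guaranteed by the finite pilot length $L$ and the finitely many representer points $\{\vecul x_i\}_{i \in [L]}$, is precisely what makes norm equivalence applicable, so the argument is confined to this finite-sample kernelized setting.
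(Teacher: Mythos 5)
Your proposal is correct and follows essentially the same route as the paper, which proves this corollary exactly as it proves Corollary~\ref{cor:any-norm-regularization}: equivalence of norms on the finite-dimensional space of $\mat W$ yields $\lambda\|\mat W\|^2 \ge \epsilon_0 \Tr[\mat W \mat W^\T]$ for suitable $\lambda$, so the objective in \eqref{eq:kernel-any-norm} upper-bounds the kernel-ridge (i.e., distributionally robust) cost of Theorem~\ref{thm:kernel-ridge}. Your added remarks on the explicit constant $c$, the choice $\lambda \ge \epsilon_0/c$, and the surrogate (upper-bound) nature of the guarantee are accurate refinements of the paper's terser argument.
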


Moreover, in analogy to Corollary \ref{cor:adv-learning-linear}, the following corollary is immediate.

\begin{corollary}[Data Augmentation for Kernel Regression]\label{cor:adv-learning-linear-RKHS}
Consider the nonlinear regression problem in Theorem \ref{thm:kernel-ridge}. Its data-perturbed counterpart can be constructed by taking into account the data perturbation vectors $(\rvec \Delta_{\ul s}, \rvec \Delta_{\ul z})$. Suppose that $\rvec \Delta_{\ul z}$ is uncorrelated with $\rvecul z$, with $\rvecul s$, and with $\rvec \Delta_{\ul s}$; in addition, $\rvec \Delta_{\ul s}$ is uncorrelated with $\rvecul z$. If the second-order moment of $\rvec \Delta_{\ul z}$ is upper bounded by $\epsilon_0 \mat I_L$, then the distributionally robust estimator of $\mat W$ becomes a kernel ridge regression (i.e., squared-$F$-norm regularized) method \eqref{eq:kernel-ridge}. The regularization term becomes $\Tr \big[\mat W \mat F \mat W^\H\big]$, which is known as the Tikhonov regularizer, if the second-order moment of $\rvec \Delta_{\ul z}$ is upper bounded by $\epsilon_0 \mat F$ for some $\mat F \succeq \mat 0$.  \stp
\end{corollary}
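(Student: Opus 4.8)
The plan is to transcribe the argument of Corollary \ref{cor:adv-learning-linear} from the raw feature $\rvec x$ to the kernelized feature $\rvecul z = \vec \varphi(\rvecul x)$ and from $\rvec s$ to $\rvecul s$. First I would set up the data-augmented counterpart of Theorem \ref{thm:kernel-ridge}: impose the perturbed relation $(\rvecul s + \rvec \Delta_{\ul s}) = \mat W (\rvecul z + \rvec \Delta_{\ul z}) + \rvec e$ and minimize over $\mat W$ the worst-case augmented risk $\max \Tr \E_{(\rvecul x, \rvecul s) \sim \Ph_{\rvecul x, \rvecul s}} \E_{\rvec \Delta_{\ul z}, \rvec \Delta_{\ul s}} \rvec r \rvec r^\T$, where the inner maximization runs over perturbation laws whose second moment obeys the stated bound, and
\[
\rvec r \defeq \mat W (\rvecul z + \rvec \Delta_{\ul z}) - (\rvecul s + \rvec \Delta_{\ul s}) = (\mat W \rvecul z - \rvecul s) + (\mat W \rvec \Delta_{\ul z} - \rvec \Delta_{\ul s}).
\]
Expanding $\rvec r \rvec r^\T$ against this split produces the unperturbed quadratic $(\mat W \rvecul z - \rvecul s)(\mat W \rvecul z - \rvecul s)^\T$, two cross terms, and the perturbation-only quadratic $(\mat W \rvec \Delta_{\ul z} - \rvec \Delta_{\ul s})(\mat W \rvec \Delta_{\ul z} - \rvec \Delta_{\ul s})^\T$.

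Next I would apply the uncorrelatedness hypotheses to discard every $\mat W$-dependent cross contribution. Since $\rvec \Delta_{\ul z}$ is uncorrelated with $\rvecul z$ and with $\rvecul s$, and $\rvec \Delta_{\ul s}$ is uncorrelated with $\rvecul z$, the mixed second-moment matrices $\E[\rvecul z \rvec \Delta_{\ul z}^\T]$, $\E[\rvecul s \rvec \Delta_{\ul z}^\T]$, and $\E[\rvecul z \rvec \Delta_{\ul s}^\T]$ all vanish, so the cross terms collapse to the single matrix $\E[\rvecul s \rvec \Delta_{\ul s}^\T]$, which carries no $\mat W$ and hence is inert under the minimization. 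Using further that $\rvec \Delta_{\ul z}$ is uncorrelated with $\rvec \Delta_{\ul s}$, the perturbation-only quadratic reduces to $\mat W \E[\rvec \Delta_{\ul z} \rvec \Delta_{\ul z}^\T] \mat W^\T + \E[\rvec \Delta_{\ul s} \rvec \Delta_{\ul s}^\T]$, whose only $\mat W$-dependent piece is $\mat W \E[\rvec \Delta_{\ul z} \rvec \Delta_{\ul z}^\T] \mat W^\T$. Collecting terms, the augmented risk equals the nominal kernel least-squares risk $\Tr \E_{(\rvecul x, \rvecul s) \sim \Ph_{\rvecul x, \rvecul s}} [\mat W \vec \varphi(\rvecul x) - \rvecul s][\mat W \vec \varphi(\rvecul x) - \rvecul s]^\T$ plus $\Tr[\mat W \E[\rvec \Delta_{\ul z} \rvec \Delta_{\ul z}^\T] \mat W^\T]$, up to a $\mat W$-free constant; this is precisely why the second-order moment of $\rvec \Delta_{\ul s}$ never enters the optimum, mirroring the remark after Corollary \ref{cor:adv-learning-linear}.

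Finally I would carry out the inner maximization. The map $\mat A \mapsto \Tr[\mat W \mat A \mat W^\T]$ is monotonically nondecreasing on the positive semidefinite cone, because $\mat A_1 \succeq \mat A_2$ gives $\mat W (\mat A_1 - \mat A_2) \mat W^\T \succeq \mat 0$ and hence a nonnegative trace; therefore the worst case over $\E[\rvec \Delta_{\ul z} \rvec \Delta_{\ul z}^\T] \preceq \epsilon_0 \mat I_L$ is attained at the boundary $\E[\rvec \Delta_{\ul z} \rvec \Delta_{\ul z}^\T] = \epsilon_0 \mat I_L$, yielding the regularizer $\epsilon_0 \Tr[\mat W \mat W^\T]$ and thus the kernel ridge objective \eqref{eq:kernel-ridge}. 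Replacing the bound by $\epsilon_0 \mat F$ gives $\epsilon_0 \Tr[\mat W \mat F \mat W^\T]$, the kernel Tikhonov regularizer. The argument is essentially mechanical given Theorem \ref{thm:kernel-ridge} and Example \ref{exam:opt-estimator-RKHS}; the only place demanding care is the moment bookkeeping of the second step, where I must confirm that exactly the right mixed moments vanish and that the surviving $\rvec \Delta_{\ul s}$-moments are $\mat W$-free so they can be legitimately dropped.
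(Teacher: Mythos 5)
Your proposal is correct and follows exactly the route the paper intends: the paper states this corollary as immediate by analogy to Corollary \ref{cor:adv-learning-linear}, and your expansion of the perturbed quadratic risk, elimination of cross terms via the uncorrelatedness hypotheses, and attainment of the worst case at the boundary $\E[\rvec \Delta_{\ul z} \rvec \Delta_{\ul z}^\T] = \epsilon_0 \mat I_L$ (or $\epsilon_0 \mat F$) via monotonicity of $\mat A \mapsto \Tr[\mat W \mat A \mat W^\T]$ on the positive semidefinite cone is precisely that analogy made explicit. Your observation that the surviving $\rvec \Delta_{\ul s}$-moments are $\mat W$-free matches the paper's remark on why the second-order moment of the label perturbation is irrelevant to the optimum.
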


General uncertainty sets using the Wasserstein distance or the $F$-norm, beyond the diagonal $\epsilon_0$-perturbation (cf. Example \ref{exam:opt-estimator-RKHS}), can be straightforwardly employed and the distributional robustness modeling and analyses remain routine; cf. Subsection \ref{subsec:BF-nontrivial-sets}. Hence, we omit them here. However, such complicated approaches are computationally prohibitive in practice when $L$ or $M$ is large.

\subsubsection{Multi-Frame Case: Dynamic Channel Evolution}\label{subsubsec:multi-frame-RKHS}

As in \eqref{eq:opt-BF-explicit-multi-frame}, the multi-frame formulation in RKHSs is
\begin{equation}\label{eq:kernel-ridge-multi-frame}
\begin{array}{l}
    \displaystyle \min_{\mat W \in \R^{2M \times L}} \Tr \E_{\rvecul x, \rvecul s}[\mat W \cdot \vec \varphi(\rvecul x) - \rvecul s][\mat W \cdot \vec \varphi(\rvecul x) - \rvecul s]^\T + \\
    \quad \quad \quad \quad \quad \quad \quad \quad \lambda \cdot \Tr [\mat W - \mat W'] [\mat W - \mat W']^\T,
\end{array}
\end{equation}
where $\mat W'$ denotes the beamformer in the immediately preceding frame and serves as a \bfit{prior knowledge} of $\mat W$.

\begin{claim}[Multi-Frame Estimation in RHKS]\label{claim:BF-multi-frame-RKHS}
The solution to \eqref{eq:kernel-ridge-multi-frame} is given by (cf. \eqref{eq:opt-estimator-RKHS-real})
 \begin{equation}\label{eq:BF-multi-frame-RKHS}
\begin{array}{cl}
    \mat W^\star_{\text{RKHS-MF}} &= [\mat R_{\ul{zs}} + \lambda \mat W^{\prime \T}]^\T [\mat R_{\ul z} + \lambda \mat I_L]^{-1} \\
    &= \left(\frac{1}{L} \matul S \mat K + \lambda \mat W^{\prime} \right) \cdot \left(\frac{1}{L}\mat K^2 + \lambda \mat I_L\right)^{-1},
\end{array}
\end{equation}
where $\lambda \ge 0$ is a tuning parameter to control the similarity between $\mat W$ and $\mat W^\prime$; cf. Claim \ref{claim:BF-multi-frame}. \stp
\end{claim}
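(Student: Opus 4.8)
The plan is to treat \eqref{eq:kernel-ridge-multi-frame} as an unconstrained convex quadratic program in $\mat W$ and to reduce it, by collecting terms, to an ordinary Wiener--Hopf problem with shifted moment matrices, exactly paralleling the linear derivation that led from \eqref{eq:opt-BF-explicit-multi-frame} to Claim \ref{claim:BF-multi-frame}. Writing $\rvecul z \defeq \vec \varphi(\rvecul x)$ as in Proposition \ref{prop:RHKS-reformulation}, the data-fidelity trace expands into $\Tr[\mat W \mat R_{\ul z} \mat W^\T - \mat W \mat R_{\ul{zs}} - \mat R^\T_{\ul{zs}} \mat W^\T + \mat R_{\ul s}]$ in terms of the moment matrices $\mat R_{\ul z}, \mat R_{\ul{zs}}, \mat R_{\ul s}$.

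First I would expand the prior-centered penalty as $\lambda \Tr[\mat W \mat W^\T - \mat W \mat W^{\prime \T} - \mat W^\prime \mat W^\T + \mat W^\prime \mat W^{\prime \T}]$ and add it to the fidelity term. Grouping the quadratic, linear, and constant pieces in $\mat W$, the combined objective takes precisely the Wiener--Hopf shape
\[
\Tr\big[\mat W (\mat R_{\ul z} + \lambda \mat I_L)\mat W^\T - \mat W (\mat R_{\ul{zs}} + \lambda \mat W^{\prime \T}) - (\mat R_{\ul{zs}} + \lambda \mat W^{\prime \T})^\T \mat W^\T + (\mat R_{\ul s} + \lambda \mat W^\prime \mat W^{\prime \T})\big],
\]
i.e.\ it is the nominal estimation objective in which $\mat R_{\ul z}$ is replaced by $\mat R_{\ul z} + \lambda \mat I_L$ and $\mat R_{\ul{zs}}$ by $\mat R_{\ul{zs}} + \lambda \mat W^{\prime \T}$, while the additive constant in the $\mat R_{\ul s}$ block does not affect the minimizer.

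Next I would invoke the real-space Wiener solution already recorded in \eqref{eq:opt-estimator-RKHS-real}: the minimizer of $\Tr[\mat W \mat A \mat W^\T - \mat W \mat B - \mat B^\T \mat W^\T + \cdots]$ with $\mat A \succ \mat 0$ is $\mat W^\star = \mat B^\T \mat A^{-1}$, obtained by setting the gradient $2\mat W \mat A - 2\mat B^\T$ to zero. Taking $\mat A = \mat R_{\ul z} + \lambda \mat I_L$ and $\mat B = \mat R_{\ul{zs}} + \lambda \mat W^{\prime \T}$ yields $\mat W^\star_{\text{RKHS-MF}} = (\mat R^\T_{\ul{zs}} + \lambda \mat W^\prime)(\mat R_{\ul z} + \lambda \mat I_L)^{-1}$, the first line of \eqref{eq:BF-multi-frame-RKHS}. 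The second line then follows by substituting the nominal moments of Proposition \ref{prop:RHKS-reformulation}, namely $\math R_{\ul z} = \tfrac{1}{L}\mat K^2$ and $\math R^\T_{\ul{zs}} = \tfrac{1}{L}\matul S \mat K$.

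I do not expect a genuine obstacle here, since the argument is the verbatim kernelized, real-space analog of Claim \ref{claim:BF-multi-frame}; the only points requiring care are bookkeeping. The first is invertibility of the shifted Gram matrix $\mat R_{\ul z} + \lambda \mat I_L$: because $\math R_{\ul z} = \tfrac{1}{L}\mat K^2 \succeq \mat 0$, any $\lambda > 0$ forces $\mat R_{\ul z} + \lambda \mat I_L \succ \mat 0$, making the quadratic strictly convex with a unique stationary point that is therefore the global minimizer. The second is keeping the transpose conventions for the rectangular block $\mat R_{\ul{zs}} \in \R^{L \times 2M}$ consistent so that $\mat W^\star$ carries the correct $2M \times L$ shape and reduces to \eqref{eq:opt-estimator-RKHS-real} as $\lambda \to 0$.
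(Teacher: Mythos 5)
Your proposal is correct and follows essentially the same route the paper intends: expand the prior-centered penalty, absorb it into the Wiener--Hopf quadratic by shifting $\mat R_{\ul z} \mapsto \mat R_{\ul z} + \lambda \mat I_L$ and $\mat R_{\ul{zs}} \mapsto \mat R_{\ul{zs}} + \lambda \mat W^{\prime \T}$, and read off the minimizer exactly as in \eqref{eq:opt-BF-explicit-multi-frame} and Claim \ref{claim:BF-multi-frame}. Your attention to the transpose bookkeeping is warranted: the dimensionally consistent form is $(\mat R^\T_{\ul{zs}} + \lambda \mat W^\prime)(\mat R_{\ul z} + \lambda \mat I_L)^{-1}$, which is what your derivation yields and what \eqref{eq:BF-multi-frame-RKHS} means once $\mat R_{\ul{zs}} \in \R^{L \times 2M}$ and $\mat W^\prime \in \R^{2M \times L}$ are taken into account.
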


The remaining distributional robustness modeling and analyses on \eqref{eq:kernel-ridge-multi-frame} against the uncertainties in $\math R_{\ul z}$, $\math R_{\ul{xz}}$, and $\math R_{\ul s}$ are technically straightforward; cf. Subsection \ref{subsec:BF-multi-frame}. Therefore, we omit them here.

\subsection{Neural Networks}\label{subsec:NN}
With the $\mat W$-parameterization $\vec \phi_{\mat W_{[R]}}(\rvecul x)$ of $\vec \phi(\rvecul x)$ in feedforward multi-layer neural networks, i.e., \eqref{eq:non-lin-func-NNFS}, the distributionally robust estimation problem \eqref{eq:opt-esti-dist-robust} becomes 
\begin{equation}\label{eq:opt-esti-dist-robust-NN}
    \min_{\mat W_{[R]}}~~\max_{\P_{\rvecul x, \rvecul s} \in \cal U_{\rvecul x, \rvecul s}} \Tr \E_{\rvecul x, \rvecul s}[\vec \phi_{\mat W_{[R]}}(\rvecul x) - \rvecul s][\vec \phi_{\mat W_{[R]}}(\rvecul x) - \rvecul s]^\T,
\end{equation}
where $\mat W_{[R]} \defeq \{\mat W_1,\mat W_2,\ldots,\mat W_R\}$ and $\vec \phi_{\mat W_{[R]}}(\rvecul x)$ is defined in \eqref{eq:non-lin-func-NNFS}. Problem \eqref{eq:opt-esti-dist-robust-NN} is highly nonlinear in both argument $\rvecul x$ and parameter $\mat W_{[R]}$, which is different from the case in reproducing kernel Hilbert spaces where the $\mat W$-linearization features. Hence, problem \eqref{eq:opt-esti-dist-robust-NN} is too complicated to solve to global optimality. According to \cite[Cor.~33]{shafieezadeh2019regularization}, under several technical conditions (plus the boundedness of the feasible region of $\mat W_{[R]}$), \eqref{eq:opt-esti-dist-robust-NN} is upper bounded by a spectral-norm-regularized empirical risk minimization problem
\begin{equation}\label{eq:opt-esti-regularized-NN}
\begin{array}{l}
    \displaystyle \min_{\mat W_{[R]}} \displaystyle \frac{1}{L} \sum^L_{i=1} \Tr[\vec \phi_{\mat W_{[R]}}(\vecul x_i) - \vecul s_i][\vec \phi_{\mat W_{[R]}}(\vecul x_i) - \vecul s_i]^\T + \\
    \quad \quad \quad \quad \quad \quad \displaystyle \lambda' \cdot \sum^R_{r = 1} \| \mat W_{r} \|_2,
\end{array}
\end{equation}
for some regularization coefficient $\lambda' \ge 0$, where $\|\cdot\|_2$ denotes the spectral norm of a matrix (i.e., the induced $2$-norm). Eq. \eqref{eq:opt-esti-regularized-NN} rigorously justifies the popular norm regularization method in training neural networks: By diminishing the upper bound \eqref{eq:opt-esti-regularized-NN} of \eqref{eq:opt-esti-dist-robust-NN}, the true error in \eqref{eq:opt-esti-dist-robust-NN} can be controlled from above. The regularized ERM problem \eqref{eq:opt-esti-regularized-NN} is reminiscent of the ridge regression and the kernel ridge regression methods in Theorems \ref{thm:ridge-regression} and \ref{thm:kernel-ridge} for distributional robustness in linear regression and RKHS linear regression, respectively. Supposing that $\mat W^\star_{[R]}$ is an (approximated, or sub-optimal) solution\footnote{Neural networks are hard to be globally optimized.} of \eqref{eq:opt-esti-regularized-NN}, then the distributionally robust optimal estimator of the transmitted signal $\rvec s$ can be obtained as
\[
\rvech s = \mat \Gamma_M \cdot \vec \phi_{\mat W^\star_{[R]}}(\rvecul x).
\]
Therefore, in training a neural network for wireless signal estimation, it is recommended to apply the norm regularization methods. Since norms on real spaces are equivalent, \eqref{eq:opt-esti-regularized-NN} can be further upper bounded by
\begin{equation}\label{eq:opt-esti-p-regularized-NN}
\begin{array}{l}
    \displaystyle \min_{\mat W_{[R]}} \displaystyle \frac{1}{L} \sum^L_{i=1} \Tr[\vec \phi_{\mat W_{[R]}}(\vecul x_i) - \vecul s_i][\vec \phi_{\mat W_{[R]}}(\vecul x_i) - \vecul s_i]^\T + \\
    \quad \quad \quad \quad \quad \quad \displaystyle \lambda \cdot \sum^R_{r = 1} \| \mat W_{r} \|,
\end{array}
\end{equation}
for any matrix norm $\|\cdot\|$ and some $\lambda \ge 0$; $\lambda$ depends on $\lambda'$ and $\|\cdot\|$. As a result, to achieve distributional robustness in training a neural network, any-norm-regularized learning method in \eqref{eq:opt-esti-p-regularized-NN} can be considered.


\section{Experiments}\label{sec:experiment}
We consider a point-to-point multiple-input-multiple-output (MIMO) wireless communication problem where the transmitter is located at $[0, 0]$ and the receiver is at $[500\text{m}, 450\text{m}]$. We randomly sample $25$ points according to the uniform distribution on the square of $[0, 500\text{m}] \times [0, 500\text{m}]$ to denote the scatters' positions; i.e., there exist $26$ radio paths. All the source data and codes are available online at GitHub with thorough implementation comments: \url{https://github.com/Spratm-Asleaf/DRRC}. In this section, we only present major experimental setups and results; readers can use the shared source codes to explore (or verify) minor ones.

The following eleven methods are implemented in the experiments: 1) \textbf{Wiener}: Wiener beamformer \eqref{eq:BF-Wiener}, upper expression; 2) \textbf{Wiener-DL}: Wiener beamformer with diagonal loading \eqref{eq:DRBF-DL}, upper expression; 3) \textbf{Wiener-DR}: Distributionally robust Wiener beamformer \eqref{eq:wasserstein-BF-R-x} and \eqref{eq:BF-dist-robust-max-F-norm-linear}; 4) \textbf{Wiener-CE}: Channel-estimation-based Wiener beamformer \eqref{eq:BF-Wiener}, lower expression; 5) \textbf{Wiener-CE-DL}: Channel-estimation-based Wiener beamformer with diagonal loading \eqref{eq:DRBF-DL}, lower expression; 6) \textbf{Wiener-CE-DR}: Distributionally robust channel-estimation-based Wiener beamformer \eqref{eq:diag-loading-using-H} and \eqref{eq:DRBF-GDL}; 7) \textbf{Capon}: Capon beamformer \eqref{eq:diag-loading-capon} for $\epsilon_0 = 0$; 8) \textbf{Capon-DL}: Capon beamformer with diagonal loading \eqref{eq:diag-loading-capon}; 9) \textbf{ZF}: Zero-forcing beamformer where $\mat W_{\text{ZF}} \defeq (\math H^\H \math H)^{-1}\math H^\H$ and $\math H$ denotes the estimated channel matrix; 10) \textbf{Kernel}: Kernel receiver \eqref{eq:opt-estimator-RKHS} with $\epsilon_0 = 0$ in \eqref{eq:BF-dist-robust-z-max}; and 11) \textbf{Kernel-DL}: Kernel receiver with diagonal loading \eqref{eq:kernel-ridge-sol-2}. Note that the diagonal-loading-based methods are particular cases of distributionally robust combiners; see, e.g., Corollary \ref{cor:solutions-under-moments} and Example \ref{exam:opt-estimator-RKHS}. The deep-learning-based (DL-based) methods in Subsection \ref{subsec:NN} are not implemented in this section because they have been deeply studied in our previous publications, e.g., \cite{ye2017power,van2022transfer}; we only comment on the advantages and disadvantages of DL-based methods compared with the listed eleven methods in Section \ref{sec:conclusion} (Conclusions).

When covariance matrix $\mat R_s$ of transmitted signal $\rvec s$ is unknown for the receiver (e.g., in ISAC systems, $\mat R_s$ needs to vary from one frame to another for sensing), $\mat R_s$ is estimated by the sample covariance matrix $\math R_s = \mat S \mat S^\H/L$. The channel matrix $\mat H$ is estimated using the minimum mean-squared error method, i.e., $\math H = \mat X \mat S^\H (\mat S \mat S^\H)^{-1}$. Covariance matrix $\mat R_v$ of channel noise $\rvec v$ is estimated using the least-square method, i.e., $\math R_v = (\mat X - \math H \mat S)(\mat X - \math H \mat S)^\H/L$. The matrices $\math R_s$, $\math H$, and $\math R_v$ are therefore uncertain compared to their true (but unknown; possibly time-varying) values $\mat R_s$, $\mat H$, and $\mat R_v$, respectively. The matrices $\math R_s$, $\math H$, and $\math R_v$ are used in beamformers such as the channel-estimation-based Wiener beamformer \eqref{eq:DRBF-DL}, the Capon beamformer, and the zero-forcing beamformer.

The combiners are determined on the training data set (i.e., pilot data). The performance evaluation method of combiners is mean-squared estimation error (MSE) on the test data set (i.e., non-pilot communication data): to be specific, $\|\mat S_{\text{test}} - \math S_{\text{test}}\|^2_F/(M \times L_{\text{test}})$ where $\mat S_{\text{test}} \in \C^{M \times L_{\text{test}}}$ is the test data block, $\math S_{\text{test}}$ is its estimate, and $L_{\text{test}}$ is the length of non-pilot test data units. 
As data-driven machine learning methods, all parameters (e.g., uncertainty quantification coefficients $\epsilon$'s) of combiners can be tuned using the popular cross-validation (e.g., one-shot cross-validation) method. The parameters can also be empirically tuned to save training times because cross-validation imposes a significant computational burden. 
This article mainly uses the empirical tuning method (i.e., trial-and-error) to tune each combiner to achieve its best average performance. For each test case, the MSE performances are averaged on $250$ Monte--Carlo episodes.

We consider an experimental scenario where impulse channel noises exist; i.e., the channel is non-Gaussian so linear beamformers are no longer sufficient. (Complementary experimental setups and results can be seen in online supplementary
materials.) The detailed setups are as follows. The transmitter has four antennas (i.e., $M = 4$) with unit transmit power; without loss of generality, each antenna is assumed to emit continuous-valued complex Gaussian signals. The receiver has eight antennas (i.e., $N = 8$). 
The SNR is $-10$dB, which is a challenging situation. 
The channel has impulse noises: i.e., in $L$ received signals (i.e., $[\rvec x_1, \rvec x_2, \ldots, \rvec x_L]$) that are contaminated by usual complex Gaussian channel noises, $10\%$ of them are also contaminated by uniform noises with the maximum amplitude of $1.5$, which is a relatively large value compared to the amplitude of the usual Gaussian channel noises. We assume that a communication frame contains $500$ non-pilot data units; i.e., $L_{\text{test}} = 500$. The experimental results are shown in Tables \ref{tab:scenario-2-1}$\sim$\ref{tab:scenario-2-6}, from which the following main points can be outlined.
\begin{enumerate}
    \item A larger number of pilot data benefits the estimation performances of wireless signals. 
    
    \item The diagonal loading operation can significantly improve the estimation performances especially when the pilot data size is relatively small.

    \item Since the signal model under impulse channel noises is no longer linear Gaussian, the optimal combiner in the MSE sense must be nonlinear. Therefore, the Kernel and the Kernel-DL methods have the potential to outperform other linear beamformers, i.e., to \textit{suppress outliers}. However, in practice, the non-robust Kernel method may undergo numerical instability in calculating the inverse of the kernel matrix $\mat K$. Therefore, its actual MSEs are not necessarily smaller than those of linear beamformers. Nevertheless, the robust Kernel-DL method consistently outperforms all other beamformers.

    \item Distributionally robust combiners (including diagonal-loading ones) can combat the adverse effect introduced by the limited pilot size and several types of uncertainties in the signal model (e.g., outliers). To be specific, for example, all diagonal-loading combiners can outperform their original non-diagonal-loading counterparts; cf. the Wiener and the Wiener-DL methods, the Wiener-CE and the Wiener-CE-DL methods, the Capon and the Capon-DL methods, and the Kernel and the Kernel-DL methods. In addition, the Wiener-DR beamformer \eqref{eq:BF-dist-robust-max-F-norm-linear} using the $F$-norm uncertainty set has the potential to outperform the Wiener-DL beamformer \eqref{eq:DRBF-DL} that employs the simple uncertainty set \eqref{eq:uncertainty-set-diag-loading}.
    
    \item Although the Wiener-DR beamformer has the potential to work better than the Wiener-DL beamformer, it has a significant computational burden, which may not be suitable for timely use in practice especially when the computing resources are limited. Hence, the Wiener-DL beamformer is practically promising because it can provide an excellent balance between the computational burden and the actual performance. 
\end{enumerate}

\textit{Remarks on Parameter Tuning}: From experiments, we find that the uncertainty quantification coefficients $\epsilon$'s (e.g., in diagonal loading) can be neither too large nor too small. When $\epsilon$'s are too large, the combiners become overly conservative, while when $\epsilon$'s are too small, the combiners cannot offer sufficient robustness against data scarcity and model uncertainties. In both cases of inappropriate $\epsilon$'s, the performances of combiners degrade significantly. Therefore, $\epsilon$'s must be carefully tuned in practice, and a rigorous method to tune $\epsilon$'s can be the cross-validation method on the training data set (i.e., the pilot data set). If practitioners just pursue satisfaction rather than optimality, the empirical tuning method is recommended to save training times.

\begin{table}[!htbp]
\centering
\caption{Experimental Results (Pilot Size = 10)}
\label{tab:scenario-2-1}
\begin{tabular}{lrr|lrr}
\bottomrule[0.8pt]
Combiner  & MSE & Time & Combiner  & MSE & Time \\
\specialrule{0.7pt}{0pt}{0pt}
Wnr           & 3.30      & 1.49e-04      & Wnr-DL          & 2.11    & 9.81e-06 \\
\hline
Wnr-DR        & 1.97      & \red{3.16e+00}      & Wnr-CE          & 3.30    & 4.59e-05 \\
\hline
Wnr-CE-DL     & 2.50      & 2.17e-05      & Wnr-CE-DR       & 3.31    & 4.63e-05 \\
\hline
Capon         & 5.44      & 4.42e-05      & Capon-DL        & 4.52    & 2.50e-05 \\
\hline
ZF            & 2.12      & 2.54e-05      & Kernel          & 1.07    & 1.60e-04 \\
\hline
Kernel-DL     & \blue{0.80}      & 5.59e-05      &                 &         &      \\
\toprule[0.8pt]
\multicolumn{6}{l}{
\tabincell{l}{
\textbf{Wnr}: The abbreviation for Wiener. \\
\textbf{Time}: The training time averaged on 250 Monte--Carlo episodes.
}
}
\end{tabular}
\end{table}

\begin{table}[!htbp]
\centering
\caption{Experimental Results (Pilot Size = 15)}
\label{tab:scenario-2-2}
\begin{tabular}{lrr|lrr}
\bottomrule[0.8pt]
Combiner  & MSE & Time & Combiner  & MSE & Time \\
\specialrule{0.7pt}{0pt}{0pt}
Wnr           & 1.38      & 1.65e-04      & Wnr-DL          & 1.23    & 1.10e-05 \\
\hline
Wnr-DR        & 1.07      & \red{3.21e+00}      & Wnr-CE          & 1.38    & 4.44e-05 \\
\hline
Wnr-CE-DL     & 1.30      & 2.12e-05      & Wnr-CE-DR       & 1.39    & 4.28e-05 \\
\hline
Capon         & 4.48      & 4.31e-05      & Capon-DL        & 4.34    & 2.42e-05 \\
\hline
ZF            & 2.97      & 2.44e-05      & Kernel          & 1.12    & 1.94e-04 \\
\hline
Kernel-DL     & \blue{0.70}      & 9.23e-05      &                 &         &      \\
\toprule[0.8pt]
\end{tabular}
\end{table}

\begin{table}[!htbp]
\centering
\caption{Experimental Results (Pilot Size = 20)}
\label{tab:scenario-2-3}
\begin{tabular}{lrr|lrr}
\bottomrule[0.8pt]
Combiner  & MSE & Time & Combiner  & MSE & Time \\
\specialrule{0.7pt}{0pt}{0pt}
Wnr           & 1.12      & 1.86e-04      & Wnr-DL          & 1.05    & 1.87e-05 \\
\hline
Wnr-DR        & 0.93      & \red{7.19e+00}      & Wnr-CE          & 1.12    & 5.78e-05 \\
\hline
Wnr-CE-DL     & 1.08      & 3.14e-05      & Wnr-CE-DR       & 1.13    & 6.01e-05 \\
\hline
Capon         & 5.01      & 5.93e-05      & Capon-DL        & 4.94    & 3.81e-05 \\
\hline
ZF            & 3.82      & 3.56e-05      & Kernel          & 1.20    & 4.48e-04 \\
\hline
Kernel-DL     & \blue{0.66}      & 3.11e-04      &                 &         &      \\
\toprule[0.8pt]
\end{tabular}
\end{table}

\begin{table}[!htbp]
\centering
\caption{Experimental Results (Pilot Size = 25)}
\label{tab:scenario-2-4}
\begin{tabular}{lrr|lrr}
\bottomrule[0.8pt]
Combiner  & MSE & Time & Combiner  & MSE & Time \\
\specialrule{0.7pt}{0pt}{0pt}
Wnr           & 0.92      & 1.41e-04      & Wnr-DL          & 0.88    & 1.11e-05 \\
\hline
Wnr-DR        & 0.80      & \red{4.22e+00}      & Wnr-CE          & 0.92    & 5.02e-05 \\
\hline
Wnr-CE-DL     & 0.90      & 2.44e-05      & Wnr-CE-DR       & 0.92    & 4.78e-05 \\
\hline
Capon         & 4.94      & 4.93e-05      & Capon-DL        & 4.89    & 2.85e-05 \\
\hline
ZF            & 4.06      & 2.72e-05      & Kernel          & 1.14    & 4.26e-04 \\
\hline
Kernel-DL     & \blue{0.60}      & 2.95e-04      &                 &         &      \\
\toprule[0.8pt]
\end{tabular}
\end{table}

\begin{table}[!htbp]
\centering
\caption{Experimental Results (Pilot Size = 50)}
\label{tab:scenario-2-5}
\begin{tabular}{lrr|lrr}
\bottomrule[0.8pt]
Combiner  & MSE & Time & Combiner  & MSE & Time \\
\specialrule{0.7pt}{0pt}{0pt}
Wnr           & 0.69      & 1.75e-04      & Wnr-DL          & 0.68    & 1.85e-05 \\
\hline
Wnr-DR        & 0.65      & \red{6.10e+00}      & Wnr-CE          & 0.69    & 5.81e-05 \\
\hline
Wnr-CE-DL     & 0.68      & 3.03e-05      & Wnr-CE-DR       & 0.70    & 5.90e-05 \\
\hline
Capon         & 6.95      & 5.97e-05      & Capon-DL        & 6.93    & 3.75e-05 \\
\hline
ZF            & 6.36      & 3.38e-05      & Kernel          & 0.92    & 1.81e-03 \\
\hline
Kernel-DL     & \blue{0.53}      & 1.67e-03      &                 &         &      \\
\toprule[0.8pt]
\end{tabular}
\end{table}

\begin{table}[!htbp]
\centering
\caption{Experimental Results (Pilot Size = 100)}
\label{tab:scenario-2-6}
\begin{tabular}{lrr|lrr}
\bottomrule[0.8pt]
Combiner  & MSE & Time & Combiner  & MSE & Time \\
\specialrule{0.7pt}{0pt}{0pt}
Wnr           & 0.57      & 3.41e-04      & Wnr-DL          & 0.57    & 3.64e-05 \\
\hline
Wnr-DR        & 0.55      & \red{4.96e+00}      & Wnr-CE          & 0.57    & 6.35e-05 \\
\hline
Wnr-CE-DL     & 0.57      & 2.93e-05      & Wnr-CE-DR       & 0.58    & 6.07e-05 \\
\hline
Capon         & 9.89      & 6.88e-05      & Capon-DL        & 9.88    & 3.99e-05 \\
\hline
ZF            & 9.45      & 3.27e-05      & Kernel          & 0.72    & 5.93e-03 \\
\hline
Kernel-DL     & \blue{0.49}      & 5.83e-03      &                 &         &      \\
\toprule[0.8pt]
\end{tabular}
\end{table}

\section{Conclusions}\label{sec:conclusion}
This article introduces a unified mathematical framework for receive combining of wireless signals from the perspective of data-driven machine learning, which reveals that channel estimation is not a necessary operation. 
To combat the limited pilot size and several types of uncertainties in the signal model, the distributionally robust (DR) receive combining framework is then suggested. We prove that the diagonal-loading (DL) methods are distributionally robust against the scarcity of pilot data and the uncertainties in the signal model. In addition, we generalize the diagonal-loading methods to achieve better estimation performance (e.g., the DR Wiener beamformer using $F$-norm for uncertainty quantification), at the cost of significantly higher computational burdens. Experiments suggest that nonlinear combiners such as the Kernel and the Kernel-DL methods have the potential when the pilot size is small and/or the signal model is not linear Gaussian. Compared with the Kernel and the Kernel-DL combiners, neural-network-based solutions \cite{ye2017power,van2022transfer} have a stronger expressive capability of nonlinearities, which however are unscalable in the numbers of transmit and receive antennas, and significantly more time-consuming in training and more troublesome in tuning hyper-parameters (e.g., the number of layers and the number of neurons in each layer) than the studied eleven combiners.

\appendices

\section{Structured Representation of Nonlinear Functions}\label{append:func-representation}
In Section \ref{sec:prelimilary}, we have reviewed two popular frameworks for representing (nonlinear) functions: reproducing kernel Hilbert spaces (RKHS) and neural network function spaces (NNFS). Typical kernel functions $\ker(\cdot, \cdot)$ to define RKHSs include Gaussian kernel, Matern kernel, Linear kernel, Laplacian kernel, and Polynomial kernel.
Mathematical details of these kernel functions can be found in \cite[Subsec.~14.2]{murphy2012machine}, \cite[Ex.~1]{shafieezadeh2019regularization}.
Typical activation functions $\sigma(\cdot)$ to define NNFSs include Hyperbolic tangent (i.e, tanh) function, Softmax function, Sigmoid function, Rectified linear unit (ReLU) function, and Exponential linear unit (ELU) function. 
Mathematical details of these activation functions can be found in \cite[Ex.~2]{shafieezadeh2019regularization}.

\section{Details on Real-Space Signal Representation}\label{append:details-real-representation}
Let $\mat R_x \defeq \E \rvec x \rvec x^\H$, $\mat C_x \defeq \E \rvec x \rvec x^\T$, $\mat C_s \defeq \E \rvec s \rvec s^\T$, and $\mat C_v \defeq \E \rvec v \rvec v^\T = \mat 0$. 
We have
\[
\begin{array}{cl}
\mat R_{\ul x} \defeq \E{\underline{\rvec x} \underline{\rvec x}^\T} = 
\displaystyle \frac{1}{2}\left[
\begin{array}{cc}
   \Re (\mat R_x + \mat C_x)  & \Im (-\mat R_x + \mat C_x)\\
   \Im (\mat R_x + \mat C_x)  & \Re (\mat R_x - \mat C_x)
\end{array}
\right].
\end{array}
\]
\[
\begin{array}{cl}
\mat R_{\ul s} \defeq \E{\underline{\rvec s} \underline{\rvec s}^\T} &= 
\displaystyle \frac{1}{2}\left[
\begin{array}{cc}
   \Re (\mat R_s + \mat C_s)  & \Im (-\mat R_s + \mat C_s)\\
   \Im (\mat R_s + \mat C_s)  & \Re (\mat R_s - \mat C_s)
\end{array}
\right],
\end{array}
\]
and
\[
\begin{array}{cl}
\mat R_{\ul v} \defeq \E{\underline{\rvec v} \underline{\rvec v}^\T} &= \displaystyle \frac{1}{2}\left[
\begin{array}{cc}
   \Re \mat R_v  & \Im -\mat R_v\\
   \Im \mat R_v & \Re \mat R_v
\end{array}
\right].
\end{array}
\]
Note that the following identities hold:
$
    \mat R_x = \mat H \mat R_s \mat H^\H + \mat R_v
$, 
$
    \mat C_x = \mat H \mat C_s \mat H^\T
$, 
$
    \mat R_{\ul x} = \matuul H \cdot \mat R_{\ul s} \cdot \matuul H^\T + \mat R_{\ul v}
$, 
and
$
    \mat R_{\ul{xs}} = \matuul H \cdot \mat R_{\ul s}
$.

\section{Extensive Reading on Distributional Uncertainty}\label{append:dist-uncertainty}
\subsection{Generalization Error and Distributional Robustness}
We use \eqref{eq:opt-esti} and \eqref{eq:esti-erm} as examples to illustrate the concepts. Supposing that $\vec \phi^\star$ solves the true problem \eqref{eq:opt-esti} and $\vec \phi^\star_{\text{ERM}}$ solves the surrogate problem \eqref{eq:esti-erm}, we have
\begin{equation}\label{eq:gen-err}
\begin{array}{l}
    \displaystyle \min_{\vec \phi} \Tr \E_{(\rvec x, \rvec s) \sim \P_{\rvec x, \rvec s}}[\vec \phi(\rvec x) - \rvec s][\vec \phi(\rvec x) - \rvec s]^\H \\
    
    \quad \quad = \Tr \E_{(\rvec x, \rvec s) \sim \P_{\rvec x, \rvec s}}[\vec \phi^\star(\rvec x) - \rvec s][\vec \phi^\star(\rvec x) - \rvec s]^\H \\

    \quad \quad \le \Tr \E_{(\rvec x, \rvec s) \sim \P_{\rvec x, \rvec s}}[\vec \phi^\star_{\text{ERM}}(\rvec x) - \rvec s][\vec \phi^\star_{\text{ERM}}(\rvec x) - \rvec s]^\H.
\end{array}
\end{equation}
To clarify further, the testing error in the last line (evaluated at the true distribution $\P_{\rvec x, \rvec s}$) of the learned estimator $\vec \phi^\star_{\text{ERM}}$ may be (much) larger than the optimal error in the first two lines, although $\vec \phi^\star_{\text{ERM}}$ has the smallest training error (evaluated at the nominal distribution $\Ph_{\rvec x, \rvec s}$), i.e.,
\begin{equation}\label{eq:train-err}
\begin{array}{l}
    \displaystyle \min_{\vec \phi}  \Tr \E_{(\rvec x, \rvec s) \sim \Ph_{\rvec x, \rvec s}}[\vec \phi(\rvec x) - \rvec s][\vec \phi(\rvec x) - \rvec s]^\H \\

    \quad \quad = \min_{\vec \phi}  \Tr \frac{1}{L} \sum^L_{i=1}[\vec \phi(\vec x_i) - \vec s_i][\vec \phi(\vec x_i) - \vec s_i]^\H \\
    
    \quad \quad = \Tr \frac{1}{L} \sum^L_{i=1}[\vec \phi^\star_{\text{ERM}}(\vec x_i) - \vec s_i][\vec \phi^\star_{\text{ERM}}(\vec x_i) - \vec s_i]^\H \\

    \quad \quad \le \Tr \frac{1}{L} \sum^L_{i=1}[\vec \phi^\star(\vec x_i) - \vec s_i][\vec \phi^\star(\vec x_i) - \vec s_i]^\H.
\end{array}
\end{equation}
In the terminologies of machine learning, the difference between the testing error and the training error, i.e.,
\[
\begin{array}{l}
\Tr \E_{(\rvec x, \rvec s) \sim \P_{\rvec x, \rvec s}}[\vec \phi^\star_{\text{ERM}}(\rvec x) - \rvec s][\vec \phi^\star_{\text{ERM}}(\rvec x) - \rvec s]^\H - \\
\quad \quad \quad \displaystyle \Tr \E_{(\rvec x, \rvec s) \sim \Ph_{\rvec x, \rvec s}}[\vec \phi^\star_{\text{ERM}}(\rvec x) - \rvec s][\vec \phi^\star_{\text{ERM}}(\rvec x) - \rvec s]^\H \\
= 
\Tr \E_{\rvec x, \rvec s}[\vec \phi^\star_{\text{ERM}}(\rvec x) - \rvec s][\vec \phi^\star_{\text{ERM}}(\rvec x) - \rvec s]^\H - \\
\quad \quad \quad \displaystyle \Tr \frac{1}{L} \sum^L_{i=1}[\vec \phi^\star_{\text{ERM}}(\vec x_i) - \vec s_i][\vec \phi^\star_{\text{ERM}}(\vec x_i) - \vec s_i]^\H
\end{array}
\]
is called the \textit{generalization error} of $\vec \phi^\star_{\text{ERM}}$; the difference between the testing error and the optimal error, i.e.,
\[
\begin{array}{l}
\Tr \E_{\rvec x, \rvec s}[\vec \phi^\star_{\text{ERM}}(\rvec x) - \rvec s][\vec \phi^\star_{\text{ERM}}(\rvec x) - \rvec s]^\H - \\
\quad \quad \quad \displaystyle \Tr \E_{\rvec x, \rvec s}[\vec \phi^\star(\rvec x) - \rvec s][\vec \phi^\star(\rvec x) - \rvec s]^\H
\end{array}
\]
is called the \textit{excess risk} of $\vec \phi^\star_{\text{ERM}}$. In machine learning practice, we want to reduce both the generalization error and the excess risk. Most attention in the literature has been particularly paid to reducing generalization errors. Specifically, an upper bound of the true cost $\Tr \E_{(\rvec x, \rvec s) \sim \P_{\rvec x, \rvec s}}[\vec \phi(\rvec x) - \rvec s][\vec \phi(\rvec x) - \rvec s]^\H$ is first found and then minimize the upper bound: by minimizing the upper bound, the true cost can also be reduced. 

\begin{fact}\label{fact:DRO-upper-bound}
Suppose that the true distribution $\P_{0, \rvec x, \rvec s}$ of $(\rvec x, \rvec s)$ is included in $\cal U_{\rvec x, \rvec s}$; for notational clarity, we hereafter distinguish
$\P_{0, \rvec x, \rvec s}$ from $\P_{\rvec x, \rvec s}$. The true objective function evaluated at $\P_{0, \rvec x, \rvec s}$, i.e.,
\begin{equation}\label{eq:true-err}
\Tr \E_{(\rvec x, \rvec s) \sim \P_{0, \rvec x, \rvec s}}[\vec \phi(\rvec x) - \rvec s][\vec \phi(\rvec x) - \rvec s]^\H,~~~\forall \vec \phi \in \cal B,
\end{equation}
is upper bounded by the worst-case objective function of \eqref{eq:opt-esti-dist-robust}, i.e.,
\begin{equation}\label{eq:upper-bnd}
\max_{\P_{\rvec x, \rvec s} \in \cal U_{\rvec x, \rvec s}} \Tr \E_{(\rvec x, \rvec s) \sim \P_{\rvec x, \rvec s}}[\vec \phi(\rvec x) - \rvec s][\vec \phi(\rvec x) - \rvec s]^\H,~~~\forall \vec \phi \in \cal B.
\end{equation}
Therefore, by diminishing the upper bound in \eqref{eq:upper-bnd}, the true estimation error evaluated at $\P_{0, \rvec x, \rvec s}$ can also be reduced. However, the conventional empirical estimation error evaluated at $\Ph_{\rvec x, \rvec s}$ cannot upper bound the true estimation error \eqref{eq:true-err}. This performance guarantee is the benefit of considering the distributionally robust method \eqref{eq:opt-esti-dist-robust}. Due to the weak convergence property of the empirical distribution to the true data-generating distribution, that is, $d(\P_{0, \rvec x, \rvec s},~\Ph_{\rvec x, \rvec s}) \to 0$ as the sample size $L \to \infty$, there exists $\epsilon$ in \eqref{eq:wasserstein-ball}  for every $L$, such that $\P_{0, \rvec x, \rvec s}$ is included in $\cal U_{\rvec x, \rvec s}$ in $\P^L_{0, \rvec x, \rvec s}$-probability ($L$-fold product measure of $\P_{0, \rvec x, \rvec s}$).
\stp
\end{fact}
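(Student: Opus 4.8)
The plan is to prove the inequality \eqref{eq:upper-bnd} $\ge$ \eqref{eq:true-err} by the elementary observation that a supremum over a set dominates the value attained at any one of its members. Since the hypothesis places $\P_{0, \rvec x, \rvec s}$ inside $\cal U_{\rvec x, \rvec s}$, for each fixed $\vec \phi \in \cal B$ the true risk $\Tr \E_{(\rvec x, \rvec s) \sim \P_{0, \rvec x, \rvec s}}[\vec \phi(\rvec x) - \rvec s][\vec \phi(\rvec x) - \rvec s]^\H$ is simply one of the quantities over which \eqref{eq:upper-bnd} takes its maximum, and therefore cannot exceed that maximum. First I would note that the loss is a trace of a rank-one positive semi-definite outer product, hence nonnegative, so the relevant expectations are well-defined (possibly $+\infty$); the domination then holds pointwise in $\vec \phi$ with no regularity assumption on $\vec \phi$ or on the distributions beyond integrability.

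Next I would promote this pointwise bound to the optimization-level guarantee. Because \eqref{eq:upper-bnd} dominates \eqref{eq:true-err} uniformly over $\vec \phi \in \cal B$, taking the infimum over $\vec \phi$ on both sides preserves the inequality; in particular, if $\vec \phi^\star$ minimizes the robust objective \eqref{eq:opt-esti-dist-robust}, then its true risk is controlled by the optimal robust value, so any $\vec \phi$ that drives the robust objective down simultaneously reduces an upper bound on the unobservable true risk. To contrast this with the empirical risk, I would exhibit the failure of domination: the nominal measure $\Ph_{\rvec x, \rvec s}$ is a single fixed distribution and, unlike a supremum over $\cal U_{\rvec x, \rvec s}$, carries no majorization property, so nothing prevents $\Tr \E_{\Ph}[\cdots] < \Tr \E_{\P_0}[\cdots]$. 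A one-line counterexample suffices, namely an interpolating $\vec \phi$ that achieves zero empirical loss on the $L$ atoms of $\Ph_{\rvec x, \rvec s}$ while incurring strictly positive true risk, which is precisely the overfitting phenomenon already visible around \eqref{eq:gen-err}.

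Finally, for the probabilistic claim that a radius $\epsilon$ rendering $\P_{0, \rvec x, \rvec s} \in \cal U_{\rvec x, \rvec s}$ exists for every sample size $L$, I would invoke the weak convergence of the empirical measure to its data-generating law: under the metrics used here (Wasserstein, or MMD with a characteristic kernel, both metrizing weak convergence on the relevant class), $d(\P_{0, \rvec x, \rvec s}, \Ph_{\rvec x, \rvec s}) \to 0$ almost surely as $L \to \infty$ by the Glivenko--Cantelli / Varadarajan property. Quantitatively, finite-sample concentration bounds---the Fournier--Guillin rates for the Wasserstein distance, and the standard $O(L^{-1/2})$ deviation inequality for MMD---yield, for any confidence level $1 - \eta$, an explicit radius $\epsilon = \epsilon(L)$ with $\Pr[d(\P_{0, \rvec x, \rvec s}, \Ph_{\rvec x, \rvec s}) \le \epsilon] \ge 1 - \eta$ and $\epsilon(L) \downarrow 0$. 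Choosing this $\epsilon$ in \eqref{eq:wasserstein-ball} places $\P_{0, \rvec x, \rvec s}$ in $\cal U_{\rvec x, \rvec s}$ in $\P^L_{0, \rvec x, \rvec s}$-probability, as asserted.

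I expect no genuine analytic obstacle: the max-domination step is immediate and the empirical counterexample is routine. The only delicate point is the concentration rate in the last paragraph, which is distribution-dependent---it degrades with the dimension of $(\rvec x, \rvec s)$ and with heavy tails---so I would state the existence of $\epsilon(L)$ by citing the sharp Wasserstein and MMD concentration results rather than attempting a self-contained rate, keeping the emphasis on the structural guarantee rather than on the specific decay of $\epsilon(L)$.
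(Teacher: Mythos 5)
Your proposal is correct and matches the paper's (implicit) justification: the paper states this as a self-evident Fact with no separate proof, and the core step is exactly your observation that the maximum over $\cal U_{\rvec x, \rvec s}$ dominates the value at the member $\P_{0, \rvec x, \rvec s}$, with the existence of a suitable $\epsilon$ following from weak convergence/concentration of the empirical measure as the paper itself asserts. Your added counterexample for the failure of the empirical risk to majorize the true risk and the citation of quantitative concentration rates are reasonable elaborations but do not change the argument.
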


\subsection{Non-Stationary Channel Statistics}
In the main body of the article (see also Fact \ref{fact:DRO-upper-bound}), we assume that the true data-generating distribution $\P_{0, \rvec x, \rvec s}$ is time-invariant within a frame. In real-world operations, however, this assumption might be untenable.

As shown in Fig. \ref{fig-time-varying-true-dist}, the frame contains eight data units; we suppose that the first four units are pilot symbols and the rest four units are communication-data symbols.

\begin{figure}[!htbp]
    \centering
    \includegraphics[height=1.5cm]{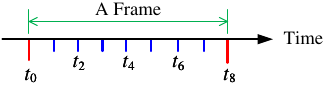}
    \caption{True data-generating distributions might be time-varying in a frame.}
    \label{fig-time-varying-true-dist}
\end{figure}

Let $\P_{0, \rvec x, \rvec s, i}$ denote the true data-generating distribution at time point $t_i$ where $i = 1, 2, \ldots, 8$. Specifically, we have $(\rvec x_i, \rvec s_i) \sim \P_{0, \rvec x, \rvec s, i}$ for every $i$. Therefore, the pilot data set (i.e., the training data set) $\{(\vec x_1, \vec s_1),(\vec x_2, \vec s_2),(\vec x_3, \vec s_3),(\vec x_4, \vec s_4)\}$ can be seen as realizations of the mean distribution $\P_{\text{train}, 0, \rvec x, \rvec s}$ of underlying true training-data distributions where 
\(
\P_{\text{train}, 0, \rvec x, \rvec s} = \sum^4_{i = 1} h_i \P_{0, \rvec x, \rvec s, i}
\), 
which is a mixture distribution with mixing weights $0 \le h_1, h_2, h_3, h_4 \le 1$; $\sum^4_{i = 1} h_i = 1$. Similarly, the communication data set (i.e., the testing data set) $\{(\vec x_5, \vec s_5),(\vec x_6, \vec s_6),(\vec x_7, \vec s_7),(\vec x_8, \vec s_8)\}$ can be seen as realizations of the mean $\P_{\text{test}, 0, \rvec x, \rvec s}$ of the underlying true testing-data distributions where
\(
\P_{\text{test}, 0, \rvec x, \rvec s} = \sum^8_{i = 5} h_i \P_{0, \rvec x, \rvec s, i}
\), 
with mixing weights $0 \le h_5, h_6, h_7, h_8 \le 1$; $\sum^8_{i = 5} h_i = 1$.

Suppose that 
\[
d(\Ph_{\text{train}, \rvec x, \rvec s},~\P_{\text{train}, 0, \rvec x, \rvec s}) \le \epsilon_1,
\] 
where $\Ph_{\text{train}, \rvec x, \rvec s} \defeq \frac{1}{4} \sum^4_{i=1} \delta_{(\vec x_i, \vec s_i)}$ is the data-driven estimate of $\P_{\text{train}, 0, \rvec x, \rvec s}$ and 
\[
d(\P_{\text{train}, 0, \rvec x, \rvec s},~\P_{\text{test}, 0, \rvec x, \rvec s}) \le \epsilon_2,
\]
for some $\epsilon_1, \epsilon_2 \ge 0$. We have the uncertainty quantification 
\[
d(\P_{\text{test}, 0, \rvec x, \rvec s},~\Ph_{\text{train}, \rvec x, \rvec s}) \le \epsilon \defeq \epsilon_1 + \epsilon_2.
\]
Therefore, the distributionally robust modeling and solution framework is still valid to hedge against the distributional uncertainty in the nominal distribution $\Ph_{\text{train}, \rvec x, \rvec s}$ compared to the underlying true distribution $\P_{\text{test}, 0, \rvec x, \rvec s}$. When $\P_{\text{train}, 0, \rvec x, \rvec s} = \P_{\text{test}, 0, \rvec x, \rvec s}$, as assumed in the main body of the article, we have $\epsilon_1 \to 0$ and $\epsilon \to \epsilon_2 = 0$ as the pilot size tends to infinity; however, when $\P_{\text{train}, 0, \rvec x, \rvec s} \neq \P_{\text{test},0, \rvec x, \rvec s}$, the radius $\epsilon \to \epsilon_2 \neq 0$ although $\epsilon_1 \to 0$.

Another justification for the DRO method is as follows. Suppose that there exists $\epsilon \ge 0$ such that
\[
d(\P_{0, \rvec x, \rvec s, i},~\Ph_{\text{train}, \rvec x, \rvec s}) \le \epsilon,~~~\forall i \in \{1, 2, \ldots, 8\}.
\] 
It means that, at every snapshot in the frame, the true data-generating distribution is included in the uncertainty set. Hence, the DRO cost can still upper bound the true cost even though the true distribution is time-varying; cf. Fact \ref{fact:DRO-upper-bound}.

\section{Additional Discussions on Distributionally Robust Estimation}\label{append:DRO-supplementary}
To develop this article, the typical minimum mean-squared error (MSE) criterion is employed; see \eqref{eq:opt-esti} and \eqref{eq:opt-BF}. Accordingly, the distributionally robust receive combining framework in this article is exemplified using the MSE cost function. The cost function for wireless signal estimation, however, can be any Borel-measurable function $h: \C^M \times \C^M \to \R_+$. As a result, the optimal estimation problem under the distribution $\P_{\rvec x, \rvec s}$ is given by
\begin{equation}\label{eq:opt-esti-cost-h}
    \min_{\vec \phi \in \cal B_{\C^{N} \to \C^{M}}} \E_{\rvec x, \rvec s} h[\vec \phi(\rvec x), \rvec s].
\end{equation}
Specific examples of $h$ in wireless communications can be, e.g., mean absolute error, Huber's cost function \cite{wang1998robust,katkovnik2006performance} where $h$ is no longer quadratic as in \eqref{eq:opt-esti} and \eqref{eq:opt-BF}. Accordingly, when the distributional uncertainty exists in $\P_{\rvec x, \rvec s}$, the distributionally robust receive combining framework becomes
\begin{equation}\label{eq:opt-esti-dist-robust-cost-h}
    \min_{\vec \phi \in \cal B_{\C^{N} \to \C^{M}}} \max_{\P_{\rvec x, \rvec s} \in \cal U_{\rvec x, \rvec s}} \E_{\rvec x, \rvec s} h[\vec \phi(\rvec x), \rvec s].
\end{equation}
Problem \eqref{eq:opt-esti-dist-robust-cost-h} is generally challenging to solve because it is an infinite-dimensional program. Therefore, in practice, we can limit the feasible region of $\vec \phi$ to a parameterized subspace of $\cal B_{\C^{N} \to \C^{M}}$, for example, a reproducing kernel Hilbert space $\cal H$ or a neural network function space $\cal K$; see Section \ref{sec:prelimilary}. Consequently, Problem \eqref{eq:opt-esti-dist-robust-cost-h} is approximated by the following finite-dimensional (in terms of $\mat W$) program
\begin{equation}\label{eq:opt-esti-dist-robust-cost-h-approx}
    \min_{\mat W} \max_{\P_{\rvec x, \rvec s} \in \cal U_{\rvec x, \rvec s}} \E_{\rvec x, \rvec s} h[\vec \phi_{\mat W}(\rvec x), \rvec s],
\end{equation}
where $\mat W$ parameterizes $\vec \phi$ and lies in real or complex coordinate spaces; note that both $\cal H$ and $\cal K$ can be dense in $\cal B$. 
Under the MSE cost function, \eqref{eq:opt-esti-dist-robust-cost-h-approx} is particularized in \eqref{eq:BF-dist-robust} for linear function spaces, in \eqref{eq:opt-esti-dist-robust-RKHS} for reproducing kernel Hilbert spaces, and in \eqref{eq:opt-esti-dist-robust-NN} for neural network function spaces, which build this article in a technically tractable manner.

The distributionally robust receive combining problem \eqref{eq:opt-esti-dist-robust-cost-h} under generic cost functions $h$ and generic feasible regions of $\vec \phi$ can be technically challenging. Even for the simplified problem \eqref{eq:opt-esti-dist-robust-cost-h-approx}, the solution method can be quite complex, and closed-form solutions cannot be generally guaranteed; see, e.g., \cite{rahimian2022frameworks,kuhn2024distributionally}. The complication further arises when the distributional uncertainty sets $\cal U_{\rvec x, \rvec s}$ for $\P_{\rvec x, \rvec s}$ are complicated; see, e.g., \cite{kuhn2019wasserstein}. Therefore, this article serves as the starting point of distributionally robust receive combining, in which \bfit{closed-form solutions} are largely ensured by leveraging
\begin{enumerate}[\hspace{1em}F1)]
    \item the MSE cost function as in \eqref{eq:opt-esti} and \eqref{eq:opt-BF};

    \item the linear function spaces as in \eqref{eq:BF-dist-robust} and reproducing kernel Hilbert spaces as in \eqref{eq:opt-esti-dist-robust-RKHS};

    \item the second-moment-based uncertainty sets in Definitions \ref{def:uncertainty-set-additive-moment}, \ref{def:uncertainty-set-diag-loading}, \ref{def:generalized-uncertainty-set-diag-loading}, and \ref{def:uncertainty-set-multiplicative-moment}; see also Corollary \ref{cor:solutions-under-moments}, Claim \ref{claim:opt-estimator-RKHS}, and Example \ref{exam:opt-estimator-RKHS}.
\end{enumerate}

Note that even under the features F1) and F2), the closed-form solutions cannot be guaranteed. For example, if Wasserstein or F-norm uncertainty sets are used, the associated distributionally robust receive combining problems can be computationally heavy; see \eqref{eq:BF-dist-robust-max-wasserstein} and \eqref{eq:BF-dist-robust-max-F-norm} as well as Propositions \ref{prop:wasserstein-R-x} and \ref{prop:F-norm-R-x}. However, for emerging high-performance computing devices, the computational burden may be no longer an issue in the future. Hence, advanced distributionally robust receive combining formulations based on \eqref{eq:opt-esti-dist-robust-cost-h} and \eqref{eq:opt-esti-dist-robust-cost-h-approx} are still attractive for future-generation communication systems. This article seeks to provide a foundation for this direction.

\section{Proof of Lemma \ref{lem:lin-BF-dist-robust-dual}}\label{append:dual}
\begin{proof}
The objective function of Problem \eqref{eq:BF-dist-robust} equals to
\begin{equation}\label{BF-dist-robust-linobj}
\left\langle
        \left[
        \begin{array}{cc}
           \mat W^\H \mat W  &  -\mat W^\H \\
           -\mat W  &  \mat I_M
        \end{array}
        \right]
        ,~
        \left[
        \begin{array}{cc}
           \mat R_x  &  \mat R_{xs} \\
           \mat R^\H_{xs}  &  \mat R_s
        \end{array}
        \right]
\right\rangle,
\end{equation}
where $\langle \mat A, \mat B\rangle \defeq \Tr \mat A^\H \mat B$ for two matrices $\mat A$ and $\mat B$. Therefore, the objective function of \eqref{eq:BF-dist-robust} is convex in $\mat W$ and linear (thus concave) in the matrix variable $\mat R$. Hence, due to Sion's minimax theorem \cite[Corollary~3.3]{sion1958general}, Problem \eqref{eq:BF-dist-robust} is equivalent to 
\begin{equation}\label{eq:BF-dist-robust-dual}
    \begin{array}{cl}
        \displaystyle \max_{\mat R} \min_{\mat W} &\Tr \big[\mat W \mat R_{x} \mat W^\H - \mat W \mat R_{xs} - \mat R^\H_{xs} \mat W^\H + \mat R_{s}\big] \\
        \st & d_0(\mat R,~\math R) \le \epsilon_0, \\

        & \mat R \succeq \mat 0.
    \end{array}
\end{equation}
Note that the feasible region of $\mat R$ is compact convex, and that of $\mat W$ (i.e., $\C^{M \times N}$) is convex.

For every given $\mat R$, the inner minimization sub-problem of \eqref{eq:BF-dist-robust-dual} is solved by the Wiener beamformer $\mat W^\star_{\text{Wiener}} = \mat R^{\H}_{xs}\mat R^{-1}_{x}$, which transforms \eqref{eq:BF-dist-robust-dual} to \eqref{eq:BF-dist-robust-max}. This completes the proof.
\stp
\end{proof}

\section{Proof of Theorem \ref{thm:f-increasing-R-x}}\label{append:f-increasing-R-x}
\begin{proof}
Consider the following optimization problem
\begin{equation}\label{eq:f1-opt}
    \begin{array}{cl}
        \displaystyle \max_{\mat R} &\Tr \big[ -\mat R^{\H}_{xs}\mat R^{-1}_{x} \mat R_{xs} + \mat R_{s}\big] \\
        \st & \mat R \succeq \mat R_2, \\
        & \mat R_x \succ \mat 0,
    \end{array}
\end{equation}
which, due to Lemma \ref{lem:lin-BF-dist-robust-dual}, is equivalent [in the sense of the same optimal objective value and maximizer(s) $\mat R^\star$] to
\begin{equation}\label{eq:f1-transform}
    \begin{array}{cl}
        \displaystyle \min_{\mat W} \max_{\mat R} & 
        \left\langle
        \left[
        \begin{array}{cc}
           \mat W^\H \mat W  &  -\mat W^\H \\
           -\mat W  &  \mat I_M
        \end{array}
        \right]
        ,~
        \left[
        \begin{array}{cc}
           \mat R_x  &  \mat R_{xs} \\
           \mat R^\H_{xs}  &  \mat R_s
        \end{array}
        \right]
        \right\rangle \\
        \st & \mat R \succeq \mat R_2, \\
        & \mat R_x \succ \mat 0.
    \end{array}
\end{equation}
Note that 
$
    \left[
        \begin{array}{cc}
           \mat W^\H \mat W  &  -\mat W^\H \\
           -\mat W  &  \mat I_M
        \end{array}
    \right] \succeq \mat 0,
$ 
because for all $\vec x \in \C^{N}$ and $\vec y \in \C^{M}$, we have
\[
[\vec x^\H,~\vec y^\H]
\left[
        \begin{array}{cc}
           \mat W^\H \mat W  &  -\mat W^\H \\
           -\mat W  &  \mat I_M
        \end{array}
\right]
\left[
        \begin{array}{c}
           \vec x \\
           \vec y
        \end{array}
\right] = \|\mat W \vec x - \vec y\|^2_2 
\ge 0.
\]
Therefore, for every given $\mat W$, the objective function of \eqref{eq:f1-transform} is increasing in $\mat R$. As a result, the objective value of \eqref{eq:f1-opt} is lower-bounded at $\mat R_2$: To be specific, $\forall \mat R \succeq \mat R_2$, we have
\[
\Tr \big[ -\mat R^{\H}_{xs}\mat R^{-1}_{x} \mat R_{xs} + \mat R_{s}\big] \\
\ge \Tr \big[ -\mat R^{\H}_{2,xs}\mat R^{-1}_{2,x} \mat R_{2,xs} + \mat R_{2,s}\big],
\]
i.e., $f_1(\mat R) \ge f_1(\mat R_2)$, which proves the first part.

On the other hand, if $\mat R_{1, x} \succeq \mat R_{2, x} \succ \mat 0$, we have $\mat R^{-1}_{2, x} \succeq \mat R^{-1}_{1, x}$. As a result,
$
    f_2(\mat R_{1, x}) - f_2(\mat R_{2, x}) = \tr{\mat R^\H_{xs} (\mat R^{-1}_{2, x} - \mat R^{-1}_{1, x}) \mat R_{xs}} \ge 0
$, completing the proof.
 \stp
\end{proof}

\section{Proof of Proposition \ref{prop:RHKS-reformulation}}\label{append:BF-dist-robust-z}
\begin{proof}
Letting $\rvecul z \defeq \vec \varphi(\rvecul x)$, \eqref{eq:opt-esti-dist-robust-RKHS} can be rewritten as
\begin{equation}\label{eq:opt-esti-dist-robust-RKHS-z}
    \min_{\mat W \in \R^{2M \times L}} \max_{\P_{\rvecul z, \rvecul s} \in \cal U_{\rvecul z, \rvecul s}} \Tr \E_{\rvecul z, \rvecul s}[\mat W \rvecul z - \rvecul s][\mat W \rvecul z - \rvecul s]^\T.
\end{equation}
Tantamount to the distributionally robust beamforming problem \eqref{eq:BF-dist-robust}, Problem \eqref{eq:opt-esti-dist-robust-RKHS-z} reduces to \eqref{eq:BF-dist-robust-z} 
where 
\[
   \math {R}_{\ul z} \defeq \frac{1}{L} \sum^L_{i = 1} \vecul z_i \vecul z^\T_i = \frac{1}{L} \sum^L_{i = 1} \vec \varphi(\vecul x_i) \vec \varphi^\T(\vecul x_i) = \frac{1}{L} \mat K^2,
\]
\[
   \math {R}_{\ul{zs}} \defeq \frac{1}{L} \sum^L_{i = 1} \vecul z_i \vecul s^\T_i = \frac{1}{L} \sum^L_{i = 1} \vec \varphi(\vecul x_i) \cdot \vecul s^\T_i = \frac{1}{L} \mat K \matul S^\T,
\]
\[
   \math {R}_{\ul s} \defeq \frac{1}{L} \sum^L_{i = 1} \vecul s_i \vecul s^\T_i = \frac{1}{L} \sum^L_{i = 1}\vecul s_i \cdot \vecul s^\T_i = \frac{1}{L} \matul S \matul S^\T,
\]
and
\[
    \mat K \defeq [\vec \varphi(\vecul x_1), \vec \varphi(\vecul x_2), \ldots, \vec \varphi(\vecul x_L)] \in \R^{L \times L}.
\]
The rest claims are due to Lemma \ref{lem:lin-BF-dist-robust-dual}; NB: $\mat K$ is invertible.
\stp
\end{proof}

\bibliographystyle{IEEEtran}
\bibliography{References}

\begin{thebibliography}{10}
\providecommand{\url}[1]{#1}
\csname url@samestyle\endcsname
\providecommand{\newblock}{\relax}
\providecommand{\bibinfo}[2]{#2}
\providecommand{\BIBentrySTDinterwordspacing}{\spaceskip=0pt\relax}
\providecommand{\BIBentryALTinterwordstretchfactor}{4}
\providecommand{\BIBentryALTinterwordspacing}{\spaceskip=\fontdimen2\font plus
\BIBentryALTinterwordstretchfactor\fontdimen3\font minus
  \fontdimen4\font\relax}
\providecommand{\BIBforeignlanguage}[2]{{%
\expandafter\ifx\csname l@#1\endcsname\relax
\typeout{** WARNING: IEEEtran.bst: No hyphenation pattern has been}%
\typeout{** loaded for the language `#1'. Using the pattern for}%
\typeout{** the default language instead.}%
\else
\language=\csname l@#1\endcsname
\fi
#2}}
\providecommand{\BIBdecl}{\relax}
\BIBdecl

\bibitem{lo1991nonlinear}
T.~Lo, H.~Leung, and J.~Litva, ``Nonlinear beamforming,'' \emph{Electronics
  Letters}, vol.~4, no.~27, pp. 350--352, 1991.

\bibitem{yang2015fifty}
S.~Yang and L.~Hanzo, ``Fifty years of {MIMO} detection: The road to
  large-scale {MIMOs},'' \emph{IEEE Commun. Surveys Tuts.}, vol.~17, no.~4, pp.
  1941--1988, 2015.

\bibitem{elbir2023twenty}
A.~M. Elbir, K.~V. Mishra, S.~A. Vorobyov, and R.~W. Heath, ``Twenty-five years
  of advances in beamforming: From convex and nonconvex optimization to
  learning techniques,'' \emph{IEEE Signal Processing Mag.}, vol.~40, no.~4,
  pp. 118--131, 2023.

\bibitem{chen2008adaptive}
S.~Chen, S.~Tan, L.~Xu, and L.~Hanzo, ``Adaptive minimum error-rate filtering
  design: A review,'' \emph{Signal Processing}, vol.~88, no.~7, pp. 1671--1697,
  2008.

\bibitem{chen2008symmetric}
S.~Chen, A.~Wolfgang, C.~J. Harris, and L.~Hanzo, ``Symmetric {RBF} classifier
  for nonlinear detection in multiple-antenna-aided systems,'' \emph{IEEE
  Trans. Neural Networks}, vol.~19, no.~5, pp. 737--745, 2008.

\bibitem{navia2010approximate}
A.~Navia-Vazquez, M.~Martinez-Ramon, L.~E. Garcia-Munoz, and C.~G.
  Christodoulou, ``Approximate kernel orthogonalization for antenna array
  processing,'' \emph{IEEE Trans. Antennas Propagat.}, vol.~58, no.~12, pp.
  3942--3950, 2010.

\bibitem{neinavaie2020lossless}
M.~Neinavaie, M.~Derakhtian, and S.~A. Vorobyov, ``Lossless dimension reduction
  for integer least squares with application to sphere decoding,'' \emph{IEEE
  Trans. Signal Processing}, vol.~68, pp. 6547--6561, 2020.

\bibitem{liao2023deep}
J.~Liao, J.~Zhao, F.~Gao, and G.~Y. Li, ``Deep learning aided low complex
  breadth-first tree search for {MIMO} detection,'' \emph{IEEE Trans. Wireless
  Commun.}, 2023.

\bibitem{awan2023robust}
D.~A. Awan, R.~L. Cavalcante, M.~Yukawa, and S.~Stanczak, ``Robust online
  multiuser detection: A hybrid model-data driven approach,'' \emph{IEEE Trans.
  Signal Processing}, 2023.

\bibitem{ye2017power}
H.~Ye, G.~Y. Li, and B.-H. Juang, ``Power of deep learning for channel
  estimation and signal detection in {OFDM} systems,'' \emph{IEEE Wireless
  Commun. Lett.}, vol.~7, no.~1, pp. 114--117, 2017.

\bibitem{he2020model}
H.~He, C.-K. Wen, S.~Jin, and G.~Y. Li, ``Model-driven deep learning for {MIMO}
  detection,'' \emph{IEEE Trans. Signal Processing}, vol.~68, pp. 1702--1715,
  2020.

\bibitem{van2022transfer}
N.~Van~Huynh and G.~Y. Li, ``Transfer learning for signal detection in wireless
  networks,'' \emph{IEEE Wireless Commun. Lett.}, vol.~11, no.~11, pp.
  2325--2329, 2022.

\bibitem{li2003robust}
J.~Li, P.~Stoica, and Z.~Wang, ``On robust {Capon} beamforming and diagonal
  loading,'' \emph{IEEE Trans. Signal Processing}, vol.~51, no.~7, pp.
  1702--1715, 2003.

\bibitem{lorenz2005robust}
R.~G. Lorenz and S.~P. Boyd, ``Robust minimum variance beamforming,''
  \emph{IEEE Trans. Signal Processing}, vol.~53, no.~5, pp. 1684--1696, 2005.

\bibitem{zhang2015robust}
X.~Zhang, Y.~Li, N.~Ge, and J.~Lu, ``Robust minimum variance beamforming under
  distributional uncertainty,'' in \emph{2015 IEEE International Conference on
  Acoustics, Speech and Signal Processing (ICASSP)}.\hskip 1em plus 0.5em minus
  0.4em\relax IEEE, 2015, pp. 2514--2518.

\bibitem{li2017distributionally}
B.~Li, Y.~Rong, J.~Sun, and K.~L. Teo, ``A distributionally robust minimum
  variance beamformer design,'' \emph{IEEE Signal Processing Lett.}, vol.~25,
  no.~1, pp. 105--109, 2017.

\bibitem{huang2022robust}
Y.~Huang, W.~Yang, and S.~A. Vorobyov, ``Robust adaptive beamforming maximizing
  the worst-case {SINR} over distributional uncertainty sets for random inc
  matrix and signal steering vector,'' in \emph{ICASSP 2022-2022 IEEE
  International Conference on Acoustics, Speech and Signal Processing
  (ICASSP)}.\hskip 1em plus 0.5em minus 0.4em\relax IEEE, 2022, pp. 4918--4922.

\bibitem{huang2023robust}
Y.~Huang, H.~Fu, S.~A. Vorobyov, and Z.-Q. Luo, ``Robust adaptive beamforming
  via worst-case {SINR} maximization with nonconvex uncertainty sets,''
  \emph{IEEE Trans. Signal Processing}, vol.~71, pp. 218--232, 2023.

\bibitem{cox1987robust}
H.~Cox, R.~Zeskind, and M.~Owen, ``Robust adaptive beamforming,'' \emph{IEEE
  Trans. Acoust., Speech, Signal Processing}, vol.~35, no.~10, pp. 1365--1376,
  1987.

\bibitem{harmanci2000relationships}
K.~Harmanci, J.~Tabrikian, and J.~L. Krolik, ``Relationships between adaptive
  minimum variance beamforming and optimal source localization,'' \emph{IEEE
  Trans. Signal Processing}, vol.~48, no.~1, pp. 1--12, 2000.

\bibitem{liu2018toward}
F.~Liu, L.~Zhou, C.~Masouros, A.~Li, W.~Luo, and A.~Petropulu, ``Toward
  dual-functional radar-communication systems: Optimal waveform design,''
  \emph{IEEE Trans. Signal Processing}, vol.~66, no.~16, pp. 4264--4279, 2018.

\bibitem{zhang2021overview}
J.~A. Zhang, F.~Liu, C.~Masouros, R.~W. Heath, Z.~Feng, L.~Zheng, and
  A.~Petropulu, ``An overview of signal processing techniques for joint
  communication and radar sensing,'' \emph{IEEE J. Select. Topics Signal
  Processing}, vol.~15, no.~6, pp. 1295--1315, 2021.

\bibitem{xiong2023fundamental}
Y.~Xiong, F.~Liu, Y.~Cui, W.~Yuan, T.~X. Han, and G.~Caire, ``On the
  fundamental tradeoff of integrated sensing and communications under
  {Gaussian} channels,'' \emph{IEEE Trans. Inform. Theory}, 2023.

\bibitem{murphy2012machine}
K.~P. Murphy, \emph{Machine Learning: A Probabilistic Perspective}.\hskip 1em
  plus 0.5em minus 0.4em\relax MIT Press, 2012.

\bibitem{bishop2006pattern}
C.~M. Bishop and N.~M. Nasrabadi, \emph{Pattern Recognition and Machine
  Learning}.\hskip 1em plus 0.5em minus 0.4em\relax Springer, 2006, vol.~4,
  no.~4.

\bibitem{li2023towards}
G.~Li and J.~Ding, ``Towards understanding variation-constrained deep neural
  networks,'' \emph{IEEE Trans. Signal Processing}, vol.~71, pp. 631--640,
  2023.

\bibitem{shafieezadeh2019regularization}
S.~Shafieezadeh-Abadeh, D.~Kuhn, and P.~M. Esfahani, ``Regularization via mass
  transportation,'' \emph{Journal of Machine Learning Research}, vol.~20, no.
  103, pp. 1--68, 2019.

\bibitem{staib2019distributionally}
M.~Staib and S.~Jegelka, ``Distributionally robust optimization and
  generalization in kernel methods,'' \emph{Advances in Neural Information
  Processing Systems}, vol.~32, 2019.

\bibitem{delage2010distributionally}
E.~Delage and Y.~Ye, ``Distributionally robust optimization under moment
  uncertainty with application to data-driven problems,'' \emph{Operations
  Research}, vol.~58, no.~3, pp. 595--612, 2010.

\bibitem{wang2023distributionally}
S.~Wang, ``Distributionally robust state estimation for jump linear systems,''
  \emph{IEEE Trans. Signal Processing}, 2023.

\bibitem{li2022tikhonov}
J.~Li, S.~Lin, J.~Blanchet, and V.~A. Nguyen, ``Tikhonov regularization is
  optimal transport robust under martingale constraints,'' \emph{Advances in
  Neural Information Processing Systems}, vol.~35, pp. 17\,677--17\,689, 2022.

\bibitem{kuhn2019wasserstein}
D.~Kuhn, P.~M. Esfahani, V.~A. Nguyen, and S.~Shafieezadeh-Abadeh,
  ``Wasserstein distributionally robust optimization: Theory and applications
  in machine learning,'' in \emph{Operations Research \& Management Science in
  the Age of Analytics}.\hskip 1em plus 0.5em minus 0.4em\relax Informs, 2019,
  pp. 130--166.

\bibitem{blanchet2019robust}
J.~Blanchet, Y.~Kang, and K.~Murthy, ``Robust {Wasserstein} profile inference
  and applications to machine learning,'' \emph{Journal of Applied
  Probability}, vol.~56, no.~3, pp. 830--857, 2019.

\bibitem{shorten2019survey}
C.~Shorten and T.~M. Khoshgoftaar, ``A survey on image data augmentation for
  deep learning,'' \emph{Journal of Big Data}, vol.~6, 2019.

\bibitem{saon2019sequence}
G.~Saon, Z.~T{\"u}ske, K.~Audhkhasi, and B.~Kingsbury, ``Sequence noise
  injected training for end-to-end speech recognition,'' in \emph{ICASSP
  2019-2019 IEEE International Conference on Acoustics, Speech and Signal
  Processing (ICASSP)}.\hskip 1em plus 0.5em minus 0.4em\relax IEEE, 2019, pp.
  6261--6265.

\bibitem{vu2015understanding}
K.~Vu, J.~C. Snyder, L.~Li, M.~Rupp, B.~F. Chen, T.~Khelif, K.-R. M{\"u}ller,
  and K.~Burke, ``Understanding kernel ridge regression: Common behaviors from
  simple functions to density functionals,'' \emph{International Journal of
  Quantum Chemistry}, vol. 115, no.~16, pp. 1115--1128, 2015.

\bibitem{wang1998robust}
X.~Wang and H.~V. Poor, ``Robust adaptive array for wireless communications,''
  \emph{IEEE J. Select. Areas in Commun.}, vol.~16, no.~8, pp. 1352--1366,
  1998.

\bibitem{katkovnik2006performance}
V.~Katkovnik, M.-S. Lee, and Y.-H. Kim, ``Performance study of the minimax
  robust phased array for wireless communications,'' \emph{IEEE Trans. Wireless
  Commun.}, vol.~54, no.~4, pp. 608--613, 2006.

\bibitem{rahimian2022frameworks}
H.~Rahimian and S.~Mehrotra, ``Frameworks and results in distributionally
  robust optimization,'' \emph{Open Journal of Mathematical Optimization},
  vol.~3, pp. 1--85, 2022.

\bibitem{kuhn2024distributionally}
D.~Kuhn, S.~Shafiee, and W.~Wiesemann, ``Distributionally robust
  optimization,'' \emph{Acta Numerica}, 2024.

\bibitem{sion1958general}
M.~Sion, ``{On general minimax theorems.}'' \emph{Pacific Journal of
  Mathematics}, vol.~8, no.~1, pp. 171 -- 176, 1958.

\end{thebibliography}

\begin{IEEEbiography}[{\includegraphics[width=1in,height=1.25in,clip,keepaspectratio]{wsx}}]{Shixiong Wang} (Member, IEEE) received the B.Eng. degree in detection, guidance, and control technology, and the M.Eng. degree in systems and control engineering from the School of Electronics and Information, Northwestern Polytechnical University, China, in 2016 and 2018, respectively. He received his Ph.D. degree from the Department of Industrial Systems Engineering and Management, National University of Singapore, Singapore, in 2022. 

He is currently a Postdoctoral Research Associate with the Intelligent Transmission and Processing Laboratory, Imperial College London, London, United Kingdom, from May 2023. He was a Postdoctoral Research Fellow with the Institute of Data Science, National University of Singapore, Singapore, from March 2022 to March 2023.

His research interest includes statistics and optimization theories with applications in signal processing (especially optimal estimation theory), machine learning (especially generalization error theory), and control technology.
\end{IEEEbiography}

\begin{IEEEbiography}
[{\includegraphics[width=1in,height=1.2in,clip,keepaspectratio]{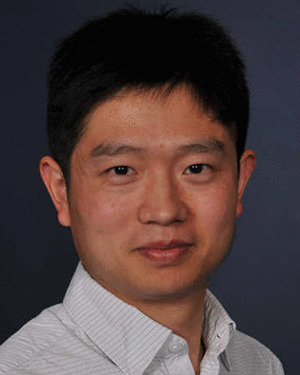}}] 
{Wei Dai} (Member, IEEE) received the Ph.D. degree from the University of Colorado Boulder, Boulder, Colorado, in 2007. He is currently a Senior Lecturer (Associate Professor) in the Department of Electrical and Electronic Engineering, Imperial College London, London, UK. From 2007 to 2011, he was a Postdoctoral Research Associate with the University of Illinois Urbana-Champaign, Champaign, IL, USA. His research interests include electromagnetic sensing, biomedical imaging, wireless communications, and information theory.
\end{IEEEbiography}

\begin{IEEEbiography}
[{\includegraphics[width=1in,height=1.2in,clip,keepaspectratio]{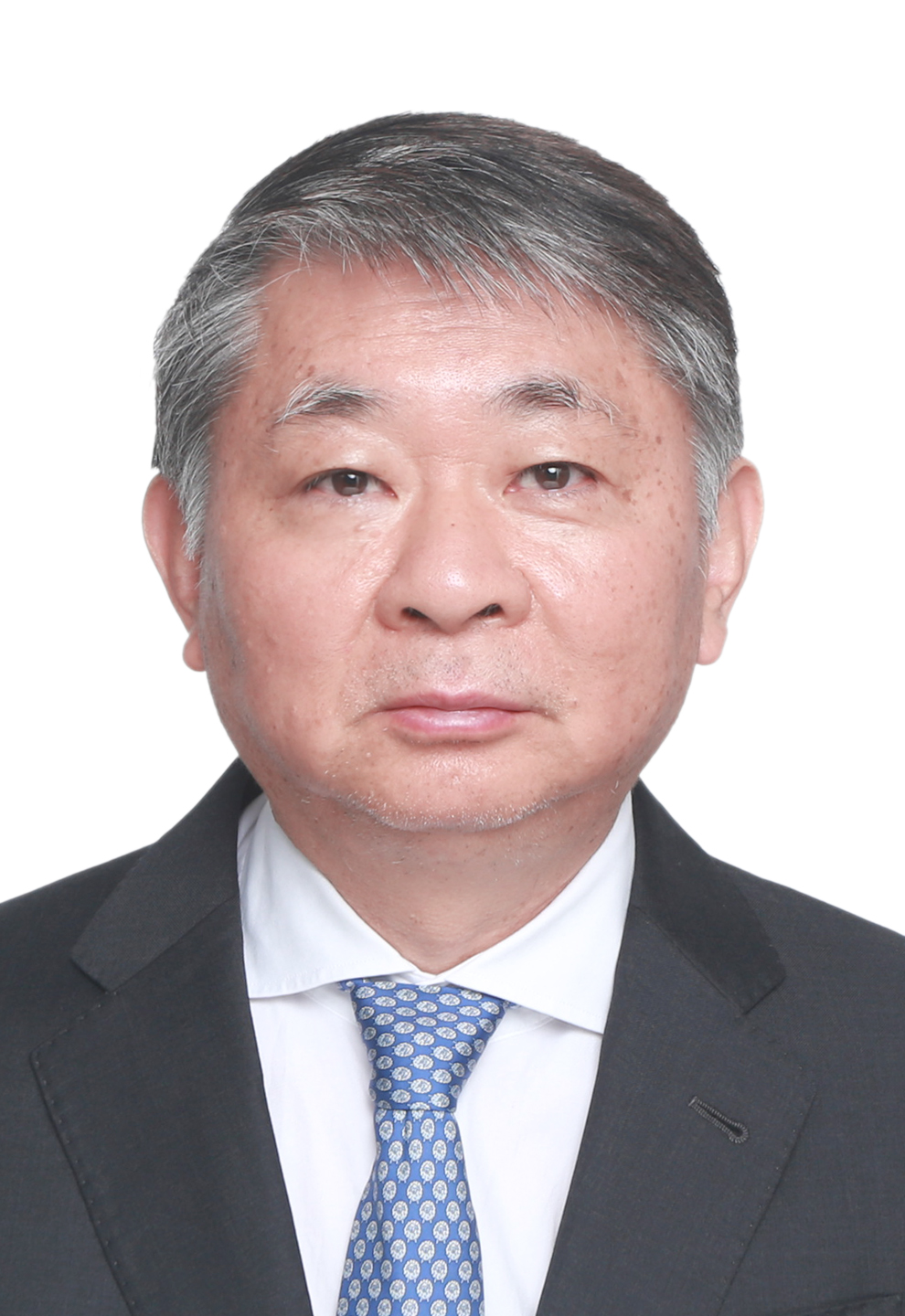}}] 
{Geoffrey Ye Li} is currently a Chair Professor at Imperial College London, UK. Before joining Imperial in 2020, he was a Professor at Georgia Institute of Technology for 20 years and a Principal Technical Staff Member with AT\&T Labs – Research (previous Bell Labs) for five years. He made fundamental contributions to orthogonal frequency division multiplexing (OFDM) for wireless communications, established a framework on resource cooperation in wireless networks, and introduced deep learning to communications. In these areas, he has published over 700 journal and conference papers in addition to over 40 granted patents. His publications have been cited around 80,000 times with an H-index over 130. He has been listed as a Highly Cited Researcher by Clarivate/Web of Science almost every year.

Dr. Geoffrey Ye Li was elected to Fellow of the Royal Academic of Engineering (FREng), IEEE Fellow, and IET Fellow for his contributions to signal processing for wireless communications. He received 2024 IEEE Eric E. Sumner Award, 2019 IEEE ComSoc Edwin Howard Armstrong Achievement Award, and several other awards from IEEE Signal Processing, Vehicular Technology, and Communications Societies.
\end{IEEEbiography}


\newpage

\setcounter{page}{1}
\begin{figure*}
\centering
{\large \bf Supplementary Materials}
\end{figure*}

\section{Additional Experimental Results}\label{append:expt-scenario-1}
Complementary to experimental setups in Section \ref{sec:experiment}, we consider pure complex Gaussian channel noises. First, we suppose that the transmit antennas emit continuous-valued complex signals; without loss of generality, Gaussian signals are used in experiments. The performance evaluation measure is therefore the mean-squared error (MSE). The experimental results are shown in Fig. \ref{fig:scenario-1-1}.

\begin{figure}[!htbp]
    \centering
    \subfigure[$N$=8, SNR 10dB, $\mat R_v$ Estimated]{
        \begin{minipage}[htbp]{0.46\linewidth}\label{fig-Gaussian:a}
            \centering
            \includegraphics[height=3.4cm]{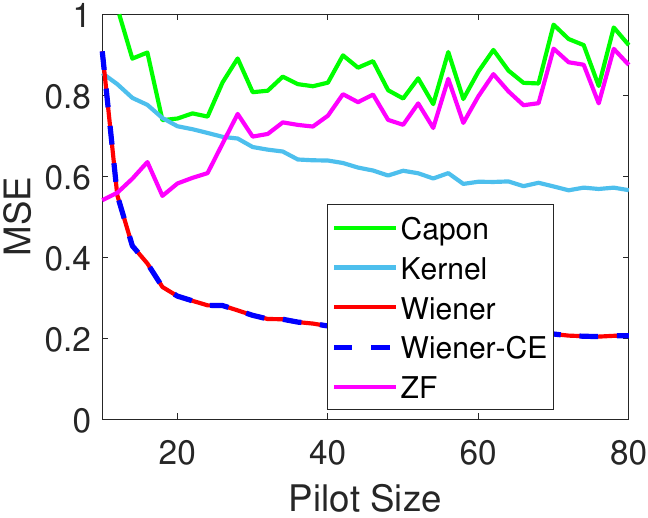}
        \end{minipage}
    }
    \subfigure[$N$=8, SNR 10dB, $\mat R_v$ Known]{
        \begin{minipage}[htbp]{0.46\linewidth}\label{fig-Gaussian:b}
            \centering
            \includegraphics[height=3.4cm]{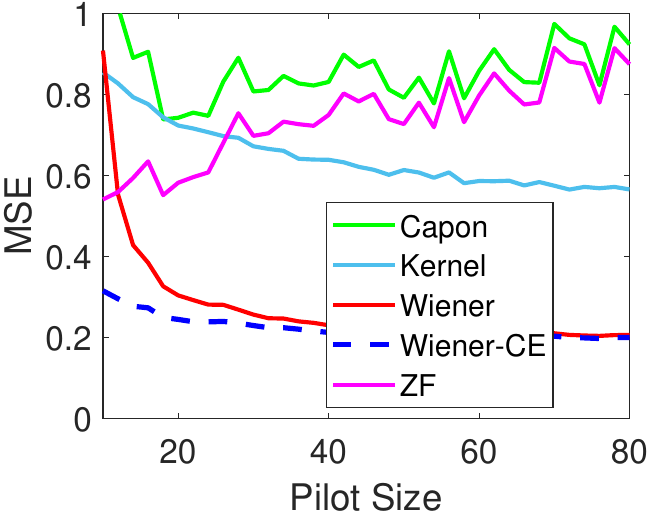}
        \end{minipage}
    }
    
    \subfigure[$N$=16, SNR 10dB, $\mat R_v$ Estimated]{
        \begin{minipage}[htbp]{0.46\linewidth}\label{fig-Gaussian:c}
            \centering
            \includegraphics[height=3.4cm]{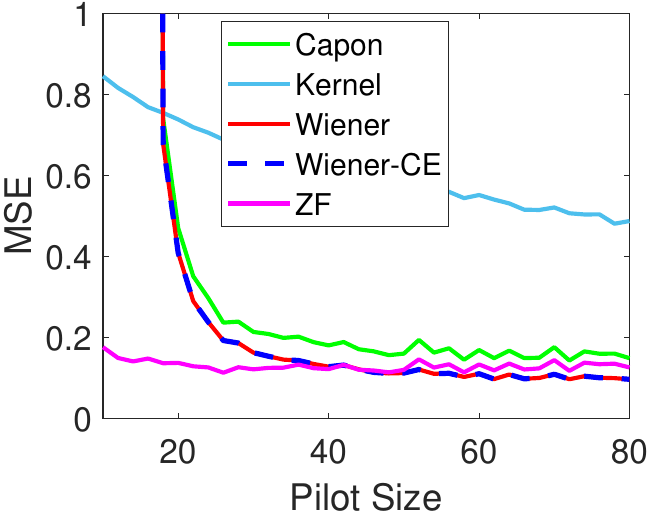}
        \end{minipage}
    }
    \subfigure[$N$=16, SNR -10dB, $\mat R_v$ Estimated]{
        \begin{minipage}[htbp]{0.46\linewidth}\label{fig-Gaussian:d}
            \centering
            \includegraphics[height=3.4cm]{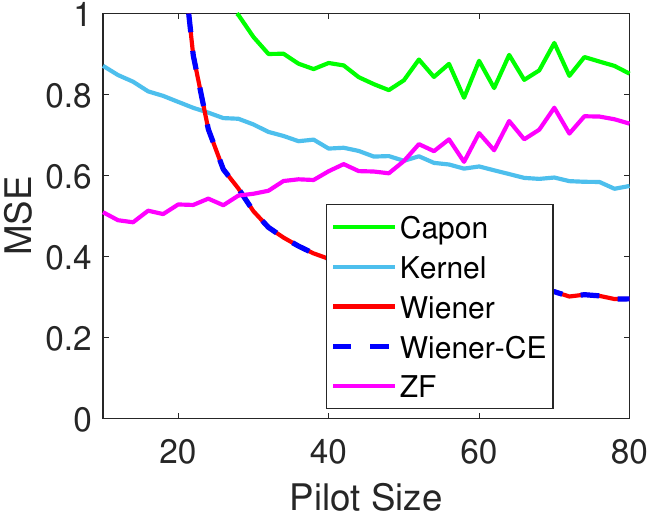}
        \end{minipage}
    }
    
    \caption{Testing MSE against training pilot sizes under different numbers of receive antennas; only non-robust beamformers including non-diagonal-loading ones are considered. The true value of $\mat R_v$ can be unknown and estimated using pilot data. The signal-to-noise ratio (SNR) is $10$dB or $-10$dB.}
    \label{fig:scenario-1-1}
\end{figure}

From Fig. \ref{fig:scenario-1-1}, the following main points can be outlined.
\begin{enumerate}
    \item For a fixed number $M$ of transmit antennas, the larger the number $N$ of receive antennas, the smaller the MSE; cf. Figs. \ref{fig-Gaussian:a} and \ref{fig-Gaussian:c}. This fact is well-established and is due to the benefit of antenna diversity. In addition, for fixed $N$ and $M$, the higher the SNR, the smaller the MSE; cf. Figs. \ref{fig-Gaussian:c} and \ref{fig-Gaussian:d}; this is also well believed.
    
    \item As the pilot size increases, the Wiener beamformer tends to have the best performance because the Wiener beamformer is optimal for the linear Gaussian signal model. When $\mat R_v$ is accurately known, the Wiener-CE beamformer outperforms the general Wiener beamformer (cf. Fig. \ref{fig-Gaussian:b}) because the former also exploits the information of the linear signal model in addition to the pilot data, while the latter only utilizes the pilot data. However, when $\mat R_v$ is estimated using the pilot data, the performances of the general Wiener beamformer and the Wiener-CE beamformer have no significant difference; cf. Figs. \ref{fig-Gaussian:a} and \ref{fig-Gaussian:c}. Therefore, Fig. \ref{fig:scenario-1-1} validates our claim that \textit{channel estimation is not a necessary operation in receive beamforming and estimation of wireless signals}; recall Subsection \ref{subsubsec:role-CE}.

    \item The ZF beamformer tends to be more efficient as $N$ increases; cf. Figs. \ref{fig-Gaussian:a} and \ref{fig-Gaussian:c}. However, the ZF beamformer becomes less satisfactory when the SNR decreases; cf. Figs. \ref{fig-Gaussian:c} and \ref{fig-Gaussian:d}. The Capon beamformer is also unsatisfactory when $N$ is small or the SNR is low.

    \item The kernel beamformer, as a nonlinear method, cannot outperform linear beamformers because, for a linear Gaussian signal model, the optimal beamformer is linear. From the perspective of machine learning, nonlinear methods tend to overfit the limited training samples.
\end{enumerate}

Second, we suppose that the transmit antennas emit discrete-valued symbols from a constellation that is modulated using quadrature phase-shift keying (QPSK). The performance evaluation measure is therefore the symbol error rate (SER). The experimental results are shown in Fig. \ref{fig:scenario-1-2}. We find that all the conclusive main points from Fig. \ref{fig:scenario-1-1} can be obtained from Fig. \ref{fig:scenario-1-2} as well: this validates that \textit{minimizing MSE reduces SER}. In addition, Figs. \ref{fig-QPSK:c} and \ref{fig-QPSK:d} reveal that the Wiener beamformer even slightly works better than the Wiener-CE beamformer when the pilot size is smaller than $15$ because the uncertainty in the estimated $\math R_v$, on the contrary, misleads the latter. Nevertheless, as the pilot size increases, the Wiener-CE beamformer tends to overlap the Wiener beamformer quickly. 
\begin{figure}[!htbp]
    \centering
    \subfigure[$N$=8, SNR 10dB, $\mat R_v$ Estimated]{
        \begin{minipage}[htbp]{0.46\linewidth}\label{fig-QPSK:a}
            \centering
            \includegraphics[height=3.4cm]{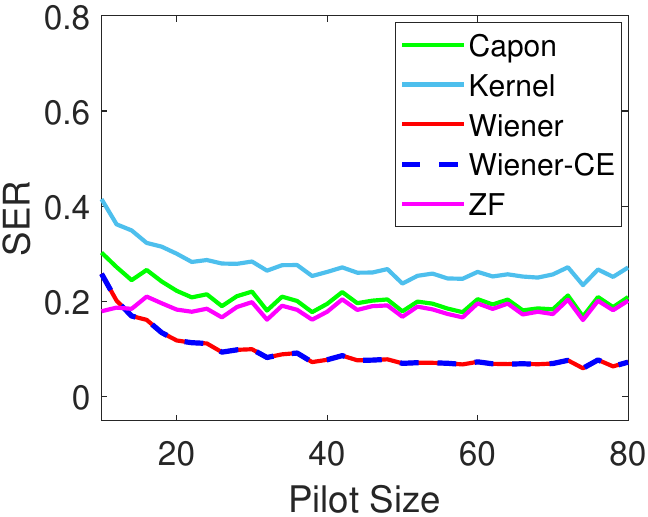}
        \end{minipage}
    }
    \subfigure[$N$=8, SNR 10dB, $\mat R_v$ Known]{
        \begin{minipage}[htbp]{0.46\linewidth}\label{fig-QPSK:b}
            \centering
            \includegraphics[height=3.4cm]{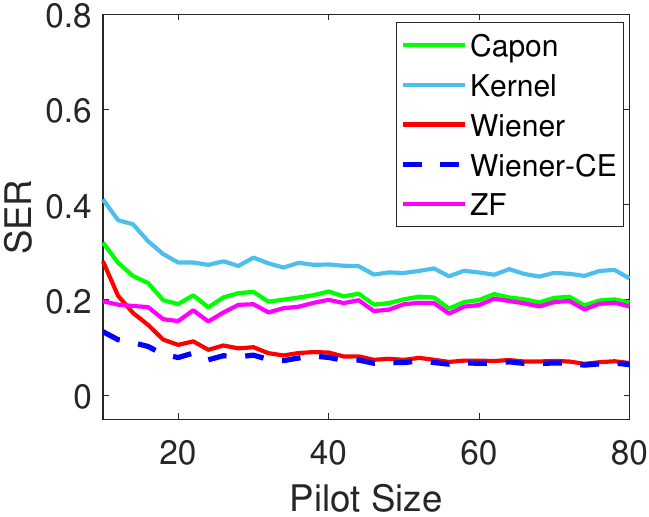}
        \end{minipage}
    }
    
    \subfigure[$N$=16, SNR 10dB, $\mat R_v$ Estimated]{
        \begin{minipage}[htbp]{0.46\linewidth}\label{fig-QPSK:c}
            \centering
            \includegraphics[height=3.4cm]{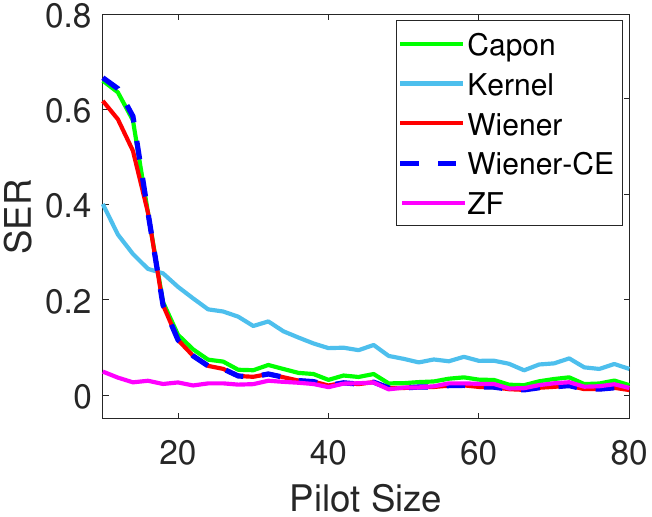}
        \end{minipage}
    }
    \subfigure[$N$=16, SNR -10dB, $\mat R_v$ Estimated]{
        \begin{minipage}[htbp]{0.46\linewidth}\label{fig-QPSK:d}
            \centering
            \includegraphics[height=3.4cm]{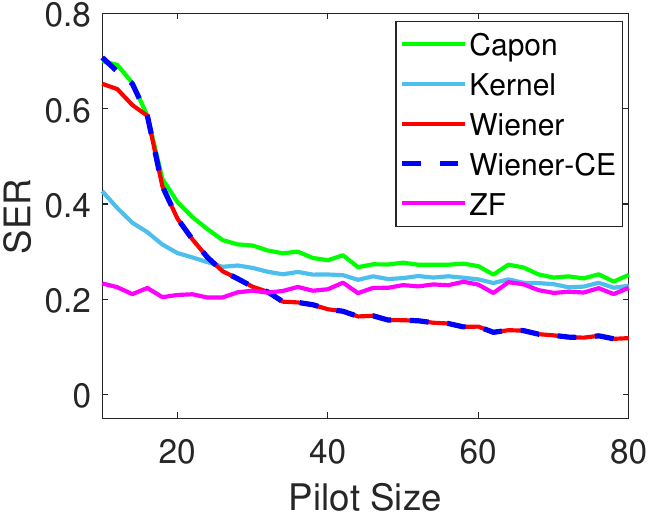}
        \end{minipage}
    }
    
    \caption{Testing SER against training pilot sizes under different numbers of receive antennas; only non-robust beamformers including non-diagonal-loading ones are considered. The true value of $\mat R_v$ can be unknown and estimated using pilot data. The signal-to-noise ratio (SNR) is $10$dB or $-10$dB.}
    \label{fig:scenario-1-2}
\end{figure}

\end{document}